% pdflatex main
% bibtex8 biber
% pdflatex main

% lualatex main
% biber main
% lualatex main

%\RequirePackage{luatex85}

\documentclass[11pt,letterpaper]{article}

\usepackage[utf8]{inputenc}
\usepackage[T1]{fontenc}

\usepackage{amsmath}
\usepackage{amssymb}
\usepackage{amsthm}
\usepackage{mathtools}
\usepackage{thmtools}
\usepackage{isomath}
\usepackage{dsfont}

\usepackage{enumerate}

\usepackage{libertine}
\usepackage[libertine]{newtxmath} % must load after ams packages
\usepackage[scaled=0.96]{zi4} % use Inconsolata as the typewriter font

\usepackage[DIV=11]{typearea} % large value = less margin, recommended values: 10pt: 8, 11pt: 10, 12pt: 12

\usepackage{microtype}

\usepackage[framemethod=TikZ]{mdframed}
\mdfsetup{%
    backgroundcolor=gray!10,
    leftmargin=10pt,
    rightmargin=10pt,
    middlelinewidth=0pt,
    roundcorner=5pt,
}

\usepackage[linesnumbered,vlined,ruled]{algorithm2e}

\usepackage{ifthen}
\usepackage{xcolor}
\usepackage{comment}

% Margin comments
%\usepackage[textwidth=2cm,textsize=footnotesize]{todonotes}
% \usepackage[disable]{todonotes}
\usepackage{todonotes}
% Line numbers
\usepackage{lineno}

\usepackage[
	%style=alphabetic,
    maxalphanames=4,
    minalphanames=4,
    maxnames=20,
    minnames=20,
	backref=true,
	doi=true,
	url=true,
	backend=bibtex8,
%	backend=biber
]{biblatex}

% Load hyperref as last package
\usepackage[ocgcolorlinks]{hyperref} % Option ocgcolorlinks makes links only colored on screen and not on printouts

% smarter tables
\usepackage{tabularx}

% Custom commands
% Multiple "thanks" that refer to the same text
% http://tex.stackexchange.com/a/4171

% Allow page breaks in the middle of multi-line equations (align)
% Argument 1-4, where 1 is least permissive
\allowdisplaybreaks[1]

% % Bold math in bold text
% % http://tex.stackexchange.com/a/124311
% \makeatletter
% \g@addto@macro\bfseries{\boldmath}
% \makeatother

% % Using the new macro above, we also have to "undo" boldmath for mdseries. Some macros don't use the clean \mdseries to "unbold" text (e.g., amsmath's plain style uses \upshape); thus, it is necessary to modify other macros as well.
% \makeatletter
% \g@addto@macro\mdseries{\unboldmath}
% \g@addto@macro\normalfont{\unboldmath}
% \g@addto@macro\rmfamily{\unboldmath}
% \g@addto@macro\upshape{\unboldmath}
% \makeatother

% % A command for "manual" citations which can be useful for providing citations in the abstract of a paper.
% % \citem[Text]{key} will insert the citation [Text] and refer to the specified key in the biblography
% \DeclareCiteCommand{\citem}
%     {}
%     {\mkbibbrackets{\bibhyperref{\usebibmacro{postnote}}}}
%     {\multicitedelim}
%     {}

% % Change "plus" symbol for multi-author citations in alphabetic style
% % http://tex.stackexchange.com/a/130031
% \renewcommand*{\labelalphaothers}{\textsuperscript{+}}

% % Change separator for multiple citations to a comma (instead of semicolon)
% \renewcommand*{\multicitedelim}{\addcomma\space}

% % Remove field "note" from full citations
% \AtEveryCitekey{\clearfield{note}}

% Enable line breaks for long links using ocgcolorlinks
% http://tex.stackexchange.com/a/47309
\makeatletter
\AtBeginDocument{%
    \newlength{\temp@x}%
    \newlength{\temp@y}%
    \newlength{\temp@w}%
    \newlength{\temp@h}%
    \def\my@coords#1#2#3#4{%
      \setlength{\temp@x}{#1}%
      \setlength{\temp@y}{#2}%
      \setlength{\temp@w}{#3}%
      \setlength{\temp@h}{#4}%
      \adjustlengths{}%
      \my@pdfliteral{\strip@pt\temp@x\space\strip@pt\temp@y\space\strip@pt\temp@w\space\strip@pt\temp@h\space re}}%
    \ifpdf
      \typeout{In PDF mode}%
      \def\my@pdfliteral#1{\pdfliteral page{#1}}% I don't know why % this command...
      \def\adjustlengths{}%
    \fi
    \ifxetex
      \def\my@pdfliteral #1{}% isn't equivalent to this one
      \def\adjustlengths{\setlength{\temp@h}{-\temp@h}\addtolength{\temp@y}{1in}\addtolength{\temp@x}{-1in}}%
    \fi%
    \def\Hy@colorlink#1{%
      \begingroup
        \ifHy@ocgcolorlinks
          \def\Hy@ocgcolor{#1}%
          \my@pdfliteral{q}%
          \my@pdfliteral{7 Tr}% Set text mode to clipping-only
        \else
          \HyColor@UseColor#1%
        \fi
    }%
    \def\Hy@endcolorlink{%
      \ifHy@ocgcolorlinks%
        \my@pdfliteral{/OC/OCPrint BDC}%
        \my@coords{0pt}{0pt}{\pdfpagewidth}{\pdfpageheight}%
        \my@pdfliteral{F}% Fill clipping path (the url's text) with
                           % current color
        %
        \my@pdfliteral{EMC/OC/OCView BDC}%
        \begingroup%
          \expandafter\HyColor@UseColor\Hy@ocgcolor%
          \my@coords{0pt}{0pt}{\pdfpagewidth}{\pdfpageheight}%
          \my@pdfliteral{F}% Fill clipping path (the url's text)
                             % with \Hy@ocgcolor
        \endgroup%
        \my@pdfliteral{EMC}%
        \my@pdfliteral{0 Tr}% Reset text to normal mode
        \my@pdfliteral{Q}%
      \fi
      \endgroup
    }%
}
\makeatother

% Margin comments

% Line numbers
% \linenumbers

% Bibliography
\addbibresource{references.bib}

% Hyperref options
\colorlet{DarkRed}{red!50!black}
\colorlet{DarkGreen}{green!50!black}
\colorlet{DarkBlue}{blue!50!black}

\hypersetup{
	linkcolor = DarkRed,
	citecolor = DarkGreen,
	urlcolor = DarkBlue,
	bookmarksnumbered = true,
	linktocpage = true
}

%%LOW CO2
\usepackage{lowco2}

% Algorithm2e
\DontPrintSemicolon
\SetKwProg{Procedure}{Procedure}{}{}
\SetProcNameSty{textsc}
\SetFuncSty{textsc}
\SetKw{KwAnd}{and}
\SetKw{KwDo}{do}

\SetCommentSty{mycommfont}

% Theorem environments
\declaretheorem[numberwithin=section]{theorem}
\declaretheorem[numberlike=theorem]{lemma}

\declaretheorem[numberlike=theorem]{definition}
\declaretheorem[numberlike=theorem]{claim}

\allowdisplaybreaks

% Matrices and vectors
% https://tex.stackexchange.com/a/217623
%\newcommand*{\matr}[1]{\mathbfit{#1}}

% Other math commands

\DeclareMathOperator{\Vol}{Vol}
\DeclareMathOperator{\N}{\mathbb N}

\DeclareMathOperator{\E}{\mathbb E}
\renewcommand{\d}{\partial}
\renewcommand{\P}{\mathbb P}
\newcommand{\PPPP}{k}

\usepackage{xspace}

\newcommand{\pushpull}[1]{\texttt{#1-PUSH\&PULL}\xspace}

\hyphenation{pseudo-code}

% clever references (gets Theorem, Lemma, etc. and maintains it automatically)
\usepackage[capitalize]{cleveref}
% Example for \label{very important lemma}
% Standard cross references: Lemma~\ref{very important lemma}
% Cleveref cross references: \cref{very important lemma}
\crefname{claim}{claim}{claims}

% Titlepage information
\title{Towards Constant Time Multi-Call Rumor Spreading on Small-Set Expanders}

\author{
  Emilio Cruciani\thanks{European University of Rome, Italy}
  \and
  Sebastian Forster\thanks{Department of Computer Science, University of Salzburg, Austria}
  \and
  Tijn de Vos\thanks{TU Graz, Austria}
}
% \author{}

\date{}

\hypersetup{
	pdftitle = {Towards Constant Time Multi-Call Rumor Spreading on Small-Set Expanders},
	pdfauthor = {Emilio Cruciani, Sebastian Forster, Tijn de Vos}
}

\usepackage{environ}
\NewEnviron{hide}{}
%uncomment to hide proofs
% \let\proof\hide
% \let\endproof\endhide

% Main document
\begin{document}

\begin{titlepage}
\maketitle
% \sebastian{Maybe we should write ``multicall'' instead of ``multicast''? (like in Panagiotou/Pourmiri/Sauerwald)}

\begin{abstract}
We study a multi-call variant of the classic \texttt{PUSH\&PULL} rumor spreading process where nodes can contact $k$ of their neighbors instead of a single one during both \texttt{PUSH} and \texttt{PULL} operations.
We show that rumor spreading can be made faster at the cost of an increased amount of communication between the nodes. 
As a motivating example, consider the process on a complete graph of $n$ nodes: while the standard \texttt{PUSH\&PULL} protocol takes $\Theta(\log n)$ rounds, we prove that our $k$-\texttt{PUSH\&PULL} variant completes in $\Theta(\log_{k} n)$ rounds, with high probability.

We generalize this result in an expansion-sensitive way, as has been done for the classic \texttt{PUSH\&PULL} protocol for different notions of expansion, e.g., conductance and vertex expansion.
We consider small-set vertex expanders, graphs in which every sufficiently small subset of nodes has a large neighborhood, ensuring strong local connectivity.
In particular, when the expansion parameter satisfies $\phi > 1$, these graphs have a diameter of $o(\log n)$, as opposed to other standard notions of expansion.
Since the graph's diameter is a lower bound on the number of rounds required for rumor spreading, this makes small-set expanders particularly well-suited for fast information dissemination.
We prove that $k$-\texttt{PUSH\&PULL} takes $O(\log_{\phi} n \cdot \log_{k} n)$ rounds in these expanders, with high probability. 
We complement this with a simple lower bound of $\Omega(\log_{\phi} n+ \log_{k} n)$ rounds.
% \sebastian{Can we somewhere in the abstract add a one-line explanation of what a small-set expander is?}

% To the best of our knowledge, small-set expanders with $\phi>1$ have not been studied before.
% Given their unique diameter characteristic, we believe that they can lead to further algorithmic applications.\sebastian{Can we simply remove the last two sentences of the abstract? The first one (no been studied before) seems to be false anyway (because of unique games etc.) and the second one is maybe a bit too speculative for an abstract.}

\end{abstract}
\vfill
\lowcotwo\ This is a low-co2 research paper: \lowcotwourl[\lowcotwoversion]. This research was developed, written, submitted and presented without the use of air travel.

\paragraph*{Funding.} This project has received funding from the European Research Council (ERC) under the European Union's Horizon 2020 research and innovation programme (grant agreement No 947702) and is supported by the Austrian Science Fund (FWF): P 32863-N and P 36280-N.

\thispagestyle{empty}

\newpage
\tableofcontents

\thispagestyle{empty}
\newpage

% \listoftodos
% \thispagestyle{empty}
% \newpage
\end{titlepage}

\section{Introduction}
Rumor spreading, which is loosely inspired by biological and social phenomena, is one of the most well-studied stochastic processes on graphs with a rich history in refined analyses~\cite{FriezeG85,Pittel87,FeigePRU90,KarpSSV00,Mosk-AoyamaS08,FountoulakisP13,ChierichettiLP11,ChierichettiGLP18,DoerrFF11,SauerwaldS11,DoerrF11,DoerrFF12,FountoulakisPS12,GiakkoupisS12,AvinE18,Giakkoupis14,DoerrFS14,DoerrK14,GiakkoupisNW16,MehrabianP16,AcanCMW15,PanagiotouPS015,DaumKM20,GhaffariN16,DoerrK17}.
In the classic \textit{random phone call model}~\cite{DemersGHIL87}, an arbitrary node of the graph initially knows a piece of information, the rumor, which spreads across the graph in synchronous rounds until eventually all nodes are informed. 
In each round, every node calls one of its neighbors uniformly at random:
in the \texttt{PUSH} protocol each informed node informs the node it calls, in the \texttt{PULL} protocol each uninformed node is informed by the node it called, and in the \texttt{PUSH\&PULL} protocol both of these information exchanges happen simultaneously.

It is well known that \texttt{PUSH}, \texttt{PULL}, and \texttt{PUSH\&PULL} on a \textit{complete graph} with $ n $ nodes require $ \Theta (\log n) $ rounds to complete rumor spreading with high probability\footnote{We say that a statement holds \emph{with high probability} (w.h.p.\ for short) if it holds with probability at least $1-n^{-c}$, where $c$ is a positive constant hidden in the asymptotic notation.}~\cite{FriezeG85,KarpSSV00}.
Rumor spreading has been also extensively studied on \textit{expander graphs}, with known tight bounds that relate the completion time of the process with different notions of expansion.
In the following, we discuss only \texttt{PUSH\&PULL}, due to strong lower bounds that exist for both \texttt{PUSH} and \texttt{PULL} in isolation~\cite{ChierichettiGLP18,SauerwaldS11}.
In particular, the protocol with high probability requires $\Theta (\varphi^{-1} \log n)$ rounds for graphs with \textit{conductance} $\varphi \in [0,1]$~\cite{ChierichettiGLP18}, 
and $\Theta(\phi^{-1} \log n \log \Delta)$ rounds for graphs with maximum degree $\Delta$ and \textit{vertex-expansion} $\phi \in [0,1]$~\cite{GiakkoupisS12,Giakkoupis14}.
It is also known that \texttt{PUSH\&PULL} requires $ \Omega (\sqrt{n}) $ rounds with high probability in graphs with constant \textit{edge-expansion}~\cite{ChierichettiLP11}.
The study of rumor spreading protocols on expanders, subsequently, has been used as a building block for designing distributed information dissemination algorithms~\cite{HillelS10,Censor-HillelS12,Censor-HillelHKM17,Haeupler15}, some of which are based on expander decompositions~\cite{SpielmanT04}.

\textit{Faster rumor spreading} can be achieved on specific topologies or by modifying the process. 
On power-law degree distributed graphs, the process completes in $O(\log \log n)$ rounds if the exponent of the power-law $\beta$ is between $2$ and $3$, while it needs $\Omega(\log n)$ rounds if $\beta$ is greater than~$3$~\cite{FountoulakisPS12}.
A variant of the process, where nodes can use direct addressing, i.e., are able to contact neighbors whose address was learned before, takes $\Theta(\log \log n)$ rounds~\cite{haeupler2014optimal,AvinE18}.

With the goal of making rumor spreading even faster, in this paper we study a \textit{multi-call variant} of the classical protocol. We call this the \pushpull{$\PPPP$} protocol, where in each round every informed node samples $\PPPP$ neighbors and sends the rumor to them, while every uninformed node samples $\PPPP$ neighbors and requests the rumor from them.
The sampling happens uniformly at random and with replacement in both \texttt{PUSH} and \texttt{PULL} operations.

This process naturally interpolates between classical single-call rumor spreading and full broadcasting. It involves random selective communication, as in the former, while nodes simultaneously interact with multiple neighbors, as in the latter. 
This interpolation intuitively results in a trade-off between the amount of communication and the spreading time of the two processes. 
Classical rumor spreading requires minimal communication, while broadcasting is optimal in spreading time, matching the graph's diameter in the worst case.

Somewhat similar processes have been studied for the special case $k=4$ on random regular graphs~\cite{BerenbrinkEF16}, with a randomized number of simultaneous calls on the complete graph~\cite{PanagiotouPS15,DoerrK17}, and in the asynchronous setting considering multiple pulls on the complete graph~\cite{MocquardSA21,robin2022stochastic}.
For the related random-walk process, a multi-walk extension similar in spirit to our multi-call extension for rumor spreading has been studied by Alon et al.~\cite{AlonAKKLT11}.
Apart from these works we are not aware of any paper studying the multi-call setting explicitly.
While the analogy to a biological or social process might not be given anymore for the multi-call variant, we still believe that increasing the number of simultaneous calls is a quite natural generalization that captures a relevant aspect of information dissemination under bandwidth constraints.
Furthermore, note that the classic ``single-call'' variant allows each node to \emph{receive} multiple calls at the same time.\footnote{Restricted variants in which nodes receiving multiple calls can spread the rumor only to a single neighbor~\cite{DaumKM20} cannot achieve performance guarantees comparable to the unrestricted variant in conductance expanders and vertex expanders~\cite{GhaffariN16}.}
Therefore, it seems natural to also allow nodes to \emph{initiate} several calls at the same time.

This leads to the central question of our work:
\begin{mdframed}
    \centering\small
    \textbf{Question:} 
    \textit{How much communication is required for sub-logarithmic time rumor spreading?}
\end{mdframed}

\subsection{Our Contribution}
The starting point of our investigation is the study of multi-call rumor spreading on the \textit{complete graph}.
Using standard arguments that we present as a warm-up in \Cref{sec:complete_graphs}, we show that the \pushpull{$\PPPP$} protocol requires $ \Theta (\log_{\PPPP} n) $ rounds with high probability.
This gives an interesting trade-off between the classic single-call bound of $ \Theta (\log n) $ and a constant number of rounds for a polynomial number of simultaneous calls.

Next, we extend our question to \textit{expander graphs}.
Our goal is to generalize the trade-off that we get for \pushpull{$\PPPP$} on the complete graph to incorporate an expansion parameter $ \phi $, similarly to the works mentioned above for the single-call model.
While this is an intriguing combinatorial question in its own right, we believe that the recent use of expanders in graph neural networks~\cite{deac2022expander,ShirzadVVSS23} adds additional motivation for revisiting the foundations of information dissemination in expanders.

Our focus is on \textit{small-set expanders}, that have initially been studied for the notion of conductance (see e.g.,~\cite{RaghavendraS10,AroraBS15}), and later for edge expansion (see e.g.,~\cite{MoshkovitzS18}) and vertex expansion (see e.g., \cite{ChuzhoyMVZ16,LouisM16,ChlamtacDM17}). Small-set vertex expansion has also been identified as a crucial property of networks for designing routing protocols~\cite{AroraLM96}.
A non-negligible part of our contribution consists of studying properties of \textit{small-set vertex expanders} in a particular regime where the vertex expansion $\phi$ is larger than~$1$.
In particular, we establish upper and lower bounds on the \textit{diameter} of such graphs (\Cref{sec:expanders}), which, unlike other standard notions, is \textit{sub-logarithmic}.
% Retrospectively, we believe this might have broader algorithmic applications, being of independent interest, and that our work on multi-call rumor spreading should motivate further investigations on this type of expanders.\sebastian{Can we remove this sentence? (At least we should remove the word ``retrospecitvely'')}

This property is crucial for achieving sub-logarithmic rumor spreading time, as the graph diameter provides a fundamental lower bound on the number of rounds required.
Roughly speaking, our main result (see \Cref{sec:UB} and \Cref{thm:main_thm}) implies that, with high probability, a constant number of rounds suffices for multi-call rumor spreading in a wide regime of small-set vertex expanders when both the number of calls $k$ and the expansion parameter $ \phi $ are polynomial in $ n $.
We also give a complementary lower bound (see \cref{thm:LB}) showing that both parameters need to be polynomial to achieve a constant number of rounds.

% We believe that our work opens several interesting research directions for the community. 
% For example, reducing the message complexity and achieving constant-time rumor spreading with only polylogarithmic messages per node remains an open challenge. 
% Another direction is to explore the use of such expanders in applications beyond rumor spreading. They can serve as a natural alternative to traditionally used low-diameter graphs, with the benefit of having combinatorial expansion properties.

\subsection{Formal Statement of Our Results}
\paragraph*{Small-set vertex expanders}
Let $G=(V,E)$ be an unweighted graph. 
For every $S \subseteq V$, we write $\overline S := V \setminus S$ for the complement of $S$. 
We also define the neighborhood of a set $S$ as $N(S):=  \{v\in V:\exists s\in S,\,\{s,v\}\in E\}$ and its boundary $\d S:=N(S)\cap \overline S$. 
For a single node $v$, we write $N(v):=N(\{v\})$. 
Note that $v\notin N(v)$, but $N(S)\cap S$ can be nonempty for larger $S$. 
Using this notation, we are ready to define the class of \textit{small-set vertex expanders} (see, e.g., \cite{ChuzhoyMVZ16,LouisM16,ChlamtacDM17}).  
\begin{definition}\label{def:vtx_exp}
    Let $\phi \in (0,n)$ and $\alpha\in (0,\tfrac{1}{2}]$. We say a graph $G$ with $ n $ nodes is a $(\phi,\alpha)$-\emph{(vertex) expander} if
    \begin{equation*}
        \min_{\substack{S \subseteq V \text{ s.t.}\\ 0<|S|\leq \alpha n}} \frac{|\d S|}{|S|} \geq \phi.
    \end{equation*}
\end{definition}
The name small-set comes from the fact that the expansion property holds for sets of size at most $\alpha n$, opposed to the classical definition where $\alpha=\frac{1}{2}$.
The regime $\alpha<\frac{1}{2}$ is conceptually different and has implications on the expansion parameter $\phi$, allowing $\phi > 1$.
When $\alpha=\tfrac{1}{2}$, in fact, we \emph{cannot} have $\phi > 1$.
Indeed, any set $S$ with $|S|=n/2$ nodes satisfies $\tfrac{|\d S|}{|S|}\le \tfrac{|V\setminus S|}{|S|}=1$. 
More generally, there are no $(\phi,\alpha)$-expanders for $\alpha> \tfrac{1}{1+\phi}$ (see \Cref{lm:Imp_Exp}). 

However, if we take $\alpha= \tfrac{1}{1+\phi}$, we only get a very restricted class of graphs: such $(\phi,\alpha)$-expanders have extremely low diameter and are very dense; they have diameter 2 and minimum degree $\Theta(n)$ (see \Cref{lm:Restricted_Exp}). 
On the other hand, if we take $\alpha$ too small, the expansion property becomes local and the graph is no longer necessarily connected. To be precise, for $\alpha\leq \tfrac{1}{2+2\phi}$, there exist $(\phi,\alpha)$-expanders that are disconnected (see \Cref{lm:Disc_Exp}). 
In between these two bounds, we get a guarantee on the diameter that depends on $\phi$. 
For $\tfrac{1}{2+2\phi}<\alpha \leq \tfrac{1}{1+\phi}$, we have that $(\phi,\alpha)$-expanders have diameter at most $O(\log_\phi n)$ (see \Cref{lm:Exp_Diam}). 
We note that for $\phi=\omega(1)$ the diameter is $o(\log n)$ and that for polynomial~$\phi$ it becomes constant. 

Next, we provide two examples of vertex expanders where the bound on the diameter is tight.
First of all, we obtain that complete graphs are $(\phi,\alpha)$-vertex expanders for $0<\phi \leq n-1$ and $\alpha \leq \tfrac{1}{1+\phi}$ (see \Cref{lm:Complete_Exp}).
Secondly, we show that random graphs are vertex expanders. More precisely, we show that Erdős-Rényi random graphs, where between each pair of nodes there is an edge with probability $p=\tfrac{3\phi}{n}$, are $(\phi,\alpha)$-vertex expanders for $\alpha\leq \tfrac{1}{1+1.6\phi}$ (see \Cref{lm:randomgraph}). 
We note that in this regime of $p$ these graphs have diameter $\Theta(\log_\phi n)$ with high probability~\cite{ChungL01}.

We also note that concurrent work by Hsieh et al.~\cite{hsieh2025explicit} gives an explicit $\Theta(\phi)$-regular construction for small-set vertex expanders.

\paragraph*{Rumor Spreading}
% In the problem of \textit{rumor spreading}, an arbitrary node of a graph initially knows a piece of information, the rumor, which spreads across the graph until eventually all nodes know about it.
% We study an extension of a classic randomized protocol for information dissemination in networks where nodes are allowed to communicate with multiple neighbors with the goal of terminating the process faster. 

% The \pushpull{$\PPPP$} protocol proceeds in synchronous rounds (or time steps): in each round, every informed node samples $\PPPP$ neighbors and sends the rumor to them (\push{$\PPPP$}), while every uninformed node samples $\PPPP$ neighbors and requests the rumor from them (\pull{$\PPPP$}).
% The samplings are independent and happen uniformly at random and with replacement.

For $\tfrac{1}{2+2\phi}<\alpha \leq \tfrac{1}{1+\phi}$, we study rumor spreading on $(\phi,\alpha)$-vertex expanders through the \pushpull{$\PPPP$} protocol. 

\begin{restatable}{theorem}{mainthm}\label{thm:main_thm}
    Let $\phi>1$, $\alpha > \tfrac{1}{2+2\phi}$, let $G=(V,E)$ be a $(\phi,\alpha)$-expander, and let $\PPPP> \log^3 n$. Then w.h.p.\ rumor spreading with the \pushpull{$\PPPP$} protocol requires the following number of rounds:
    \[
        O\left(\left(\log_{\phi} n +\phi^{-1}\big(\alpha - \tfrac{1}{2+2\phi}\big)^{-1}\right)\log_{\PPPP} n\right).
    \] 
\end{restatable}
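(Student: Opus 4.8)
The plan is to track the set $I_t$ of informed nodes over rounds $t$ and to control its growth through the \emph{informable boundary} $\partial I_t = N(I_t)\cap \overline{I_t}$, i.e. the uninformed nodes having at least one informed neighbor — exactly the nodes that can become informed next (by being \texttt{PULL}ed from, or \texttt{PUSH}ed to by, an informed neighbor). I would separate a purely combinatorial part, which lower-bounds $|\partial I_t|$ as a function of $|I_t|$ using only the $(\phi,\alpha)$-expansion of \Cref{def:vtx_exp} and yields a count of ``expansion steps'', from a probabilistic part, which shows that one such step is realized in $O(\log_k n)$ rounds. Multiplying the two factors should give the bound; the warm-up on the complete graph (\Cref{sec:complete_graphs}) is the case $\phi\approx n$, where the first factor collapses to a constant.

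For the combinatorial part I would distinguish three regimes by the size of $I_t$, writing $U := \overline{I_t}$ and $U_0 := \{u\in U : N(u)\subseteq U\}$ for the uninformed nodes with \emph{no} informed neighbor, so that $\partial I_t = U\setminus U_0$ and one successful step (informing all of $\partial I_t$) replaces $U$ by a subset of $U_0$. While $|I_t|\le \alpha n$, the definition gives $|\partial I_t|\ge \phi|I_t|$ directly, so each step multiplies $|I_t|$ by $1+\phi$: this \emph{growth} regime costs $O(\log_\phi n)$ steps. For the \emph{shrinking} regime I would note that $N(U_0)\subseteq U$ forces $|U_0|+|\partial U_0|\le |U|$, and then apply expansion to $U_0$ — legitimate because a short extremal argument using $\alpha > \tfrac{1}{2+2\phi}$ shows $|U_0|\le \alpha n$ whenever $|U|$ is not too large — to get $|U_0|\le |U|/(1+\phi)$, so one step shrinks $|U|$ by a factor $1+\phi$, again costing $O(\log_\phi n)$ steps. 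These multiplicative regimes leave the \emph{middle} gap $\alpha n < |I_t| < \tfrac{n}{2}-(1+\phi)\beta n$, where $\beta := \alpha-\tfrac{1}{2+2\phi}$ and neither side is small; there I would feed a size-$\alpha n$ subset $I'\subseteq I_t$ into the expansion bound and subtract the at most $|I_t|-\alpha n$ informed nodes it can reach, obtaining the linear estimate $|\partial I_t|\ge (1+\phi)\alpha n-|I_t|\ge 2(1+\phi)\beta n$ — only \emph{additive} progress of order $\phi\beta n$ per step. Crossing a gap of width $\Theta(n)$ at this rate takes $O(\phi^{-1}\beta^{-1})=O(\phi^{-1}(\alpha-\tfrac{1}{2+2\phi})^{-1})$ steps, matching the second term of the theorem.

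The remaining, and in my view hardest, task is the probabilistic realization: showing that from $I_t$ \emph{all} of $\partial I_t$ becomes informed within $O(\log_k n)$ rounds, w.h.p. A boundary node $u$ of degree $d_u$ with $a_u\ge 1$ informed neighbors is informed in one \texttt{PULL} round with probability $1-(1-a_u/d_u)^k\ge 1-e^{-k a_u/d_u}$, which is $\Omega(1)$ once the informed fraction $a_u/d_u$ exceeds $1/k$; so every node of degree at most $k$ (and, later in the process, every node whose neighborhood is already mostly informed) is ``easy''. The genuine obstacle is a node of degree as large as $n$ with a single informed neighbor, where one round succeeds only with probability $\approx k/d_u$. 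Here I would argue that, driven by the same expansion that powers the combinatorial bounds, the informed fraction inside such a node's neighborhood climbs geometrically from $1/n$, crossing the $1/k$ threshold after $O(\log_k n)$ rounds, after which the node is informed with constant probability; since each informed node issues $k>\log^3 n$ calls, a Chernoff bound makes every per-round, per-node failure probability $e^{-\Omega(k)}=n^{-\omega(1)}$, small enough to be union-bounded away. Establishing this geometric filling uniformly over all boundary nodes — rather than merely in expectation — is the technical heart of the proof, and is exactly what turns the naive per-step cost $O(\log_k(1+\phi))$ into $O(\log_k n)$ and produces the product form of the bound.

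Finally I would assemble the pieces into a stopping-time argument on $|I_t|$, using that the process is monotone (informed nodes and the counts $a_u$ never decrease) so that the per-step estimates compose without interference. The number of expansion steps totals $O(\log_\phi n)+O(\phi^{-1}\beta^{-1})+O(\log_\phi n)$, each realized w.h.p. in $O(\log_k n)$ rounds, and a union bound over the polynomially many steps and rounds — affordable precisely because $k>\log^3 n$ makes each failure probability superpolynomially small — then yields $O\big((\log_\phi n+\phi^{-1}(\alpha-\tfrac{1}{2+2\phi})^{-1})\log_k n\big)$ with high probability. As consistency checks, the first term cannot drop below the diameter bound $O(\log_\phi n)$ of \Cref{lm:Exp_Diam}, and taking $\phi\approx n$ recovers the complete-graph bound $\Theta(\log_k n)$ of \Cref{sec:complete_graphs}.
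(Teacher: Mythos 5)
Your combinatorial accounting of phases (a multiplicative regime costing $O(\log_\phi n)$ steps while $|I_t|\le\alpha n$, an additive middle regime costing $O(\phi^{-1}(\alpha-\tfrac{1}{2+2\phi})^{-1})$ steps, and a symmetric/shrinking regime) matches the paper's outer structure. The gap is in the probabilistic step you yourself flag as the hardest: you define one ``expansion step'' as informing \emph{all} of $\d I_t$, and claim this happens w.h.p.\ in $O(\log_\PPPP n)$ rounds. This is too strong, and the sketched argument does not establish it. First, there is no reason the informed fraction inside the neighborhood of a fixed hard node $u$ (degree $\approx n$, one informed neighbor) ``climbs geometrically'': vertex expansion is a global property of small sets and does not localize to $N(u)$ --- the rumor can spread for many rounds entirely outside $N(u)$. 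Second, even once the fraction $a_u/d_u$ reaches $1/\PPPP$, the per-round failure probability of $u$ is $(1-a_u/d_u)^{\PPPP}\approx e^{-1}$, a \emph{constant}, not $n^{-\omega(1)}$; to union-bound over all boundary nodes you would need either $\Omega(\log n)$ further rounds (destroying the $O(\log_\PPPP n)$ budget) or a fraction $\Omega(\log n/\PPPP)$, neither of which you establish. Your statement that $\PPPP>\log^3 n$ calls make ``every per-node failure probability $e^{-\Omega(\PPPP)}$'' conflates the number of calls with the exponent $\PPPP a_u/d_u$ that actually appears. Your shrinking-regime analysis inherits this problem, since the clean replacement of $U$ by a subset of $U_0$ requires the full boundary to be informed.

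The paper avoids this obstacle with a dichotomy you are missing: it partitions $\d I_t$ into degree classes $A_i$ and proves (\Cref{lem:support lemma Ai}) that for each relevant class, either a \emph{constant fraction} of $A_i$ gets informed in $O(\log_\PPPP n)$ rounds, \emph{or} the boundary $\d I_t$ itself grows by a factor $\approx\PPPP^{1/6}$ in $O(1)$ rounds --- the hard high-degree, poorly-connected nodes are never informed directly; instead one of them is informed by push and its huge neighborhood is added to the boundary, which is a different but equally useful form of progress. Informing only a $\tfrac{1}{256}$-fraction of $\d I_t$ per phase (\Cref{lm:main_lemma}) still yields geometric growth of the boundary by a factor $\phi/256$ via expansion, so the phase count is unchanged, and concentration is obtained over \emph{sets} of boundary nodes (Chernoff for negatively associated variables, and the bounded-differences inequality where pulls create positive correlations) rather than per-node union bounds. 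Without this ``inform a fraction or grow the boundary'' alternative, your per-step claim fails for high-degree boundary nodes and the proof does not go through.
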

Before commenting on this round complexity, we give a lower bound
that follows immediately from the fact that on complete graphs we need $\Omega(\log_\PPPP n)$ rounds (see \Cref{lm:Kn_LB}) and on Erd\H{o}s-R\'enyi random graphs (which are $(\phi,\alpha)$-expanders by \Cref{lm:randomgraph}) we need $\Omega(\log_\phi n)$ rounds, since they have diameter $\Omega(\log_\phi n)$~\cite{ChungL01}.
\begin{restatable}{theorem}{thmLB}\label{thm:LB}
    Let $n\ge 1$, $\phi>1$ , $\alpha \leq \tfrac{1}{1+1.6\phi}$ and $k\ge 2$. Then there exist a $(\phi,\alpha)$-expander $G=(V,E)$ on $n$ nodes such that w.h.p.\ rumor spreading with the \pushpull{$\PPPP$} protocol requires $\Omega(\log_\phi n +\log_\PPPP n)$ rounds.
\end{restatable}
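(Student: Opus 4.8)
The plan is to prove the two terms of the lower bound separately, on two different hard instances, and then combine them via the elementary identity $\log_\phi n + \log_k n = \Theta(\max\{\log_\phi n,\log_k n\})$, which holds since $\phi>1$ and $k\ge 2$. Because the theorem only asks for a single witness graph attaining the \emph{sum}, and the sum is within a factor $2$ of the larger term, it suffices to produce, in each parameter regime, one expander attaining whichever of the two terms dominates. Since exactly one of $\phi\le k$ or $\phi>k$ holds, I would split into these two cases.

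In the case $\phi\le k$ we have $\log_\phi n\ge\log_k n$, so the target bound is $\Theta(\log_\phi n)$. Here I take $G$ to be the Erd\H{o}s--R\'enyi random graph with $p=\tfrac{3\phi}{n}$: by \Cref{lm:randomgraph} it is a $(\phi,\alpha)$-expander w.h.p.\ (using $\alpha\le\tfrac{1}{1+1.6\phi}$), and by \cite{ChungL01} it has diameter $\Omega(\log_\phi n)$ w.h.p.; a union bound intersects these two high-probability events to yield a graph with both properties. The lower bound then comes from the diameter: even in the multi-call model a rumor crosses at most one edge per synchronous round, because in any round a \texttt{PULL} reads only the node states fixed at the start of that round, so after $t$ rounds the informed set lies inside the ball of radius $t$ around the source. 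Taking the source to be an endpoint of a diameter-realizing pair forces at least $\mathrm{diam}(G)=\Omega(\log_\phi n)$ rounds, deterministically once $G$ is fixed, giving $\Omega(\log_\phi n+\log_k n)$.

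In the case $\phi>k$ we have $\log_k n\ge\log_\phi n$, so the target is $\Theta(\log_k n)$. Here I take $G=K_n$. The complete graph is a $(\phi,\alpha)$-expander by \Cref{lm:Complete_Exp} (valid since $\tfrac{1}{1+1.6\phi}<\tfrac{1}{1+\phi}$, and $\phi\le n-1$ may be assumed, as otherwise no $(\phi,\alpha)$-expander exists at all because a singleton set in range has expansion at most $n-1$). By \Cref{lm:Kn_LB}, \pushpull{k} on $K_n$ needs $\Omega(\log_k n)$ rounds w.h.p., i.e.\ $\Omega(\log_\phi n+\log_k n)$.

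I do not expect a genuine obstacle, as the argument is mainly bookkeeping on top of the two cited lower bounds; the only points requiring care are verifying that the multi-call relaxation still propagates the rumor at most one hop per round, so that the diameter remains a valid lower bound, and intersecting the w.h.p.\ expander and diameter events for the random graph. The single degenerate regime, $\phi>n/3$ where $p>1$ and the random graph is undefined, is harmless: there $\log_\phi n=\Theta(1)$, and in the case $\phi\le k$ also $\log_k n=\Theta(1)$, so the claimed bound is merely $\Omega(1)$ and holds trivially for $n\ge 2$.
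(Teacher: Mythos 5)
Your proposal is correct and follows essentially the same route as the paper, which likewise obtains the bound by combining \Cref{lm:Kn_LB} on the complete graph with the diameter lower bound for the Erd\H{o}s--R\'enyi $(\phi,\alpha)$-expanders of \Cref{lm:randomgraph} and~\cite{ChungL01}; your explicit case split on $\phi\le k$ versus $\phi>k$, the union bound over the two w.h.p.\ events, and the one-hop-per-round observation merely spell out what the paper leaves implicit. The only caveat, which your write-up shares with the paper's, is that \Cref{lm:randomgraph} requires $\phi=\Omega(\log n)$, so the random-graph witness is not literally available in the regime $1<\phi=o(\log n)$.
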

We make the following observations about the round complexity we get in \cref{thm:main_thm}:
\begin{itemize}
    \item The round complexity goes to infinity as $\alpha$ approaches $\tfrac{1}{2+2\phi}$, where expanders can be disconnected and hence rumor spreading never finishes. Interestingly, an analogous term appears in the analysis of the classic protocol on random graphs, making the rumor spreading time go to infinity in the non-connected regime~\cite{PanagiotouPS015}.
    \item For $\alpha\geq \tfrac{1}{2+2\phi}+\Omega\left(\tfrac{1}{\phi \log_\phi n}\right)$ the number of rounds simplifies to $O(\log_\PPPP n \log_\phi n)$.
    \item If additionally $\phi=n^{\Omega(1)}$, then we obtain $O(\log_\PPPP n)$, which is optimal due to \cref{thm:LB}. 
    \item If instead $\PPPP=n^{\Omega(1)}$, then we obtain $O(\log_\phi n)$, which is optimal due to \cref{thm:LB}.
\end{itemize}

We believe that our work opens several interesting research directions for the community. 
Closing the gap between our upper and lower bound is left as an open problem.
Extending the proof technique of the state of the art analysis \cite{Giakkoupis14} to the multi-call setting in our opinion appears to be non trivial; it is also not clear if this would be sufficient to actually close the gap. 
Another direction is to explore the use of such expanders in applications beyond rumor spreading. They can serve as a natural alternative to traditionally used low-diameter graphs, with the benefit of having combinatorial expansion properties.

\paragraph*{Lower bounds}
The fact that small-set vertex expanders allow $\phi>1$ and hence sub-logarithmic diameter is fundamental to achieve our results.
We observe that the classical notions of expansions considered in the literature do not have these properties and, therefore, the multi-call variant cannot speed up the rumor spreading time on these graphs.
The most studied notions of expanders for which these properties do not hold are the following:
\begin{enumerate}
    \item \emph{$\phi$-conductance expanders}, where for $\phi \in (0,1)$ we have 
    \[
        \min_{\substack{S \subseteq V \text{ s.t.}\\ 0<\Vol(S)\leq m/2}} \tfrac{|E(S,\overline{S})|}{\Vol(S)}\geq \phi,
    \]
    where $|E(S,\overline{S})|$ is the number of edges crossing the cut $(S,\overline{S})$, and $\Vol(S) := \sum_{u \in S} \deg(u)$ is the volume of $S$, with $\deg(u)$ being the degree of a node $u \in V$.
    \item \emph{$\phi$-edge expanders}, where for $\phi \in (0,n)$ we have
    \[
        \min_{\substack{S \subseteq V \text{ s.t.}\\ 0<|S|\leq n/2}} \tfrac{|E(S,\overline{S})|}{|S|}\geq \phi.
    \]
    \item \emph{$\phi$-vertex expanders}, where for $\phi \in (0,1)$ we have
    \[
        \min_{\substack{S \subseteq V \text{ s.t.}\\ 0<|S|\leq n/2}} \tfrac{|\d S|}{|S|}\geq \phi.
    \]
\end{enumerate}

Regarding conductance-expanders, the following statement holds.
\begin{lemma}
    Let $n\ge 1$, $\phi \in (0,1)$ and $k\ge 1$. There exists a $\phi$-conductance expander on $n$ nodes such that \pushpull{$k$} takes $ \Theta (\phi^{-1} \log n) $ rounds.
\end{lemma}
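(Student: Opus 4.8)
The plan is to prove the two directions separately, the guiding observation being that on this particular graph the multi-call parameter $k$ cannot help, because the obstruction is the graph's diameter rather than any congestion that extra calls could relieve.

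For the \textbf{upper bound} I would construct nothing. For every $\phi$-conductance expander the single-call \pushpull{1} protocol finishes within $O(\phi^{-1}\log n)$ rounds w.h.p.\ \cite{ChierichettiGLP18}. A coupling argument then shows that \pushpull{k} is never slower: run both processes on the same graph and, in each round, let the first of the $k$ neighbors sampled by a node in \pushpull{k} coincide with the single neighbor sampled in \pushpull{1}, for both the \texttt{PUSH} and the \texttt{PULL} direction. By induction on the rounds, every node informed by \pushpull{1} is informed by \pushpull{k} no later, so the $O(\phi^{-1}\log n)$ bound carries over to every $k\ge 1$. Since the graph I will build has conductance $\Theta(\phi)$, this upper bound is $O(\phi^{-1}\log n)$, matching the target.

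For the \textbf{lower bound} I would use a $k$-independent speed limit. In any graph and for any $k$, a node can become informed only through an already-informed neighbor, so the set of informed nodes after $t$ rounds is contained in the ball $\{u : \dist(s,u)\le t\}$ around the source $s$. Hence the completion time is at least the eccentricity of $s$, and since every node of a suitable graph has eccentricity $\Omega(\operatorname{diam}(G))$, it suffices to exhibit \emph{one} graph that is simultaneously a $\phi$-conductance expander and has diameter $\Omega(\phi^{-1}\log n)$. I would take a constant-degree (say $3$-regular) expander $H$ on $N$ nodes, which has conductance $\Theta(1)$ and diameter $\Theta(\log N)$, and subdivide every edge of $H$ into a path of length $\ell=\Theta(\phi^{-1})$, obtaining a graph $G$ on $n=\Theta(N\ell)=\Theta(N/\phi)$ nodes. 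Subdivision stretches all distances by a factor $\Theta(\ell)$, so $\operatorname{diam}(G)=\Theta(\ell\log N)=\Theta(\phi^{-1}\log n)$, where I use $\log N=\Theta(\log n)$ in the relevant regime of $\phi$; tuning the constant in $\ell$ makes the conductance of $G$ equal to $\Theta(\phi)$, hence $\ge\phi$. Combining the two ingredients, \pushpull{k} needs at least $\Omega(\operatorname{diam}(G))=\Omega(\phi^{-1}\log n)$ rounds for every $k$, matching the upper bound and yielding the claimed $\Theta(\phi^{-1}\log n)$.

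The step I expect to be the \textbf{main obstacle} is verifying that subdivision does not create a cut sparser than $\Theta(\phi)$. Besides the cuts inherited from $H$, which rescale to conductance $\Theta(1/\ell)=\Theta(\phi)$, one must also control cuts that split the interior of a subdivided path: cutting off an interval of $a\le\ell$ internal (degree-$2$) path-nodes crosses at most two edges while having volume $2a$, giving conductance at least $1/(2a)\ge\Theta(\phi)$. Checking that no union of such interior cuts and $H$-inherited cuts beats $\Theta(\phi)$, and confirming that $\log N=\Theta(\log n)$ (equivalently $\operatorname{diam}(G)\le n$) holds throughout the intended range of $\phi\in(0,1)$ so that the upper and lower bounds genuinely coincide, is the only delicate part; everything else is a direct consequence of the ball-containment argument and the known single-call conductance bound.
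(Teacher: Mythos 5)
Your proposal is correct and follows essentially the same route as the paper, which simply observes that the $O(\phi^{-1}\log n)$ upper bound for $k=1$ from \cite{ChierichettiGLP18} carries over to any $k$ (your coupling makes this monotonicity explicit) and that the lower bound comes from the diameter of $\phi$-conductance expanders, also via \cite{ChierichettiGLP18}. The only difference is that you spell out a concrete subdivided-expander construction and the diameter/ball-containment argument where the paper relies on a citation; your added caveats about interior cuts and the regime where $\log N=\Theta(\log n)$ are the right things to check and do not change the substance.
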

The proof follows since the same upper bound holds already for $k=1$, and the lower bound comes from the diameter of $\phi$-conductance expanders~\cite{ChierichettiGLP18}.

Rumor spreading on $\phi$-edge expanders on the other hand, is not very appealing in the first place due to the strong lower bound of $ \Omega (\sqrt{n}) $ rounds for the single-call variant -- which even holds in edge expanders with low diameter~\cite{ChierichettiLP11}.
%For $\phi$-edge expanders on the other hand, there is a strong lower bound of $ \Omega (\sqrt{n}) $ rounds for the single-call variant~\cite{ChierichettiLP11} that holds independent of the choice of the edge expansion $ \phi $.\sebastian{We need to confirm that this lower bound also holds for $ \phi > 1 $ for this part of the intro to make sense}
%\sebastian{More important: We need to mention our diameter lower bound to rule out a fast multi-call protocol}
%This inherently prohibits an expansion-sensitive generalization of the trade-off for \pushpull{$\PPPP$} we encountered in complete graphs.

Regarding vertex expanders, we have the following statement.
\begin{lemma}
    Let $n\ge 1$, $\phi \in (0,1)$ and $k\ge 1$. There exists a $\phi$-vertex expander on $n$ nodes such that \pushpull{$k$} takes $ \Omega (\phi^{-1} \log n) $ rounds.
\end{lemma}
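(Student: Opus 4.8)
The plan is to reduce the statement to a \emph{diameter} lower bound, exactly as in the conductance case above. The key observation is that in any round of \pushpull{k} the rumor can travel along at most a single edge: a \texttt{PUSH} moves it from an informed node to a direct neighbor, and a \texttt{PULL} moves it to an uninformed node from a direct neighbor. Hence, if the source is $s$, then after $t$ rounds every informed node lies within graph distance $t$ of $s$; this holds \emph{deterministically} and is completely insensitive to the number of calls $k$, since issuing $k$ simultaneous calls still only reaches distance-$1$ neighbors. Consequently, choosing $s$ and a target $v$ at distance equal to the diameter, the process needs at least $\operatorname{diam}(G)$ rounds to inform $v$, with probability $1$. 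It therefore suffices to exhibit, for every $\phi\in(0,1)$ and every $n$, a $\phi$-vertex expander $G$ on $n$ nodes with $\operatorname{diam}(G)=\Omega(\phi^{-1}\log n)$.

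First I would recall why this is the \emph{extremal} diameter, so that I know what construction to aim for. A ball-growth argument shows that in any $\phi$-vertex expander the radius is $O(\phi^{-1}\log n)$: as long as $|B_r(s)|\le n/2$, the expansion property gives $|B_{r+1}(s)|\ge (1+\phi)\,|B_r(s)|$, so $|B_r(s)|\ge (1+\phi)^r$ and the ball reaches size $n/2$ after $r=O(\log(n)/\log(1+\phi))=O(\phi^{-1}\log n)$ steps. Thus $\Omega(\phi^{-1}\log n)$ is the largest diameter one could hope for, and the task is to build an expander that is \emph{tight} for this bound, i.e., whose balls grow at essentially the minimal rate $1+\phi$.

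To realize this, I would start from an explicit constant-degree expander $H$ on $m$ vertices with vertex expansion $\Omega(1)$ and diameter $\Theta(\log m)$, and \emph{subdivide} every edge into a path of length $\ell$. Subdivision lengthens distances by a factor $\ell$ while shrinking the vertex expansion to $\Theta(1/\ell)$: the worst sets become half-segments of a single subdivided edge, whose boundary consists only of the two path-endpoints, giving ratio $\approx 2/(\ell-1)$. Hence I would take $\ell=\Theta(\phi^{-1})$, chosen as large as the constraint $2/(\ell-1)\ge\phi$ permits, so that the expansion stays $\ge\phi$ while the diameter becomes as large as possible. The resulting graph $G$ then has $n=\Theta(\ell m)$ vertices and diameter $\Theta(\ell\log m)=\Theta(\phi^{-1}\log n)$, which yields the claimed bound—\emph{provided} one can also certify that every subset of size at most $n/2$, and not merely the half-segments, has boundary ratio at least $\phi$.

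The main obstacle is precisely this last verification: establishing $|\partial S|/|S|\ge\phi$ \emph{uniformly over all} subsets $S$ with $|S|\le n/2$, not just for the single near-balanced cut or for pure sub-path sets. The delicate case is a set $S$ that mixes partial subdivided edges with whole ``supernodes'' of $H$ across many different scales, since the purely internal, degree-$2$ path vertices contribute almost no boundary on their own. I would formalize the bound by partitioning $S$ according to which subdivided edge each vertex belongs to, combining the per-edge path estimate (ratio $\gtrsim 1/\ell$) with the vertex expansion of the base graph $H$ applied to the set of supernodes that are ``substantially covered'' by $S$; the boundary produced by $H$ is then charged against the vertices of $S$ lying at or near original vertices of $H$. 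Making the two contributions add up so that the final ratio is genuinely $\ge\phi$—rather than only $\Omega(\phi)$—is the part that requires the most care, and it is here that one fixes the constant in $\ell=\Theta(\phi^{-1})$ and the required constant expansion of the base expander $H$.
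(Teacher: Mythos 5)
Your argument is correct and takes essentially the same route as the paper: the rumor advances along at most one edge per round regardless of $k$, so the spreading time is at least the diameter, and the lemma reduces to exhibiting a $\phi$-vertex expander with diameter $\Omega(\phi^{-1}\log n)$. The only difference is that the paper discharges that last step by citing the known diameter lower bound for $\phi$-vertex expanders from Giakkoupis and Sauerwald, whereas you sketch an explicit subdivided-expander construction; your sketch is plausible and correctly identifies the delicate point, but the uniform verification of $|\partial S|/|S|\ge\phi$ over all sets $S$ with $|S|\le n/2$ is precisely the part you leave incomplete, and precisely the content the paper outsources to the citation.
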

The lemma follows from a lower bound on the diameter of $\phi$-vertex expanders~\cite{GiakkoupisS12}.
A tight bound of $\Theta(\phi^{-1} \log n \log \Delta)$ is known for the single-call variant when $ \phi \leq 1 $~\cite{GiakkoupisS12,Giakkoupis14}. 
This means that the potential for savings in the multi-call variant is limited to a single $ \log \Delta $-factor compared to the single-call variant; eliminating this factor is, to the best of our judgment, a quite fine-grained endeavor that we deliberately omit from this paper with its more conceptual focus.

Another class of graphs with very good expansion properties is that of \textit{Ramanujan graphs}, whose spectral gap in the matrix representation of the graph is almost as large as possible.
In particular, they can achieve sublogarithmic diameter, but only under very restrictive conditions, i.e., when they are $\Theta(n)$-regular~\cite{sardari2019diameter}.
In contrast, our class of small-set vertex expanders allows sublogarithmic diameter while being significantly sparser.

% \tijn{relate the different classes of expanders to each other?}

\subsection{Technical Challenges}
The proof we provide for rumor spreading in vertex expanders with $\phi> 1$ is based on the ideas for the case of expansion at most 1, by Giakkoupis and Sauerwald~\cite{GiakkoupisS12}. There are three complications we need to overcome. Let $I_t$ denote the informed nodes in round $t$. We will show that in any round either $I_t$ grows significantly or the boundary $\d I_t$ grows significantly. While this growth is a constant factor for the case $\phi<1$, leading to a $\log n$ in the round complexity, we need to grow more aggressively. Intuitively, if the expected growth in one round is $\mu$, then the expected growth in a round with $\PPPP$-parallel calls is $\PPPP \mu$. However, this brings us to the first issue. 

\paragraph*{Overlap of Parallel Calls}
To some extent this challenge already appears in the \texttt{PUSH} model: the fact that two nodes individually push the rumor to a neighbor does not mean two new nodes are informed; they can push the rumor to the same node. 
This phenomenon is exacerbated when we call $\PPPP$ times. 
Moreover, it now also impacts pulling: if one node has two successful parallel pulls in a round, then this does not contribute as two nodes pulling. In other words, we cannot simply sum the expected gain from separate pushes and pulls to obtain the expected number of newly informed nodes. Let us consider these probabilities to understand this situation better. 

Let $u\in \d I_t$ be an uninformed node. The probability that $u$ pulls the rumor from $I_t$ in one round of \pushpull{$\PPPP$} is
\[
    1-\left(1-\frac{|N(u) \cap I_t|}{\deg(u)}\right)^\PPPP = p.
\]
Now if $\deg(u)\geq \PPPP |N(u) \cap I_t|$, we can use a binomial expansion to lower bound this probability by $\frac{\PPPP |N(u) \cap I_t|}{2\deg(u)}$, which is a factor $\PPPP/2$ higher than the probability that a single call succeeds. For nodes with lower degree we note that $p \geq 1/2$, so we can still say that they get informed with good probability. However, this does not carry over a factor $\PPPP$ compared to the single-call situation. 
We are able to show a sufficient growth by carefully analyzing the number of nodes with high and low degrees and using different arguments for each case. 

In this brief description we have only highlighted the issue. 
The actual solution is more involved but based on this intuition. 
For details see \Cref{sec:medium} and \Cref{sec:high}.

\paragraph*{Probabilistic Guarantees}
The second issue we need to deal with is the probability of successfully growing $I_t$ or $\d I_t$ through \pushpull{\PPPP}. 
Intuitively, there are two sides to what is going on here: on the negative side, we have only a sub-logarithmic number of rounds, which is often not great to give bounds with high probability. On the positive side, we have significant growth in each step, which is good for tail bounds. 

From this second observation, we see that in some cases a Chernoff bound suffices. However, in some situations, we cannot use a simple Chernoff bound since the random variables are neither independent nor negatively correlated. In particular, this happens when we want to show that the boundary $\d I_t$ grows. 
Let $u \in \d I_t$ be a node in the boundary of $I_t$ and $v_1,v_2 \in N(u) \setminus I_t$ be two uninformed neighbors of $u$.
The events ``$v_1 \in \d I_{t+1}$'' and ``$v_2 \in \d I_{t+1}$'' are not independent nor negatively associated as both may occur as a consequence of the event ``$u \in I_{t+1}$''. 
In some cases we can use the bounded difference inequality, see \Cref{sc:tail_bounds}, that can give tail bounds on \emph{functions} of independent random variables which we design to be equal to the sum of the correlated ones discussed above. 
See for example \Cref{sec:medium} and \Cref{sec:high}. 

In other cases, the function that measures progress does not satisfy the conditions for the bounded difference inequality. Here we use an additional trick: we do not use the full potential of $\PPPP$ parallel calls, but only $\PPPP/\log n$ calls. Each batch of $\PPPP/\log n$ call succeeds with constant probability, and since we have $\log n$ of these independent batches in parallel, at least one succeeds with high probability. This comes at the drawback that we only grow a factor $\PPPP/\log n$. By our assumption that $\PPPP > \log^3 n$, we get $\PPPP/\log n> \PPPP^{2/3}$, which means that $\log_{\PPPP/\log n}n=O(\log_\PPPP n)$.

\paragraph*{No Expansion for Large Sets}
The last complication we point out here is that small-set expansion is, by definition, only guaranteed for sets $S$ of size at most $\alpha n$. 
Therefore, we cannot use the expansion property to argue that $I_t$ keeps growing until it hits $n/2$ nodes---at which point a standard symmetry argument shows that the process completes within a factor 2 in the round complexity. 
To handle the case where more than $\alpha n$ nodes are informed, we show that the size of the boundary $\d I_t$ is at least a constant fraction of $n$, meaning that if the set of informed nodes $I_t$ keeps growing by a constant fraction of the boundary $\d I_t$, the process still completes in a constant number of additional iterations. 
To be precise, this leads to the factor $\phi^{-1}(\alpha - \tfrac{1}{2+2\phi})^{-1}$ in our round complexity. 

 \newpage
\section{Warm-Up: Multi-Call Rumor Spreading on Complete Graphs}\label{sec:complete_graphs}
This section serves as a warm-up for our main result; we investigate what impact parallel calls have on rumor spreading on complete graphs. We provide a full characterization with matching upper and lower bounds. For $k=1$, we know that rumor spreading takes $ \Theta (\log n) $ rounds~\cite{FriezeG85,KarpSSV00}, so we only consider $k\geq 2$.
The analysis in \cite{DoerrK17} yields precise results that include the leading constant (in fact tight apart from an additive number of rounds) for the case $k = O(1)$. However, it does not extend to super-constant values of $k$, where it leads to an upper bound of $O(k \cdot \log_k n)$, which is asymptotically worse than ours.

The proofs in the section are generalizations of standard arguments for rumor spreading on cliques. 
Independent work by Out, Rivera, Sauerwald, and Sylvester~\cite{OutRS024} considers rumor spreading with a time-dependent credibility function. Some of their proofs resemble the results in this section. 

In the proof for the upper bound, we use the following standard lemma, which is a trivial generalization of the version with $\PPPP=1$ (see, e.g., \cite[Lemma 3]{ChierichettiLP10}, \cite[Lemma 3.3]{GiakkoupisS12}, or \cite[Lemma 2.1]{Censor-HillelHKM17}). 

\begin{lemma}\label{lm:symmetry}
    For $S,T\subseteq V$, let $T_{\pushpull{$\PPPP$}}(S,T)$ be the number of rounds for \pushpull{$\PPPP$} until a rumor that is initially known to all nodes in $S$ to spread to at least one node of $v\in T$. Let $V_1,V_2\subseteq V$. Then the random variables $T_{\pushpull{$\PPPP$}}(V_1,V_2)$ and $T_{\pushpull{$\PPPP$}}(V_2,V_1)$ have the same distribution.
\end{lemma}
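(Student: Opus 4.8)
The plan is to use the classical time-reversal (duality) argument for \pushpull{$\PPPP$}, adapted to multiple calls. First I would make the randomness explicit: in each round $t$ every node $u$ independently samples a multiset $C_t(u)$ of $\PPPP$ neighbors uniformly with replacement, and I would retain only the \emph{undirected} communication pattern $E_t := \{\,\{u,v\} : v\in C_t(u)\text{ or }u\in C_t(v)\,\}$ (duplicates from sampling with replacement are irrelevant to this set). The crucial observation is that, because \pushpull{$\PPPP$} combines \texttt{PUSH} and \texttt{PULL}, information travels in \emph{both} directions across every communicated pair: if $\{u,v\}\in E_t$ and at least one of $u,v$ is informed at the start of round $t$, then both are informed at its end. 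Consequently, writing $I_t$ for the informed set, one has $I_t = I_{t-1}\cup N_{E_t}(I_{t-1})$, where $N_{E_t}$ denotes neighborhood in the graph $(V,E_t)$. Thus the entire effect of the $\PPPP$ calls is captured by the random edge set $E_t$, and the process is formally identical to single-call \pushpull{$\PPPP$} run on a random, time-varying sequence of edge sets; the only role of $\PPPP$ is to change the law of each $E_t$.

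Second, I would characterize reachability. Unrolling the recursion, a node $w$ lies in $I_t$ when started from $S$ if and only if there is a \emph{time-respecting path} from $S$ to $w$: a sequence $w_0,w_1,\dots,w_\ell=w$ with $w_0\in S$ and strictly increasing round indices $1\le t_1<t_2<\cdots<t_\ell\le t$ such that $\{w_{i-1},w_i\}\in E_{t_i}$ for all $i$ (the length-$0$ path covers $w\in S$, and strictness holds because along any path the rumor advances at most one hop per round). Hence, for any $V_1,V_2\subseteq V$,
\[
    \bigl\{\, T_{\pushpull{$\PPPP$}}(V_1,V_2)\le t \,\bigr\} = \bigl\{\, \text{there is a time-respecting } V_1\to V_2 \text{ path within rounds } 1,\dots,t \,\bigr\}.
\]

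Third comes the reversal itself. I would apply the bijection $\sigma(s)=t+1-s$ to the first $t$ rounds, which carries the family $(E_1,\dots,E_t)$ to $(E_t,\dots,E_1)$. A time-respecting $V_1\to V_2$ path $(w_0,\dots,w_\ell)$ with times $t_1<\cdots<t_\ell$ is sent, by reading it backwards, to the path $(w_\ell,\dots,w_0)$ with times $\sigma(t_\ell)<\cdots<\sigma(t_1)$; this is a valid time-respecting $V_2\to V_1$ path in the reversed family, and the correspondence is a bijection (it also fixes length-$0$ paths, so the $T=0$ case is handled symmetrically). Therefore the right-hand event above for $(V_1,V_2)$ in $(E_1,\dots,E_t)$ equals the corresponding $(V_2,V_1)$ event in $(E_t,\dots,E_1)$. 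Since the per-round choices are independent and identically distributed across rounds, the sequences $(E_1,\dots,E_t)$ and $(E_t,\dots,E_1)$ have the same law, giving $\Pr[T_{\pushpull{$\PPPP$}}(V_1,V_2)\le t]=\Pr[T_{\pushpull{$\PPPP$}}(V_2,V_1)\le t]$. As this holds for every $t\ge 0$, the two (possibly defective, if $G$ is disconnected) distributions coincide.

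The argument is mostly bookkeeping; the one genuine idea is the symmetric-communication observation, which reduces the multi-call dynamics to reachability in an i.i.d.\ sequence of edge sets, together with the mild care needed to confirm that reading a time-respecting path backwards preserves strict monotonicity of the round indices. I expect the only point to verify carefully is that $E_t$ is indeed i.i.d.\ across rounds---immediate from the fact that every node resamples its $\PPPP$ neighbors afresh and independently each round---since this independence is exactly what licenses reversing the order of the family without changing its distribution.
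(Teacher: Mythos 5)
Your proof is correct and is precisely the standard time-reversal argument that the paper relies on: the paper gives no proof of its own, noting only that the lemma is a trivial generalization of the $k=1$ version and citing the references where exactly this reachability-plus-reversal argument appears. Your reduction to an i.i.d.\ sequence of undirected contact graphs $E_t$ and the bijection on time-respecting paths is the intended (and sound) way to carry that argument over to $k$ calls.
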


\begin{lemma}
   Rumor spreading in a complete graph with the \pushpull{$\PPPP$} protocol for every $k\ge 2$ requires $O(\log_\PPPP n)$ rounds with high probability. 
\end{lemma}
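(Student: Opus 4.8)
The plan is to track the growth of the informed set $I_t$ and show it multiplies by roughly a factor of $\PPPP$ each round during the early phase, then invoke the symmetry lemma for the final phase. I would split the analysis into two regimes based on whether $|I_t| \le n/2$ or $|I_t| > n/2$. The early regime uses \texttt{PUSH} to grow the informed set aggressively; the late regime is handled by the symmetry argument of \Cref{lm:symmetry}.

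\textbf{Early regime (growth phase).} First I would focus on \texttt{PUSH} alone in the early regime where $|I_t| \le n/2$. Fix a round with $i := |I_t|$ informed nodes on the complete graph. Each informed node pushes to $\PPPP$ neighbors chosen uniformly at random with replacement among the other $n-1$ nodes. For a fixed uninformed node $u$, the probability it receives no push from any of the $i$ informed nodes is
\[
    \left(1 - \frac{1}{n-1}\right)^{i\PPPP} \le \exp\!\left(-\frac{i\PPPP}{n-1}\right).
\]
Hence the expected number of newly informed nodes is at least $(n-i)\bigl(1 - \exp(-\tfrac{i\PPPP}{n-1})\bigr)$. When $i\PPPP \le n-1$, a standard inequality $1 - e^{-x} \ge x/2$ for $x \in [0,1]$ gives expected growth at least $\tfrac{(n-i)}{2}\cdot\tfrac{i\PPPP}{n-1} \ge \tfrac{i\PPPP}{4}$ (using $n-i \ge n/2$), so $I_{t+1}$ grows by a factor of roughly $\PPPP/4$. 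When $i\PPPP > n-1$, every uninformed node is hit with constant probability, so a constant fraction of the uninformed nodes become informed and we quickly reach $n/2$. I would therefore distinguish these two sub-cases and show that in $O(\log_\PPPP n)$ rounds the informed set reaches at least $n/2$.

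\textbf{Concentration.} The main obstacle is establishing the growth with high probability rather than just in expectation, and here the number-of-rounds issue flagged in the paper's introduction matters: we only have $O(\log_\PPPP n)$ rounds, so a naive per-round high-probability bound need not union-bound cleanly. The cleanest route is to observe that, conditioned on $I_t$, whether each uninformed node $u$ becomes informed is determined by the independent push-choices of the informed nodes, so the count of newly informed nodes is a function of independent random variables and one can apply a Chernoff-type or bounded-difference concentration bound. In the factor-$\PPPP$ growth sub-case the expected gain $\Omega(i\PPPP)$ is large enough that a multiplicative Chernoff bound yields failure probability $\exp(-\Omega(i\PPPP)) = n^{-\omega(1)}$ per round, and since there are only $O(\log_\PPPP n)$ rounds a union bound is harmless. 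I would state the growth invariant as ``$|I_{t+1}| \ge \min\{\tfrac{n}{2}, c\,\PPPP\,|I_t|\}$ with high probability'' for a suitable constant $c$, and iterate it.

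\textbf{Late regime (symmetry).} Once $|I_t| \ge n/2$, I would apply \Cref{lm:symmetry} with $V_1 = I_t$ and, for each still-uninformed node $v$, $V_2 = \{v\}$. By symmetry the time for a single uninformed node to get informed (via \texttt{PULL} from the large set $I_t$) has the same distribution as the time for a rumor known to $\{v\}$ to reach the set $I_t$ of size at least $n/2$ — which is exactly the early-regime \texttt{PUSH} analysis run ``in reverse'' and thus also takes $O(\log_\PPPP n)$ rounds with high probability. A union bound over the at most $n/2$ remaining uninformed nodes (each failing with probability $n^{-\omega(1)}$) completes the argument. Adding the $O(\log_\PPPP n)$ rounds of the two regimes gives the claimed total of $O(\log_\PPPP n)$ rounds with high probability.
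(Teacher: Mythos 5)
Your overall architecture (a \texttt{PUSH}-driven multiplicative growth phase up to $n/2$, then the symmetry lemma) matches the paper's, and the expectation calculation and the saturation sub-case $i\PPPP > n-1$ are fine. The gap is in the concentration step. You assert that the per-round failure probability $\exp(-\Omega(i\PPPP))$ is $n^{-\omega(1)}$, but in the very first rounds $i=|I_t|$ is a small constant, so this quantity is $\exp(-\Omega(\PPPP))$; that is $n^{-\omega(1)}$ only when $\PPPP=\omega(\log n)$. For constant $\PPPP$ it is a constant, and for, say, $\PPPP=\sqrt{\log n}$ it is $n^{-o(1)}$, so the union bound over rounds gives no high-probability guarantee. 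The constant-$\PPPP$ case can be rescued by falling back on the classical $O(\log n)=O(\log_\PPPP n)$ bound for $k=1$, but in the intermediate range $\omega(1)\le \PPPP\le o(\log n)$ one needs $\log_\PPPP n=o(\log n)$ rounds while each early round still fails with probability $n^{-o(1)}$; neither the fallback nor your union bound closes this, so the invariant ``$|I_{t+1}|\ge \min\{n/2,\,c\PPPP|I_t|\}$ w.h.p.'' is simply not true round by round in that regime.

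The paper repairs exactly this point with an amortized argument: it fixes a window of $\beta\cdot 2\log_\PPPP n$ rounds, observes that each round fails to achieve factor-$\tfrac{\PPPP-1}{4}$ growth with probability at most $e^{-(\PPPP-1)/16}$ regardless of the history, stochastically dominates the number of failed rounds by a binomial $B(N,p)$ with $p=e^{-(\PPPP-1)/16}$, and applies a binomial tail bound (\Cref{thm:binom_tail}) to show that with probability $1-n^{-(c+1)}$ at least $2\log_\PPPP n$ rounds succeed --- which is enough accumulated growth even though no single round succeeds with high probability. You would need this (or some other way of aggregating the per-round growth factors across the whole phase rather than conditioning on every round succeeding) to complete your early-regime analysis; the rest of your proposal, including the bounded-differences/negative-association justification for concentration within a round and the symmetry argument for the late regime, is consistent with the paper.
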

\begin{proof}
An upper bound of $O(\log n)$ has been known for a long time for $k=1$~\cite{FriezeG85,KarpSSV00}. Since parallel spreading can only make the process faster, we have this upper-bound for any $\PPPP$. In particular, for any constant $k$ we have that the \pushpull{$\PPPP$} protocol requires $O(\log n)= O(\log_\PPPP n)$ rounds with high probability. In the following we assume $k\geq 17$. 

Let $I_t$ denote the informed nodes at round $t$. 
First we show using only $k$-\texttt{PUSH} that $|I_t|\geq (\PPPP/4)^t$ until $\PPPP |I_t|> n-1$. 
Assume $\PPPP |I_t|\leq n-1$. Each node receives a message from at least one node in $I_t$ with probability $1-(1-\tfrac{1}{n-1})^{\PPPP |I_t|}=\tfrac{\PPPP|I_t|}{n-1}-\tfrac{1}{2}\PPPP|I_t|(\PPPP|I_t|-1)\tfrac{1}{(n-1)^2}+\dots \geq \tfrac{\PPPP|I_t|}{2(n-1)}$ where we use binomial expansion.
Now we see that 
\begin{align*}
    \E[|I_{t+1}\setminus I_t|] &\geq  (n-|I_t|) \frac{\PPPP|I_t|}{2(n-1)}
    \geq \left(n-\frac{n-1}{\PPPP}\right)\frac{\PPPP|I_t|}{2(n-1)}
    \geq \tfrac{\PPPP-1}{2}|I_t|.
\end{align*}
%So $\E[|I_{t+1}|]=\tfrac{\PPPP+1}{2} |I_t|\geq \PPPP |I_t|$. 
$|I_{t+1}\setminus I_t|$ can be written as a sum of 0/1 random variables. 
Note that they are not independent, but they are negatively associated since the probability that $u$ receives a push can only decrease under the assumption that $v$ receives a push. 
Hence, as discussed in the appendix (see \cref{thm:chernoff bound}), Chernoff still gives
\begin{align*}
    \P[|I_{t+1}\setminus I_t|< \tfrac{\PPPP-1}{4}|I_t|] \leq e^{-(\PPPP-1)|I_t|/16}.
\end{align*}
For $\PPPP\geq 16(c+1)\log n$, this is bounded by $n^{-(c+1)}$, which gives the required probability. For smaller $k$, we need to analyze the process more carefully. 

We consider $\beta\cdot 2\log_k n$ rounds, for some constant $\beta=\beta(c)\geq 2$ to be decided later. By a Chernoff bound, the probability that any such round does \emph{not} have the required growth is bounded by $e^{-(k-1)/16}$. We compute the probability that more than $(\beta-1)\cdot 2\log_k n$ rounds fail in reaching the required growth. 
In fact, if at most $(\beta-1)\cdot 2\log_k n$ rounds fail, we have at least $2\log_k n \geq \tfrac{\log(n/k)}{\log(k/4)}$ successes which is enough to reach $ |I_t|> (n-1)/k$ nodes. 

To count the number of failures, we see that this is dominated by a binomial random variable $B(N,p)$ with $N=\beta\cdot 2\log_k n$ trials and probability of success $p=e^{-(k-1)/16}$. We bound the probability that $B(N,p)\geq (\beta-1)\cdot 2\log_k n= \tfrac{\beta-1}{\beta}N$.
By using standard tail bounds on the binomial distribution (see \Cref{thm:binom_tail}) we get that for $p<\tfrac{\beta-1}{\beta}$ we have
\begin{equation*}
    \P\left[ B(N,p) \geq \frac{\beta-1}{\beta}N\right] \leq \exp\left( -N \left(\frac{\beta-1}{\beta}\log\left( \frac{\frac{\beta-1}{\beta}}{p}\right)+\left(1-\frac{\beta-1}{\beta}\right)\log\left(\frac{1-\frac{\beta-1}{\beta}}{1-p}\right) \right) \right).
\end{equation*}
Note that we have $p<\tfrac{\beta-1}{\beta}$, since $p=e^{-(k-1)/16}\leq e^{-1} < 1/2 \leq \tfrac{\beta-1}{\beta}$, where the last inequality follows from the assumption that $\beta\geq 2$. 

Now we simplify this bound 
{\small%
\begin{align*}
    &\P\left[ B(N,p) \geq \frac{\beta-1}{\beta}N\right] \leq \exp\left( -N \left(\frac{\beta-1}{\beta}\log\left( \frac{\frac{\beta-1}{\beta}}{p}\right)+\left(1-\frac{\beta-1}{\beta}\right)\log\left(\frac{1-\frac{\beta-1}{\beta}}{1-p}\right) \right) \right)
    \\
    &\qquad\qquad= \exp\left( -\beta 2\log_k (n) \left(\frac{\beta-1}{\beta}\log\left( \frac{\frac{\beta-1}{\beta}}{e^{-(k-1)/16}}\right)+\frac{1}{\beta}\log\left(\frac{\frac{1}{\beta}}{1-e^{-(k-1)/16}}\right) \right) \right)\\
    &\qquad\qquad\leq \exp\left( - 2\log_k (n) \left((\beta-1)\left( \log\left( \frac{\beta-1}{\beta}\right)+(k-1)/16\right)-\log\beta \right) \right). 
\end{align*}}
We add the constraint that $\beta\geq 3$, so that $\log\left(\tfrac{\beta-1}{\beta}\right) > -1/2 \geq -(k-1)/32$, hence we have
\begin{align*}
    \P\left[ B(N,p) \geq \frac{\beta-1}{\beta}N\right] 
    &\leq \exp\left( - 2\log_k (n) \cdot (\beta-1)(k-1)/32 -\log\beta\right)
    \\
    &= \exp\left( - 2\log_k (n) \cdot (\beta-1)\left( (k-1)/32-\frac{\log\beta}{\beta-1}\right) \right).
\end{align*}
We add the further constraint that $\beta\geq 11$, so that $\tfrac{\log\beta}{\beta-1}<1/4 \leq (k-1)/64$. This gives 
\begin{align*}
    \P\left[ B(N,p) \geq \frac{\beta-1}{\beta}N\right] 
    &\leq \exp\left( - 2\log_k (n)\cdot (\beta-1) (k-1)/64 \right)
    \\
    &= \exp\left( - \log (n)\cdot \frac{\beta-1}{32}\cdot \frac{k-1}{\log k} \right)
    \leq n^{-(c+1)},
\end{align*}
where the last inequality holds for $\beta \geq 32c+33$. 
In other words, we have shown that with high probability after at most $O(\log_\PPPP n)$ rounds we have that $\PPPP |I_t|> n-1$. 

Next, we show that $|I_{t+1}|> n/2$ w.h.p., so assume $|I_t|\leq n/2$. Let $v\in V\setminus I_t$. Then the probability that $v$ pulls from $I_t$ is 
\begin{align*}
    1-\big(1-\tfrac{|I_t|}{n-1}\big)^\PPPP \geq  1-(1-\tfrac{1}{\PPPP})^\PPPP \geq 1-1/e.
\end{align*}
So the expected number of nodes that pull is at least $(n-|I_t|)(1-1/e)$. 
In this case, $|I_{t+1}\setminus I_t|$ is a standard binomial random variable (i.e., sum of independent 0/1 random variables). Hence, using Chernoff we see that 
\begin{align*}
    \P[|I_{t+1}\setminus I_t|\leq \tfrac{n-|I_t|}{2}] &= \P[|I_{t+1}\setminus I_t|\leq (1-\tfrac{1+1/e}{2})(n-|I_t|)(1-1/e)] \\
    &\leq e^{-\left(\tfrac{1+1/e}{2}\right)^2(n-|I_t|)(1-1/e)/2 }
    \leq e^{-n/14},
\end{align*}
using that $n-|I_t|\geq n/2$. Now we have that $|I_{t+1}|> |I_t|+ \tfrac{n-|I_t|}{2}\geq n/2$. 

By a standard symmetry argument, see \Cref{lm:symmetry}, we see that we use at most $2O(\log_\PPPP n)=O(\log_\PPPP n)$ rounds with probability $1-n^{-c}$.     
\end{proof}

Next, we show the matching lower bound. 

% \tijn{reviewer: There is a paper showing almost the same proof as lemma 2.3 in population protocol field (I am not sure exactly which paper). You could cite that paper or mention before the proof that this proof is not new!}
\begin{lemma}\label{lm:Kn_LB}
    Rumor spreading in a complete graph with the \pushpull{$\PPPP$} protocol for every $k\ge 2$ requires $\Omega(\log_\PPPP n)$ rounds with high probability.
\end{lemma}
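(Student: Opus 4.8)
The plan is to show that the set of informed nodes can grow by at most a multiplicative factor of roughly $2\PPPP+1$ per round in expectation, so that after $t$ rounds we still expect $\E[|I_t|]\le (2\PPPP+1)^t$ informed nodes; a single application of Markov's inequality then shows that for $t=\Theta(\log_\PPPP n)$ the process is w.h.p.\ not yet complete. As before, let $I_t$ denote the set of informed nodes in round $t$, with $|I_0|=1$.

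For the one-round bound I would treat the \texttt{PUSH} and \texttt{PULL} contributions separately and add them. The newly informed nodes $I_{t+1}\setminus I_t$ are contained in the union of the uninformed \texttt{PUSH}-targets and the successful \texttt{PULL}-ers. There are at most $\PPPP|I_t|$ \texttt{PUSH}-messages in total (each of the $|I_t|$ informed nodes samples $\PPPP$ targets), so \texttt{PUSH} contributes at most $\PPPP|I_t|$ new nodes deterministically. For \texttt{PULL}, a fixed uninformed node $v$ becomes informed with probability $1-\bigl(1-\tfrac{|I_t|}{n-1}\bigr)^{\PPPP}\le \tfrac{\PPPP|I_t|}{n-1}$, using the elementary inequality $1-(1-x)^\PPPP\le \PPPP x$; summing over the $n-|I_t|$ uninformed nodes, the expected number of successful pulls is at most $(n-|I_t|)\tfrac{\PPPP|I_t|}{n-1}\le \PPPP|I_t|$. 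Combining the two bounds yields $\E\bigl[|I_{t+1}|\mid I_t\bigr]\le (2\PPPP+1)|I_t|$, and taking full expectations and iterating gives $\E[|I_t|]\le (2\PPPP+1)^t$.

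Since the process has finished by round $t$ exactly when $|I_t|=n$, Markov's inequality gives $\P[|I_t|=n]=\P[|I_t|\ge n]\le \E[|I_t|]/n\le (2\PPPP+1)^t/n$. Choosing $t=\lfloor \tfrac12\log_{2\PPPP+1}n\rfloor$ makes $(2\PPPP+1)^t\le n^{1/2}$, so this probability is at most $n^{-1/2}$; hence with high probability the process is still running after $t$ rounds. Finally, because $\log(2\PPPP+1)=\Theta(\log \PPPP)$ for every $\PPPP\ge 2$, we have $t=\tfrac12\log_{2\PPPP+1}n=\Omega(\log_\PPPP n)$, which is the claimed lower bound. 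The only point requiring care is the \texttt{PULL} term: a purely deterministic bound fails here, since in the worst case every uninformed node could sample an informed neighbor and all of them would become informed in a single round; it is essential to bound the \emph{expected} number of successful pulls, which is exactly what the inequality $1-(1-x)^\PPPP\le \PPPP x$ supplies. The base change from $\log_{2\PPPP+1}$ to $\log_\PPPP$ is routine for $\PPPP\ge 2$.
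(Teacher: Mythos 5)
Your argument is correct and takes a genuinely different, considerably more elementary route than the paper's. The paper never bounds $\E[|I_t|]$ directly: it first shows the process must w.h.p.\ pass through a window $54(c+1)\log n \le |I_{t_0}| \le 216(c+1)\PPPP\log n$ (which requires a separate ``no overshoot'' Chernoff argument for the \texttt{PULL} contribution), and then proves by induction, with a fresh Chernoff bound in each round, that $|I_{t_0+t}|\le(3\PPPP)^t|I_{t_0}|$ until $n/\PPPP$ nodes are informed; the lower end of the window is exactly what makes each per-round tail polynomially small. Your first-moment recursion $\E[|I_{t+1}|\mid I_t]\le(2\PPPP+1)|I_t|$ is verified correctly (deterministic count of \texttt{PUSH} targets, Bernoulli's inequality for the expected \texttt{PULL} gain), and it bypasses the window, the overshoot lemma, and all concentration inequalities. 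What the paper's heavier machinery buys is the strength of the probability guarantee: Markov gives $\P[|I_t|=n]\le(2\PPPP+1)^t/n$, which for any $t=\Omega(\log_\PPPP n)$ is only $n^{-(1-\epsilon)}$ and can never be pushed below $n^{-1}$, whereas the paper's round-by-round concentration yields failure probability $n^{-c}$ for an arbitrary constant $c$ --- the convention its other proofs adhere to, with all constants parametrized by $c$. Under the literal footnote definition of w.h.p.\ (``a positive constant $c$'') your $1-n^{-1/2}$ suffices, but if arbitrary polynomial confidence is intended, the first-moment method is intrinsically too weak and some form of the paper's argument becomes necessary. One small loose end: when $\PPPP$ is so large that $\tfrac12\log_{2\PPPP+1}n<1$ your choice of $t$ is vacuous, and you should note, as the paper does in a footnote, that $1=\Omega(\log_\PPPP n)$ is then a trivial lower bound.
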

\begin{proof}
    We show that it takes $\Omega(\log_\PPPP n)$ rounds to get $n/\PPPP$ nodes informed. 
    First, let us assume that we are at a round $t_0$ such that $54(c+1) \log n \le|I_{t_0}| \le 216(c+1)\, k \log n$ for some positive constant $c$. 
    We will later show that this round exists w.h.p.
    With such an assumption, we show by induction that $|I_{t_0+t}|\leq (3\PPPP)^t |I_{t_0}|$, until $|I_t|\geq n/k$.\footnote{Note that this requires that $216(c+1)\, k\log n< n/k \iff k< \sqrt{n/(216(c+1)\log n)} $, which we can assume since $1=\Omega(\log_k n)$ is a trivial lower bound for such $k$.} 
    Clearly $k$-\texttt{PUSH} can inform at most $\PPPP |I_t|$ nodes in one round. 
    So it remains to show that the new informed nodes due to $k$-\texttt{PULL} in one round are at most $2\PPPP |I_t|$ w.h.p.

    Let $v\in V\setminus I_t$. Then the probability that $v$ pulls from $I_t$ equals
    \begin{align*}
        1-\left(1-\frac{|I_t|}{n-1}\right)^\PPPP \leq \frac{3}{2} \PPPP\frac{|I_t|}{n-1}.
    \end{align*}
    So we can bound the expected gain from pull $|I^{\rm{(pull)}}_{t+1}|$ by
    \begin{align*}
        \E[|I^{\rm{(pull)}}_{t+1}|]\leq (n-|I_t|)\cdot \frac{3}{2} \PPPP\frac{|I_t|}{n-1} \leq \frac{3}{2} \PPPP |I_t|.
    \end{align*}
    Note that $|I^{\rm{(pull)}}_{t+1}|$ is a binomial random variable, hence we can use Chernoff to bound the probability that this is bigger than $2k|I_t|$ and get
    \begin{align*}
        \P[|I^{\rm{(pull)}}_{t+1}|>2\PPPP|I_t|]
        = \P\left[|I^{\rm{(pull)}}_{t+1}|>\left(1+\frac{1}{3}\right)\frac{3}{2}\PPPP|I_t|\right]
        \leq e^{-\E[|I^{\rm{(pull)}}_{t+1}|]/27}.
    \end{align*}
    To see that this probability is indeed low, we lower bound the expected increase $|I^{\rm{(pull)}}_{t+1}|$ as well. 
    The probability that any $v\in V\setminus I_t$ pulls from $I_t$ equals
    \begin{align*}
        1-\left(1-\frac{|I_t|}{n-1}\right)^\PPPP \geq  \PPPP\frac{|I_t|}{2(n-1)}, 
    \end{align*}
    since we are assuming $|I_t|\leq n/\PPPP$. So we see
     \begin{align*}
        \E[|I^{\rm{(pull)}}_{t+1}|]
        \geq (n-|I_t|)\PPPP\frac{|I_t|}{2(n-1)} 
        \ge \frac{n-n/k}{2(n-1)}k |I_t|
        = \frac{n}{2(n-1)}(k-1)|I_t| 
        \ge \frac{k-1}{2}|I_t|
        \ge \frac{|I_t|}{2}.
    \end{align*}
Since we have by assumption that $|I_t|\geq|I_{t_0}|\ge 54(c+1)\log n$ the result holds at each step with probability at least $1-n^{-(c+1)}$. 
    If $|I_{t_0+t}|\leq (3k)^t |I_{t_0}|$, then we need 
    \begin{equation*}
        \frac{\log\left(\frac{n/k}{|I_{t_0}|}\right)}{\log(3k)} = \Omega(\log_k n) 
    \end{equation*}
    rounds to inform at least $n/k$ nodes. 

    It remains to show that until we inform $54(c+1) \log n$ nodes, we cannot suddenly overshoot $216(c+1)\, k \log n$ nodes. We note that $I_{t+1}=I_t\cup I^{\rm{(push)}}_{t+1}\cup I^{\rm{(pull)}}_{t+1}$. It is clear that $k$-\texttt{PUSH} can inform at most $k\cdot 54(c+1) \log n$ nodes. So it remains to show that it is very unlikely that the process reaches more than $216(c+1)\, k \log n$ nodes through $k$-\texttt{PULL}. 
    We do this by applying a Chernoff bound in a different way. More precisely, assume that $|I_t|\leq 54(c+1)\log n$, then we show that the probability that $|I^{\rm{(pull)}}_{t+1}|> 108(c+1)\, k\log n$ is small:
    {\small%
    \begin{align*}
        &\P[|I^{\rm{(pull)}}_{t+1}|>108(c+1)\, k\log n] =  \P\left[|I^{\rm{(pull)}}_{t+1}|>\left(1+\left(\frac{1}{2}+\frac{3}{2}\right)\frac{(54(c+1)\, k\log n}{\E[|I^{\rm{(pull)}}_{t+1}|]}-1\right)\E[|I^{\rm{(pull)}}_{t+1}|]\right] \\
        &\quad\leq \P\left[|I^{\rm{(pull)}}_{t+1}|>\left(1+\frac{ 27(c+1)\, k\log n}{\E[|I^{\rm{(pull)}}_{t+1}|]}\right)\E[|I^{\rm{(pull)}}_{t+1}|]\right]
        \leq \exp\left(-\frac{\left( \frac{ 27(c+1)\, k\log n}{\E[|I^{\rm{(pull)}}_{t+1}|]} \right)^2}{2+\frac{ 27(c+1)\, k\log n}{\E[|I^{\rm{(pull)}}_{t+1}|]}} \cdot \E[|I^{\rm{(pull)}}_{t+1}|] \right)
        \\
        &\quad\leq \exp\left(- (27/7)(c+1)\, k\log n\right)
        \le n^{-(c+1)}.
        \qedhere
    \end{align*}}
\end{proof}
 \newpage
\section{Multi-Call Rumor Spreading on Small-Set Vertex Expanders}\label{sec:UB}

The goal of this section is to prove the following theorem, which extends the study of multi-call rumor spreading on small-set vertex expanders with expansion $\phi>1$ (see \Cref{def:vtx_exp}).

\mainthm*

The proof of this theorem is based on the ideas for the case of expansion at most 1, by Giakkoupis and Sauerwald~\cite{GiakkoupisS12}. Where applicable, we follow their notation. 

Let $I_t$ denote the set of informed nodes in round~$t$. Our intermediate goal is to show that either $I_t$ grows or $\d I_t$ grows. We do this by partitioning the nodes in the boundary according to their degree, and then analyzing each set separately. More formally, we partition the nodes in the boundary $\d I_t$ into different sets $A_i$ defined as
\[
    A_i := \{ u \in \d I_t : d_i \le \deg(u) < 2 d_i \},
    \text{ with }
    d_i := 2^{i-1}.
\]
Note that there are at most $\log n$ of such sets.
We consider only those $A_i$'s that are sufficiently big and such that the degree of the nodes in the set is bounded by quantities depending on the size of the set itself or is large with respect to the size of the boundary $\d I_t$. 
Formally, let
\begin{equation}\label{eq: A_i good index}
    \mathcal{I} := \left\{
        i : |A_i| \ge \frac{|\d I_t|}{4\log n}
        \land \big(
            d_i \le 16\,|A_i| 
            \lor d_i \ge 2\,|\d I_t|
        \big)
    \right\}
\end{equation}
be the set of indices of the $A_i$'s we take into account.
We note that by considering such sets only, we consider at least half of the nodes in the boundary $\d I_t$. We include a proof for completeness. 
\begin{lemma}[\cite{GiakkoupisS12}]\label{lem: Ai's half the boundary}
It holds that
\( 
    \left|\bigcup_{i \in \mathcal{I}} A_i \right| 
    \ge |\d I_t| / 2
\).
\end{lemma}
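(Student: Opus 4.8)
The plan is to prove the contrapositive by bounding the complement: I would show that the nodes contained in the \emph{bad} sets $A_i$ with $i \notin \mathcal{I}$ constitute at most half of $\d I_t$. Since every boundary node has a well-defined degree, it lands in exactly one bucket $A_i$, so the $A_i$ partition $\d I_t$; consequently $\bigl|\bigcup_{i\in\mathcal I}A_i\bigr| = |\d I_t| - \sum_{i\notin\mathcal I}|A_i|$, and it suffices to show $\sum_{i\notin\mathcal I}|A_i| \le |\d I_t|/2$. An index $i$ fails to lie in $\mathcal{I}$ for exactly one of two reasons: either \textbf{(i)} the bucket is too small, $|A_i| < |\d I_t|/(4\log n)$; or \textbf{(ii)} the bucket is large enough, $|A_i| \ge |\d I_t|/(4\log n)$, but the degree condition is violated, i.e.\ $d_i > 16\,|A_i|$ \emph{and} $d_i < 2\,|\d I_t|$. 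I would bound the two types separately.

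For type (i), the key observation is simply that degrees range over $\{1,\dots,n-1\}$, so there are at most $\log n$ nonempty buckets. Summing the size threshold over all of them gives $\sum_{\text{(i)}}|A_i| < \log n \cdot \tfrac{|\d I_t|}{4\log n} = \tfrac{|\d I_t|}{4}$. For type (ii), I would exploit the violated degree condition, rewriting $d_i > 16\,|A_i|$ as $|A_i| < d_i/16$, together with the cap $d_i < 2\,|\d I_t|$. This yields
\[
    \sum_{\text{(ii)}}|A_i| < \frac{1}{16}\sum_{i:\,d_i < 2|\d I_t|} d_i.
\]
Since the thresholds $d_i = 2^{i-1}$ are distinct powers of two all lying below $2\,|\d I_t|$, their sum is dominated by twice the largest term and is therefore strictly less than $4\,|\d I_t|$. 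Hence the type (ii) contribution is at most $\tfrac{4|\d I_t|}{16} = \tfrac{|\d I_t|}{4}$. Adding the two estimates gives $\sum_{i\notin\mathcal I}|A_i| < \tfrac{|\d I_t|}{2}$, which establishes the claim.

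The only step requiring any insight is type (ii): one must notice that the geometric spacing of the thresholds $d_i = 2^{i-1}$ is precisely what makes the bound $|A_i| < d_i/16$ useful, because the cap $d_i < 2\,|\d I_t|$ turns $\sum d_i$ into a geometric series that collapses to a constant multiple of $|\d I_t|$. I expect this to be the main (and essentially the sole) obstacle; type (i) is immediate once one counts that there are only $\log n$ buckets, and the partition structure makes the reduction to a complement bound automatic.
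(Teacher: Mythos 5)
Your proof is correct and follows essentially the same route as the paper: both arguments bound the complement $\sum_{i\notin\mathcal I}|A_i|$ by splitting the bad indices into those with $|A_i| < |\d I_t|/(4\log n)$ (at most $\log n$ buckets, contributing at most $|\d I_t|/4$) and those with $|A_i| \le d_i/16$ and $d_i < 2|\d I_t|$ (a geometric series of thresholds, contributing at most $|\d I_t|/4$). The only cosmetic difference is that you partition the bad indices into disjoint cases while the paper adds the two worst-case bounds directly; the resulting estimate is identical.
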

\begin{proof}
The worst case with respect to the condition $|A_i| \ge \frac{|\d I_t|}{4\log n}$ is that all but one of the $A_i$'s have size $|A_i| = \lceil |\d I_t| / 4 \log n \rceil -1 < |\d I_t| / 4 \log n$.
The worst case with respect to the condition $d_i \le 16\,|A_i| \lor d_i \ge 2\,|\d I_t|$ is that $|A_i|=\lceil d_i/16\rceil-1\le d_i/16$ and $d_i< 2|\d I_t|$, for all $i < \log(2|\d I_t|)$. 
Hence, since the $A_i$'s are disjoint, we have
\[
    \left| \bigcup_{i \not\in \mathcal{I}} A_i \right|
    = \sum_{i \not\in \mathcal{I}} \left| A_i \right|
    % \le \sum_{i=1}^{\log n} |A_i| + \sum_{i=1}^{\log(2|\d I_t|)} |A_i|
    % \le \sum_{i=1}^{\log n} |\d I_t|/4\log n + \sum_{i=1}^{\log(2|\d I_t|)} d_i/16
    % \\
    \le \frac{|\d I_t|}{4\log n} \log n + \frac{1}{16} \left( 2|\d I_t| + |\d I_t| + |\d I_t| / 2 + |\d I_t| / 4 + \ldots + 2 \right)
    \le \frac{|\d I_t|}{2}
\]
which implies the thesis.
\end{proof}

In \cref{sec:mainlemma} we show the following lemma, which is the main technical lemma used in the proof of \cref{thm:main_thm}. 

\begin{restatable}{lemma}{supportlemma}\label{lem:support lemma Ai}
    For each $i\in \mathcal I$ and at every round $t$, 
    we have that with high probability 
    either at least $\frac{|A_i|}{128}$ nodes from $A_i$ are informed within $r=O(\log_\PPPP n)$ rounds, 
    or the boundary $\d I_t$ grows by at least $\frac{\PPPP^{1/6}}{64}|\d I_t|$ within $O(1)$ rounds.
\end{restatable}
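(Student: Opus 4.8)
The plan is to prove \Cref{lem:support lemma Ai} by a case analysis on the two conditions that define membership in $\mathcal{I}$, namely whether $d_i \le 16|A_i|$ (a ``medium-degree'' regime where the degrees are not too large relative to the size of the set) or $d_i \ge 2|\d I_t|$ (a ``high-degree'' regime). Throughout I fix an index $i \in \mathcal{I}$ and a round $t$, and I recall that $|A_i| \ge |\d I_t|/(4\log n)$ by the definition of $\mathcal{I}$. The overall dichotomy I want to establish is: either a constant fraction $\tfrac{1}{128}$ of $A_i$ gets informed within $O(\log_\PPPP n)$ rounds, or the boundary $\d I_t$ gains a factor $\tfrac{\PPPP^{1/6}}{64}|\d I_t|$ of new nodes in $O(1)$ rounds.

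**First I would handle the medium-degree case** $d_i \le 16|A_i|$. Here each node $u \in A_i$ has degree $\deg(u) < 2d_i \le 32|A_i|$, which is small relative to $|A_i|$ itself. The idea is that since the nodes in $A_i$ are in the boundary, each has at least one informed neighbor, so each can pull from $I_t$ with probability $p = 1 - (1 - |N(u)\cap I_t|/\deg(u))^\PPPP$. Following the intuition flagged in the \emph{Overlap of Parallel Calls} discussion, I would split $A_i$ according to whether $\deg(u) \ge \PPPP|N(u)\cap I_t|$ or not. For the low-degree-ratio nodes, $p \ge 1/2$, so a constant fraction is informed in $O(1)$ rounds; for the others, the binomial-expansion lower bound $p \ge \tfrac{\PPPP|N(u)\cap I_t|}{2\deg(u)}$ gives a speedup. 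The goal is to show that if $\tfrac{|A_i|}{128}$ of $A_i$ is \emph{not} informed quickly, then these nodes must collectively contribute many newly-boundary nodes, driving the growth of $\d I_t$. The correlation issue (events ``$v_1\in\d I_{t+1}$'' and ``$v_2\in\d I_{t+1}$'' both triggered by ``$u\in I_{t+1}$'') is handled by encoding progress as a function of the independent per-call random choices and applying the bounded difference inequality from \Cref{sc:tail_bounds}, as the authors signal.

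**Next I would treat the high-degree case** $d_i \ge 2|\d I_t|$. Here each $u \in A_i$ has degree at least $2|\d I_t| \ge 2|N(u)\setminus I_t|$ (roughly), so a large number of $u$'s neighbors lie outside the current boundary; pushing from such a node into the uninformed region is a good way to \emph{grow the boundary} rather than to grow $I_t$ directly. The plan is: once a node $u \in A_i$ is informed, its $\PPPP$ pushes land on many fresh neighbors, and because $\deg(u)$ is large, these pushes spread out and add new nodes to $\d I_t$. Quantitatively, I would show that the expected number of new boundary nodes scales like $\PPPP$ times the relevant overlap, then argue the $\tfrac{\PPPP^{1/6}}{64}|\d I_t|$ bound. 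In the subcase where the relevant progress function fails the bounded-difference hypothesis, I would deploy the batching trick described in the \emph{Probabilistic Guarantees} paragraph: use only $\PPPP/\log n$ of the available calls, split into $\log n$ independent batches each succeeding with constant probability, so that at least one succeeds w.h.p.\ while paying only the harmless loss $\log_{\PPPP/\log n} n = O(\log_\PPPP n)$ that the assumption $\PPPP > \log^3 n$ guarantees.

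**I expect the main obstacle to be** the high-degree case combined with the probabilistic guarantee, precisely because this is where the naive Chernoff argument breaks (the relevant indicators are neither independent nor negatively associated) and where overlap of parallel calls is most severe. Getting the tail bound to hold over only $O(\log_\PPPP n)$ rounds --- too few rounds for comfortable concentration --- while simultaneously exploiting the large per-round growth is the delicate balancing act; the batching device is what reconciles these competing pressures, but verifying that the designed progress function has bounded differences (or, failing that, that the batches are genuinely independent and each succeeds with constant probability) is the crux of the argument.
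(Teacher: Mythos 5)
Your top-level structure (splitting on the two conditions $d_i \le 16|A_i|$ versus $d_i \ge 2|\d I_t|$ that define $\mathcal I$, using bounded differences for the correlated boundary-growth indicators, and the $\PPPP/\log n$ batching trick) matches the paper, but there is a genuine gap in your medium-degree case. Your dichotomy is ``either a node of $A_i$ pulls quickly, or the nodes that fail to pull contribute many new boundary nodes.'' This misses a third possibility that actually occurs: a node $u \in A_i$ may have only a single informed neighbor and degree $d_i \gg \PPPP$, so its pull probability per round is only about $\PPPP/d_i = o(1)$, while at the same time almost all of its neighbors lie in $\d I_t$ or in $A_i$ itself rather than in $S_t$ --- so informing $u$ would not enlarge the boundary either. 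Neither horn of your dichotomy fires for such nodes, and they can constitute essentially all of $A_i$. The paper isolates exactly this situation (its medium-degree Case~ii, triggered when a third of $\Vol(A_i)$ goes to nodes $v$ with $h_i(v) \ge d_i\log n/96\PPPP$) and resolves it with a structural extraction (\cref{claim:someting_with_Ai}): one finds $A_i' \subseteq A_i$ and $V' \subseteq V$ forming a dense bipartite-like pair, runs a push--pull cascade \emph{inside} $(A_i', V')$ that multiplies the informed count by a factor $\Theta(\PPPP)$ per constant-length phase, and invokes the symmetry \cref{lm:symmetry} to convert ``a rumor started at $s\in A_i'$ reaches $I_t$'' into ``$s$ gets informed.'' This internal cascade is the sole reason the lemma allows $r = O(\log_\PPPP n)$ rounds rather than $O(1)$ for the informing branch; without it, your argument cannot close.

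A secondary but real issue is your high-degree case: you gesture at ``pushes from an informed $u$ land on fresh neighbors,'' but the argument needs the specific set $B = \{u \in A_i : |N(u)\cap S_t| \ge \PPPP|N(u)\cap I_t|\}$ and a split on $|B| \gtrless \tfrac12|A_i|$. When $B$ is small, most nodes of $A_i$ have a $\tfrac{1}{2(\PPPP+1)}$-fraction of neighbors in $I_t$ (using $|N(u)\cap \d I_t| \le |\d I_t| \le \tfrac12\deg(u)$), so they pull in $O(1)$ rounds and it is $I_t$, not $\d I_t$, that grows --- your sketch assigns the whole high-degree regime to boundary growth, which is not how the two outcomes of the lemma are actually realized there. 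When $B$ is large one further needs the sub-split on whether $\deg(u) \ge \sqrt{\PPPP}|B|$ to decide between ``a single pushed-to node of $B$ already contributes $\Omega(\sqrt{\PPPP}|B|)$ boundary nodes'' and ``many nodes of $B$ pull and their $S_t$-neighborhoods are aggregated via bounded differences.'' These quantitative pivots are what produce the exponents in $\min\{\PPPP/(32\log n), \sqrt{\PPPP}/4, \PPPP/64\}$ and hence the $\PPPP^{1/6}$ in the statement; your proposal does not yet identify them.
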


\cref{lem: Ai's half the boundary,lem:support lemma Ai} lead to the following claim. 
\begin{lemma}\label{lm:main_lemma}
If $|I_t|\leq \tfrac{n}{2}$, for $r=O(\log_\PPPP n)$ we have that with high probability 
\begin{equation*}
    |I_{t+r}\setminus I_t|\geq \frac{1}{256}|\d I_t|.
\end{equation*} 
\end{lemma}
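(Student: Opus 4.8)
The plan is to resolve the per-index dichotomy of \cref{lem:support lemma Ai} into a single guarantee on the growth of the informed set. First I would fix the round $t$, partition $\d I_t$ into the sets $A_i$ and keep the good indices $\mathcal I$ as defined above, and apply \cref{lem:support lemma Ai} to every $i \in \mathcal I$. Since there are at most $\log n$ such sets, a union bound gives that with high probability the conclusion of the support lemma holds simultaneously for all $i \in \mathcal I$. The crucial structural observation is that the second alternative of \cref{lem:support lemma Ai}, namely that $\d I_t$ grows by $\tfrac{\PPPP^{1/6}}{64}|\d I_t|$, does not depend on $i$: it is a single global event, call it $Q$. Hence the high-probability event is that either $Q$ holds, or $Q$ fails and every good index contributes at least $|A_i|/128$ newly informed nodes within $r=O(\log_\PPPP n)$ rounds.

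In the clean case, where $Q$ fails, the nodes informed out of distinct $A_i$ are distinct because the $A_i$ are pairwise disjoint, so their contributions add up, and by \cref{lem: Ai's half the boundary} I would conclude
\[
    |I_{t+r} \setminus I_t| \ge \sum_{i \in \mathcal I} \frac{|A_i|}{128} = \frac{1}{128}\Bigl|\bigcup_{i \in \mathcal I} A_i\Bigr| \ge \frac{1}{128}\cdot\frac{|\d I_t|}{2} = \frac{|\d I_t|}{256},
\]
which is exactly the desired bound, with the constant matching by design.

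The main obstacle is the case where $Q$ holds, i.e.\ the boundary grows but the informed set need not: a few high-degree newly informed nodes can inflate $\d I_t$ without supplying the required count of new informed vertices, so boundary growth does not translate into informed growth in one shot. I would handle this by iteration. Whenever $Q$ occurs I re-partition the new, strictly larger boundary and re-apply \cref{lem:support lemma Ai} from the new round; each such boundary-growth step costs only $O(1)$ rounds and multiplies $|\d I|$ by a factor $1+\tfrac{\PPPP^{1/6}}{64}$. Since $\PPPP>\log^3 n$ gives $\PPPP^{1/6}>\log^{1/2} n$, this factor forces at most $\log_{1+\PPPP^{1/6}/64} n = O(\log_\PPPP n)$ such steps before $|\d I|$ would exceed $n$, which is impossible (and the assumption $|I_t|\le n/2$ keeps us in the growing regime throughout). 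Consequently the clean case must occur within $O(\log_\PPPP n)$ rounds, and when it does the informed set grows by at least $\tfrac{1}{256}$ of the \emph{current} boundary, which is at least $|\d I_t|$ since the boundary only increased along the way; moreover, nodes newly informed relative to the current set are a fortiori new relative to $I_t$.

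Finally I would settle the probability and round budget. There are $O(\log_\PPPP n)$ checkpoints, each failing with probability at most $n^{-c'}$ by \cref{lem:support lemma Ai}, so a union bound over the checkpoints keeps the total failure probability at $n^{-c}$. The boundary-growth steps contribute $O(\log_\PPPP n)\cdot O(1)$ rounds and the single terminating clean step contributes $O(\log_\PPPP n)$ rounds, for a total of $r=O(\log_\PPPP n)$ rounds, as required.
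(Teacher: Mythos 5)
Your proof follows essentially the same route as the paper's: apply \cref{lem:support lemma Ai} to all good indices, sum the disjoint contributions via \cref{lem: Ai's half the boundary} when no boundary growth occurs, and observe that the boundary-growth alternative can repeat only $O(\log_\PPPP n)$ times before the boundary would exceed $n$, forcing the informed-growth case within the stated round budget. Your write-up is in fact slightly more explicit than the paper's on the union bound over checkpoints and on why growth relative to the enlarged current boundary still yields the bound in terms of the original $|\d I_t|$; no gaps.
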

\begin{proof}
    First, we show that in each phase of $r'$ rounds with high probability we are in either of the following cases:
    \begin{enumerate}[a)]
        \item $|I_{t+r'}\setminus I_t|\geq \frac{1}{256}|\d I_t|$, for $r'=O(\log_\PPPP n)$,\label{case:a}
        \item $|\d I_{t+r'}\setminus \d I_t|\geq \frac{\PPPP^{1/6}}{64} |\d I_t|$, for $r'=O(1)$.\label{case:b}
    \end{enumerate}

    \cref{lem:support lemma Ai} guarantees that for each $i\in \mathcal I$ with high probability either $\frac{|A_i|}{128}$ nodes from $A_i$ are informed within $r=O(\log_\PPPP n)$ rounds, or the boundary $\d I_t$ grows by at least $\frac{\PPPP^{1/6}}{64}|\d I_t|$ within $r=O(1)$ rounds.

    The thesis follows by considering two cases.
    If there is $i \in \mathcal{I}$ such that the boundary grows then the thesis follows immediately.
    Otherwise, it must hold that for all $i \in \mathcal{I}$ there are at least $\frac{|A_i|}{128}$ nodes in $A_i$ which get informed. 
    Since the $A_i$'s are disjoint, by \cref{lem: Ai's half the boundary} we conclude that
    \[
        |I_{t+r} \setminus I_t| \ge \sum_{i \in \mathcal{I}} \frac{|A_i|}{128} 
        = \frac{1}{128} \left|\bigcup_{i \in \mathcal{I}} A_i \right| 
        \ge \frac{|\d I_t|}{256}.
    \]
    Next, we notice that Case \ref{case:b} can only occur $O(\log_{\tfrac{\PPPP^{1/6}}{64}}n)=O(\log_{\PPPP}n)$ consecutive times before we cover the entire graph. Hence after $r=O(\log_{\PPPP}n)+O(\log_{\PPPP}n)=O(\log_{\PPPP}n)$ rounds, Case \ref{case:a} holds. 
\end{proof}

Next, we use this lemma to show that we reach more than half of the graph. 
We show that in $O(\log_\phi n)$ phases of $O(\log_\PPPP n)$ rounds we reach $\alpha n$ nodes.
Let $I_{(t)}$ denote the set of informed nodes in phase $t$.\footnote{Note that we use $I_t$ for the set of informed nodes in \emph{round} $t$.} 
We start by showing by induction that until $|I_{(t)}|\geq \alpha n$, we have $|\d I_{(t)}|\geq (\tfrac{\phi}{256})^t$. 

\begin{itemize}
\item \textbf{Base case.} By expansion, every node $v
\in V$ has $\deg(v)=|\d\{v\}|\geq \phi |\{v\}|=\phi$. So we have that $|\d I_{(1)}|\geq \phi$. 

\item \textbf{Induction step.} Suppose that $|\d I_{(t)}|\geq (\tfrac{\phi}{256})^t$. 
By \Cref{lm:main_lemma} have  $|I_{(t+1)}|\geq \tfrac{1}{256}|\d I_{(t)}|$.
So suppose $|I_{(t+1)}|\leq \alpha n$, then we have $|\d I_{(t+1)}|\geq \phi|I_{(t+1)}|$ by expansion of $I_{(t+1)}$. We obtain $|\d I_{(t+1)}|\geq \tfrac{\phi}{256}|\d I_{(t)}|\geq \tfrac{\phi}{256}(\tfrac{\phi}{256})^t\geq (\tfrac{\phi}{256})^{t+1}$, using the induction hypothesis. 
\end{itemize}
So we conclude that in $O(\log_{{\phi/256}}(\alpha n))=O(\log_\phi n)$ phases of $O(\log_\PPPP n)$ rounds we cover $\alpha n$ nodes, using in total $O(\log_\phi n \cdot \log_\PPPP n)$ rounds.

Now suppose that $\alpha n<|I_{(t)}|\leq \tfrac{n}{2}$. Let $S\subseteq I_{(t)}$ be any subset of size exactly $|S|=\alpha n$, so $|I_{(t)}\setminus S| \leq \tfrac{n}{2}-\alpha n.$ By expansion of $S$, we have $|\d S|\geq \phi |S|=\phi \alpha n$. We now see that 
\begin{equation*}
    |\d I_{(t)}| \geq |\d S \setminus I_{(t)}| = |\d S \setminus (I_{(t)}\setminus S)|\geq |\d S| - |I_{(t)}\setminus S|\geq  \phi \alpha n - (\tfrac{n}{2}-\alpha n) = (\phi+1)\alpha n- n/2.
\end{equation*}
By \Cref{lm:main_lemma} we now have to have that in each phase of $O(\log_{\PPPP} n)$ rounds, $I_{(t)}$ grows by $$\frac{1}{256}|\d I_{(t)}|\geq \frac{1}{256}((\phi+1)\alpha n- n/2).$$ 
So we cover more than $n/2$ nodes in the following number of phases
\begin{equation*}
    \frac{n/2}{\frac{1}{256}((\phi+1)\alpha n- n/2)} = O\left(\frac{1}{(\phi+1)\alpha - 1/2}\right) = O\left(\frac{1}{\phi(\alpha-\frac{1}{2+2\phi})}\right).
\end{equation*}

As each phase last $O(\log_\PPPP n)$ rounds, in total, we see that the rumor spreads to more than $n/2$ nodes in the following number of rounds
\begin{equation*}
    R=O\left(\left(\log_{\phi} n +\frac{1}{\phi\big(\alpha - \tfrac{1}{2+2\phi}\big)}\right)\log_{\PPPP} n\right).
\end{equation*}

Now \Cref{lm:symmetry} states that the number of rounds we need to spread the rumor from $I_t$ to a node $u\in V\setminus I_t$ is the same as the number of rounds we would need to spread the rumor from $u$ to anywhere in $I_t$. 
Since we have shown that we reach more than half of the graph in $R$ rounds, and that $|I_t|> n/2$, a rumor started at $u$ reaches $I_t$ in another $R$ rounds. 
We conclude that the spreading process finishes after $ 2R $ rounds in total, proving \cref{thm:main_thm}.

\subsection{Proof of \texorpdfstring{\Cref{lem:support lemma Ai}}{Main Lemma}}\label{sec:mainlemma}
In this section we prove our main lemma. 
\supportlemma*

We look at each set $A_i$ for $i \in \mathcal{I}$ separately, and we upper bound the number of rounds until
either a large fraction of nodes in $A_i$ gets informed
or the size of the boundary increases by a large factor as a result of nodes in $A_i$ being informed.
We distinguish three cases informally involving sets of nodes with low, medium, and high degree.
We define these cases according to the following conditions:
\begin{enumerate}
    \item Low degree: $d_i \le 1024 \cdot 96 \cdot \PPPP$.
    \item Medium degree: $1024 \cdot 96 \cdot \PPPP < d_i \le 16\,|A_i|$.
    \item High degree: $d_i \ge 2\,|\d I_t|$.
\end{enumerate}
Note that the three cases do not cover all nodes in the boundary $\d I_t$. However, they do cover all nodes in the $|A_i|$'s for $i \in \mathcal{I}$. 
This is important since, due to \cref{lem: Ai's half the boundary}, it implies that we consider at least half of the nodes in $\d I_t$.

We analyze the above three cases separately in the three following subsections, that will be further subdivided into a total of 7 cases: in 4 cases $I_t$ grows, while in 3 cases $\d I_t$ grows, always with high probability.
By summarizing all of them we get with high probability that either
\[
    |I_{t+r} \setminus I_t| 
    \ge \min\left\{\frac{1}{4}, \frac{1}{20}, \frac{1}{128}, \frac{1}{8}\right\} |A_i| 
    = \frac{|A_i|}{128}
\]
within $r=O(\max\{1, \log_\PPPP n, 1, 1)\}=O(\log_\PPPP n)$ rounds or
\[
    |\d I_{t+r} \setminus \d I_t| 
    \ge \min\left\{\frac{\PPPP}{32 \log n}, \frac{\sqrt{\PPPP}}{4}, \frac{\PPPP}{64}\right\} |A_i|
    \ge \frac{\PPPP^{1/6}}{64}|\d I_t|
\]
within $r=O(1)$ rounds,
since by definition $|A_i| \ge \frac{|\d I_t|}{4 \log n}$ and by assumption $\PPPP > \log^3 n$.

\subsubsection{Sets with low degree nodes}
Recall that in this case every $u \in A_i$ is such that 
\[
    d_i \le 1024 \cdot 96 \cdot \PPPP.
\]

Since the degree of these boundary nodes is low, it is likely that many of them will directly pull the rumor. 
We prove a more general lemma that will be used in other cases as well and formalizes the fact that nodes that are ``well-connected'' to $I_t$ are likely to pull the rumor directly.
In particular we consider nodes of a generic subset $B \subseteq \d I_t$ of the boundary of $I_t$ and bound the number of rounds needed by a constant fraction of nodes in $B$ to pull the rumor from $I_t$, with high probability.

\begin{lemma}\label{lem:lemma 3.5}
Let $B \subseteq \d I_t$ be such that $|B|= \Omega(\log n)$.
Let $0< q\leq 1$ and suppose that for every node $u\in B$ at least a $q$-fraction of its neighbors is in $I_t$. Then with high probability at least $\tfrac{|B|}{4}$ nodes of $B$ have pulled the rumor from $I_t$ in $\lceil \tfrac{1}{q\PPPP}\rceil$ rounds. 
\end{lemma}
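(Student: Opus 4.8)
The plan is to analyze each node $u \in B$ in isolation through its own \pull{\PPPP} samples, discarding every other source of information gain, and then to apply a Chernoff bound over the $\Omega(\log n)$ nodes of $B$. First I would bound the single-round failure probability of a fixed $u \in B$: since $u$ samples $\PPPP$ neighbors independently and uniformly and at least a $q$-fraction of its neighbors lie in $I_t$, the probability that \emph{none} of the samples lands in $I_t$ is at most $\big(1-\tfrac{|N(u)\cap I_t|}{\deg(u)}\big)^\PPPP \le (1-q)^\PPPP$. Freezing $I_t$ at its round-$t$ value and running $r = \lceil \tfrac{1}{q\PPPP}\rceil$ independent rounds, the probability that $u$ fails in every round is at most $(1-q)^{\PPPP r} \le (1-q)^{1/q} \le e^{-1}$, using $\PPPP r \ge 1/q$ together with $1-q \le e^{-q}$. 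Hence each $u$ pulls from the frozen copy of $I_t$ within $r$ rounds with probability at least $1-e^{-1} > \tfrac12$.

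Next I would justify that freezing $I_t$ can only underestimate what happens in the true process. Using a coupling on the same sequence of samples, whenever the frozen process informs $u$ (because $u$ samples a neighbor in the round-$t$ set $I_t$), the real process—in which $I_t$ only grows, through \push{\PPPP}, through pulls by other nodes, and through pulls by the remaining nodes of $B$—informs $u$ no later. Therefore the number of nodes of $B$ informed within $r$ rounds in the real process stochastically dominates the corresponding count in the frozen process, and it suffices to prove the concentration bound for the latter.

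In the frozen process the indicator random variables $X_u$ of the events ``$u$ pulls the rumor within $r$ rounds'' are mutually independent across $u \in B$, since each node's samples are drawn independently and are evaluated against the same fixed set $I_t$. Writing $X := \sum_{u \in B} X_u$, we have $\mu := \E[X] \ge (1-e^{-1})|B| > |B|/2$. Because $|B|/4 < \tfrac{1}{2}\mu < \mu$, a Chernoff lower-tail bound yields $\P[X < |B|/4] \le e^{-c|B|}$ for an absolute constant $c>0$. By the hypothesis $|B| = \Omega(\log n)$ with a sufficiently large hidden constant, this probability is at most $n^{-\Omega(1)}$, so with high probability at least $|B|/4$ nodes of $B$ are informed within $r = \lceil \tfrac{1}{q\PPPP}\rceil$ rounds, as claimed.

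The main obstacle is the dependence introduced by the fact that $I_t$ grows over the $r$ rounds, which couples the fates of the nodes in $B$ and prevents a direct application of Chernoff; the freezing-and-coupling step is precisely what removes this dependence, reducing the count to a sum of independent Bernoulli trials while only losing a factor in a favorable direction.
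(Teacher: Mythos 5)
Your proof is correct and follows essentially the same route as the paper: bound the per-node failure probability over $\lceil \tfrac{1}{q\PPPP}\rceil$ rounds by $(1-q)^{\PPPP r}\le e^{-1}$, so each node succeeds with probability at least $1/2$, then apply a Chernoff lower-tail bound over the $\Omega(\log n)$ independent nodes of $B$. The only difference is that you make explicit the freezing-and-coupling step justifying independence and monotonicity, which the paper leaves implicit, and you handle the two regimes $q\PPPP>1$ and $q\PPPP\le 1$ in one uniform calculation rather than as separate cases.
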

\begin{proof}
We consider two different cases, depending on the value of $q$. We show for both cases that the probability of pulling from an informed node is at least $1/2$.

\noindent
\textbf{Case 1.} If $q\PPPP>1$, we show we only need 1 round. The probability that any $u\in B$ pulls from $I_t$ in one round is $1-\left(1-\frac{|N(u)\cap I_t|}{\deg(u)}\right)^\PPPP \geq 1-\left(1-q\right)^\PPPP$. 
Since $q\PPPP>1$, then $1-\left(1-q\right)^\PPPP \geq 1/2$. 

\noindent
\textbf{Case 2.} If $q\PPPP \leq 1$, then the probability that a node pulls the rumor in $\lceil \tfrac{1}{q\PPPP}\rceil$ rounds is $1-\left(1-q\right)^{\PPPP\cdot \frac{1}{q \PPPP}} \geq 1-e^{-1}\geq 1/2$. 

This means that the expected number of nodes that pull in this many rounds is at least $|B|/2$. Since these pulls are independent random variables and $|B|=\Omega(\log n)$, a Chernoff bound gives us that at least $|B|/4$ nodes are informed with high probability. 
\end{proof}

We are now ready to show that the set $A_i$ grows sufficiently.
For each node $u \in A_i$ the fraction of neighbors that $u$ has in $I_t$ is at least $1/\deg(u) \ge 1/(2d_i)$.
Then \cref{lem:lemma 3.5} gives that with high probability a fraction $1/4$ of the nodes in $A_i$ pull the rumor from $I_t$ in at most $r = \lceil 2\cdot 1024\cdot96\cdot\PPPP/\PPPP \rceil = O(1)$ rounds.
By definition of $\mathcal{I}$ in \cref{eq: A_i good index}, with high probability it follows that 
\[
    |I_{t+r} \setminus I_t| \ge \tfrac{1}{4} |A_i|.
\]

\subsubsection{Sets with medium degree nodes}\label{sec:medium}
Recall that in this case every $u \in A_i$ is such that
\[
    1024\cdot96\cdot \PPPP < d_i \le 16 \, |A_i|.
\]

Unlike the previous case, nodes with medium degree can either contribute directly to the growth of $I_t$ or to the growth of $\d I_t$.
To describe these different types of contributions, 
for each node $v \in V$ let us denote the number of neighbors of $v$ in $A_i$ as
\[
    h_i(v) := |N(v) \cap A_i|.
\]
Moreover, let us define the set of nodes that are neither informed or in the boundary of the informed nodes as 
\[
    S_t := V \setminus (I_t \cup \d I_t).
\]

We further distinguish three cases, depending on the \emph{volume} of $A_i$, we define the volume of a set $S$ as $\Vol(S):=\sum_{v\in S} \deg(v)$. 
\begin{enumerate}[i)]
    \item It holds that $\sum_{{v \in S_t \text{ s.t.} h_i(v) \le  d_i\log n/\PPPP}} h_i(v) \ge \frac{1}{2} \Vol(A_i)$.\label{case:medium_i}
    \item It holds that $\sum_{{v \in V \text{ s.t.} h_i(v) \ge d_i \log n/ 96\PPPP}} h_i(v) \ge \frac{1}{3} \Vol(A_i)$.\label{case:medium_ii}
    \item None of the above conditions are met, i.e., it holds that
    $\sum_{{v \in S_t \text{ s.t.} h_i(v) \le d_i\log n/\PPPP}} h_i(v) < \frac{1}{2}\Vol(A_i)$
    and
    $\sum_{{v \in V \text{ s.t.} h_i(v) \ge d_i \log n / 96\PPPP}} h_i(v)
    < \frac{1}{3} \Vol(A_i)$.\label{case:medium_iii}
\end{enumerate}
In the remainder of this subsection we prove that in 
case \ref{case:medium_i}) $\d I_t$ grows, while in cases \ref{case:medium_ii}) and \ref{case:medium_iii}) it is $I_t$ to grow directly.

\paragraph*{Case \ref{case:medium_i})}
It holds that
\begin{equation}
    \sum_{\substack{v \in S_t \text{ s.t.}\\ h_i(v) \le d_i\log n/\PPPP}} h_i(v) \ge \frac{1}{2} \Vol(A_i).
\end{equation}

In this case we give a lower bound on the number of new nodes in the boundary and prove that it holds with high probability. 
We start by noting that the probability that a fixed node $u \in A_i$ pulls the rumor from $I_t$ in a given round is 
\begin{equation*}%\label{eq:prob p u pulls from It}
    1-\left(1- \frac{|N(u)\cap I_t|}{\deg(u)}\right)^\PPPP 
    \geq 1-\left(1- \frac{1}{\deg(u)}\right)^\PPPP 
    \geq 1-\left(1- \frac{1}{2d_i}\right)^\PPPP 
    \ge \frac{\PPPP}{4d_i}=:p,
\end{equation*}
where we use that $2d_i > \PPPP$. 

Let us pessimistically assume that such a probability exactly equals $p$ for every node $u$.
For each $u\in A_i$, let $X_u$ denote the $0/1$ random variable that is $1$ iff $u$ pulls the rumor from $I_t$ in round $t+1$. 
Then we have $\P[X_u=1]=p$.
Further, for each $v\in S_t$ such that $h_i(v) \leq d_i\log n/\PPPP $, let $Y_v$ denote the $0/1$ random variable which is $1$ exactly when $v$ has a neighbor $u$ with $X_u=1$. 
Finally let $Y= \sum_{v\in V}Y_v$ be the number of new nodes in the boundary only considering the contribution from pull. 
Our goal is to prove a lower bound on $Y$. 

We see that
\begin{align*}
    \P[ Y_v =1] \geq 1-\left(1-\frac{\PPPP}{4d_i}\right)^{h_i(v)} \geq 1-\left(1-\frac{\PPPP}{4d_i}\right)^{h_i(v)/\log n} 
    \geq \frac{\PPPP h_i(v)}{8d_i\log n},
\end{align*}
where the last inequality follows from Taylor series expansion, which needs that $\tfrac{\PPPP h_i(v)}{8d_i\log n}\le 1$, as ensured by the second to last inequality\footnote{Note that without the extra $\log n$ the rightmost term could be bigger than 1, rendering the inequality trivially false.}.
This gives us that in expectation
\begin{align*}
    \E[Y] &= \sum_{\substack{v\in S_t \text{ s.t.}\\ h_i(v) \leq d_i\log n/\PPPP}} \P[ Y_v =1] 
    \geq \sum_{\substack{v\in S_t \text{ s.t.}\\ h_i(v) \leq d_i\log n/\PPPP}} \frac{\PPPP h_i(v)}{8d_i\log n}\\
    &\geq \frac{\PPPP}{16 d_i \log n}\Vol(A_i) 
    \geq \frac{\PPPP}{16 d_i\log n} \cdot d_i|A_i| = \frac{\PPPP}{16\log n}|A_i|.
\end{align*}

Next, we need to show that $Y$ is concentrated around its expectation. A simple Chernoff bound does not suffice: the $Y_v$ are not independent, nor negatively correlated. Instead, we use the method of bounded differences (see \cref{thm:bounded differences}). 
In particular, we apply this theorem with $R_i=X_u$, $f(\{X_u\}_{u\in A_i})=Y$, so $\mu = \E[Y]$ and $b := \max|f(x)-f(x')|\leq 2d_i$. With $\lambda=\frac{\PPPP}{32\log n} |A_i| $ we get
\begin{align*}
    &\P\left[Y < \frac{\PPPP}{32\log n} |A_i|\right] = \P\left[Y < \frac{\PPPP}{16\log n} |A_i|-\frac{\PPPP}{32\log n} |A_i|\right] \\ 
    &\le \exp\left(-\frac{\PPPP |A_i|}{32^2(2+1/24) \cdot d_i\log^2 n}\right)
    \leq \exp\left(-\frac{\PPPP |A_i|}{2091 \cdot d_i\log^2 n}\right).
\end{align*}
Using that $d_i\leq 16 |A_i|$, and that $\PPPP > \log^3 n$, we get with high probability that
\[
    |\d I_{t+1} \setminus \d I_t| 
    \ge \frac{\PPPP}{32\log n}|A_i|.
\]

\paragraph*{Case \ref{case:medium_ii})}
It holds that
\begin{equation}
    \sum_{\substack{v \in V \text{ s.t.}\\ h_i(v) \ge d_i \log n/ 96\PPPP}} h_i(v) \ge \frac{1}{3} \Vol(A_i).
\end{equation}
Intuitively, we want to formalize the fact that the number of informed nodes in $A_i$ must grow even if the nodes are not likely to pull directly from $I_t$. We do this by looking at a subset $A_i'$ of the nodes in $A_i$ that have many connections. More formally we have the following fact, for a proof we refer to~\cite{GiakkoupisS12}. 

\begin{claim}[Section 3.4 in \cite{GiakkoupisS12}]\label{claim:someting_with_Ai}
    Let $\ell=d_i \log n/ 96\PPPP$. 
    Then there exist sets $A_i'\subseteq A_i$ and $V'\subseteq V$ such that each node in $A_i'$ has at least $d_i/8$ neighbors in $V'$ and each node in $V'$ has at least $\ell/8$ neighbors in $A_i'$, and the size of $A_i'$ is at least $|A_i'|\geq |A_i|/20$.
\end{claim}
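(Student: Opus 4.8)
The plan is to recast the claim as a bipartite ``graph cleaning'' statement. Consider the bipartite graph $H$ whose left part is $A_i$, whose right part is $V_1 := \{v \in V : h_i(v) \ge \ell\}$ with $\ell = d_i\log n/96\PPPP$, and whose edges are exactly the edges of $G$ running between $A_i$ and $V_1$. Counting the edges of $H$ by their right endpoints and using the hypothesis of Case~\ref{case:medium_ii}) gives $|E(H)| = \sum_{v\in V_1} h_i(v) \ge \tfrac13\Vol(A_i)$. Since every node of $A_i$ has degree in $[d_i,2d_i)$, I record the bounds $d_i|A_i| \le \Vol(A_i) < 2d_i|A_i|$, the double-counting identity $\sum_{v\in V} h_i(v) = \Vol(A_i)$, and the resulting a priori bound $|V_1| \le \Vol(A_i)/\ell$ (each right vertex carries at least $\ell$ edges).

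First I would run the standard alternating pruning on $H$: repeatedly delete any left vertex that currently has fewer than $d_i/8$ neighbors in $H$, and any right vertex that currently has fewer than $\ell/8$ neighbors in $H$, stopping when no such vertex remains (this terminates since each step removes a vertex). Let $A_i' \subseteq A_i$ and $V' \subseteq V_1 \subseteq V$ be the surviving parts. By the stopping condition, every surviving left node has at least $d_i/8$ neighbors in $V'$ and every surviving right node has at least $\ell/8$ neighbors in $A_i'$, which are precisely the two degree conditions demanded by the claim. It then remains only to show that the procedure does not delete everything and that $|A_i'|$ stays a constant fraction of $|A_i|$.

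For this I would bound the number of edges destroyed. Each left deletion removes fewer than $d_i/8$ edges and there are at most $|A_i|$ left vertices, so left deletions cost fewer than $\tfrac{d_i}{8}|A_i| \le \tfrac18\Vol(A_i)$ edges; each right deletion removes fewer than $\ell/8$ edges, and with $|V_1| \le \Vol(A_i)/\ell$ the right deletions cost fewer than $\tfrac{\ell}{8}\cdot\tfrac{\Vol(A_i)}{\ell} = \tfrac18\Vol(A_i)$ edges. Hence at least $\tfrac13\Vol(A_i)-\tfrac14\Vol(A_i) = \tfrac{1}{12}\Vol(A_i) > 0$ edges survive, so $A_i'$ and $V'$ are nonempty and the degree conditions are non-vacuous. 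Finally, since each surviving left vertex has degree below $2d_i$, the surviving edges number at most $2d_i|A_i'|$; combining this with the surviving-edge lower bound and $\Vol(A_i)\ge d_i|A_i|$ yields $|A_i'| \ge \Vol(A_i)/(24 d_i) \ge |A_i|/24$, a constant fraction of $|A_i|$, with the precise constant $|A_i|/20$ following from the same counting as in \cite{GiakkoupisS12}.

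The step I expect to be the main obstacle is the two-sided accounting: unlike a one-sided cleanup, deletions on the right can trigger further deletions on the left and vice versa, so both kinds of edge loss must be controlled simultaneously against the same initial budget $\tfrac13\Vol(A_i)$. The quantitative device that makes this work is restricting the right part to the high-connectivity set $V_1$, where the Case~\ref{case:medium_ii}) mass is concentrated; this is what supplies the a priori bound $|V_1|\le\Vol(A_i)/\ell$, without which the right deletions could in principle destroy more edges than $H$ even contains. The remaining constant-chase is routine and I would defer its exact form to \cite{GiakkoupisS12}.
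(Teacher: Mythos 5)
Your pruning argument has the right structure, and in fact the paper itself offers no proof of this claim at all---it defers entirely to \cite{GiakkoupisS12}, where the argument is the same alternating bipartite cleanup you describe, driven by the Case~\ref{case:medium_ii}) hypothesis that you correctly identify as the implicit premise. Your individual steps all check out: $|E(H)|=\sum_{v\in V_1}h_i(v)\ge\tfrac13\Vol(A_i)$, the a priori budget $|V_1|\le\Vol(A_i)/\ell$, the cost of strictly less than $\tfrac18\Vol(A_i)$ for each side of the pruning, the surviving edge mass of at least $\tfrac1{12}\Vol(A_i)$, and the conversion to $|A_i'|$ via $|E'|<2d_i|A_i'|$ and $\Vol(A_i)\ge d_i|A_i|$.

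The one place you fall short of the statement as written is the constant: your accounting yields $|A_i'|>|A_i|/24$, not $|A_i|/20$, and the closing appeal to ``the same counting as in \cite{GiakkoupisS12}'' does not actually close that gap. Even the sharper bookkeeping that charges left deletions only to $A_i\setminus A_i'$ improves $1/24$ only to $2/45$, still below $1/20$; and one cannot simply relax the pruning thresholds to $d_i/16$ and $\ell/16$ to buy a larger surviving fraction, because the guarantee of at least $d_i/8$ neighbors in $V'$ is consumed downstream in the push-probability estimate $\tfrac{d_i/8-d_i/16}{2d_i}\ge\tfrac1{32}$, which would degenerate. That said, the discrepancy is immaterial to the paper: the claim is used only to conclude $|I_{t+r}\setminus I_t|\ge|A_i'|$, and the final bound takes a minimum with $1/128$, so a guarantee of $|A_i|/24$ serves exactly as well as $|A_i|/20$. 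With the constant restated as $1/24$ (or with the precise accounting recovered from \cite{GiakkoupisS12}), your proof is complete and is essentially the intended one.
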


We will show that each node of $s\in A_i'$ gets informed in $O(\log_\PPPP n)$ rounds with high probability. By \Cref{lm:symmetry}, this is the same as a rumor from $s$ reaching $I_t$ after $O(\log_\PPPP n)$ rounds. First, we show that a rumor started at $s\in B$ reaches $\ell/1024$ nodes in $A_i'$ in $O(\log_\PPPP n)$ rounds w.h.p. Note that $\ell/1024\geq 1$ since $d_i \geq 1024\cdot 96 \PPPP$.
Then the probability that one of these nodes pushes the rumor to $I_t$ in the next $c\cdot 2048\cdot 96$ rounds is at least
\begin{align*}
    1-\left(1-\frac{1}{2d_i}\right)^{\PPPP \frac{\ell}{1024}c\cdot 2048\cdot 96}=  1-\left(1-\frac{1}{2d_i}\right)^{c\cdot2d_i\log n}\ge 1-n^{-c},
\end{align*}
where the equality follows from the definition of $\ell$.

To show that a rumor from $s\in A_i'$ reaches $\ell/1024$ nodes in $A_i'$ in $O(\log_\PPPP n)$ rounds, we show that in $\log_\PPPP n$ phases of $O(1)$ rounds, the number of informed nodes in $A_i'$ grows by a factor $\PPPP$. We do this in two steps. First we show that by push $A_i'$ informs nodes in $V'$, and then by pull nodes from $A_i'$ pull the rumor from $V'$.

Let $m$ denote the number of informed nodes of $A_i'$ at the start of the phase. We show that by $O(1)$ push rounds we have at least $\min\{\PPPP m,\ell/16\}$ informed nodes in $V'$. Suppose we have less than  $\min\{\PPPP m,\ell/16\}$ informed nodes $V'_i$ in $V'$ after $i$ pushes. 
Then the probability of a successful push from $u\in A_i'$ is at least
\begin{align*}
    \frac{|(N(u)\cap V')\setminus V'_i|}{\deg(u)} &\geq \frac{|N(u)\cap V'|-| V'_i|}{\deg(u)}\geq  \frac{d_i/8 - \min\{\PPPP m,\ell/16\}}{2d_i}\\
    &\geq \frac{d_i/8 -d_i/16}{2d_i} \geq \frac{1}{32}.
\end{align*}
So the expected number of informed nodes in $V'$ after 1 round is at least $\min\left\{\tfrac{m \PPPP}{32},\ell/16\right\}$.
Since these pushes are negatively correlated, Chernoff, \Cref{thm:chernoff bound}, gives that we have with high probability that the number of informed nodes in $V'$ after 1 round is at least 
\begin{equation*}
    \min\left\{\frac{m \PPPP}{64},\ell/16\right\}.
\end{equation*}

Now we consider pulling from $V'$. Suppose $m'$ nodes $V'_{\text{inf}}$ in $V'$ are informed, now the expected number of informed nodes in $A_i'$ after $96$ pull rounds is at least
\begin{align*}
    &\sum_{u\in A_i'} 1-\left(1-\frac{|N(u)\cap V'_{\text{inf}}|}{\deg(u)}\right)^{96\PPPP } 
    = \sum_{u\in A_i'} 1-\left(1-\frac{|N(u)\cap V'_{\text{inf}}|}{\deg(u)}\right)^{d_i \log n/\ell }\\
    \geq &\sum_{u\in A_i'} 1-\left(1-\frac{|N(u)\cap V'_{\text{inf}}|}{2d_i}\right)^{\tfrac{d_i}{\ell}} 
    \geq \sum_{u\in A_i'}\frac{d_i}{\ell}\frac{|N(u)\cap V'_{\text{inf}}|}{4d_i}\\
    = &\frac{1}{4\ell}\sum_{v\in V'_{\text{inf}}}|N(v)\cap A_i'|
    \geq \frac{1}{4\ell}\sum_{v\in V'_{\text{inf}}}\frac{\ell}{8}
    = \frac{|V'_{\text{inf}}|}{32}, 
\end{align*}
where the second inequality uses that $|N(u)\cap V'_{\text{inf}}|\leq |V'_{\text{inf}}|\leq \ell/16$. So we conclude that in expectation, we have at least $\min\left\{\frac{m \PPPP}{32\cdot 64},\frac{\ell}{32\cdot 16}\right\}$ informed nodes in $A_i'$.
Since all pulls are independent, Chernoff, \Cref{thm:chernoff bound}, gives us that with high probability we have at least 
\begin{equation*}
    \min\left\{\frac{m \PPPP}{4096},\frac{\ell}{1024}\right\}.
\end{equation*}
informed nodes in $A_i'$.

We conclude that in $r=O(\log_\PPPP n)$ rounds we have 
\begin{align*}
    |I_{t+r}\setminus I_t| \geq |A_i'|\geq \frac{|A_i|}{20},
\end{align*}
where the last inequality comes from \Cref{claim:someting_with_Ai}.

\paragraph*{Case \ref{case:medium_iii})}
None of the previous conditions holds, i.e., we have that
\[
    \sum_{\substack{v \in S_t \text{ s.t.}\\ h_i(v) \le d_i \log n/\PPPP}} h_i(v) < \frac{1}{2}\Vol(A_i)
    \qquad\text{and}\qquad
    \sum_{\substack{v \in V \text{ s.t.}\\ h_i(v) \ge d_i  \log n/ 96\PPPP}} h_i(v)
    < \frac{1}{3} \Vol(A_i).
\]

By using the two above conditions on the volume of $A_i$, we can see that the number of edges going from $A_i$ to some $v \in V \setminus S_t$ with $h_i(v) < d_i / 96\PPPP$ is at least a constant fraction of the volume of $A_i$, namely
\begin{align*}
    \sum_{\substack{v \in V \setminus S_t \text{ s.t.}\\ h_i(v) < d_i  \log n/ 96\PPPP}} h_i(v) 
    &= \sum_{v\in V} h_i(v) 
    - \sum_{\substack{v\in V \text{ s.t.}\\ h_i(v)\ge d_i  \log n/ 96\PPPP}} h_i(v) 
    - \sum_{\substack{v\in S_t \text{ s.t.}\\ h_i(v) < d_i  \log n/ 96\PPPP}} h_i(v)
    \\
    &> \Vol(A_i) 
    - \sum_{\substack{v\in V \text{ s.t.}\\ h_i(v)\ge d_i  \log n/ 96\PPPP}} h_i(v) 
    - \sum_{\substack{v\in S_t \text{ s.t.}\\ h_i(v) \le d_i  \log n/\PPPP}} h_i(v)
    \\ 
    &\geq \frac{1}{6} \Vol(A_i).
\end{align*}
Moreover, the number of such edges going to $\d I_t$ is at most $\tfrac{d_i \log n}{96\PPPP} |\d I_t|$, since $h_i(v) < d_i \log n/96\PPPP$.
Therefore, using the condition on $|A_i|$ in \cref{eq: A_i good index}, that $\deg(u) \ge d_i$ for every $u \in A_i$, and that $\PPPP>\log^2 n$, we get
\begin{equation}\label{eq: lb on cut Ai/It}
    |E(A_i,I_t)| \ge \frac{1}{6}\Vol(A_i) - \frac{d_i \log n}{96\PPPP} |\d I_t|
    \ge \frac{d_i |A_i|}{6} - \frac{d_i \log n}{96\PPPP} (4\log n \cdot |A_i|)
    \ge \frac{d_i |A_i|}{8}.
\end{equation}

Let $B \subseteq A_i$ be the set of nodes in $A_i$ that have at least $d_i/16$ neighbors in $I_t$.
Then, using that $\deg(u) < 2d_i$ for every $u \in A_i$, we get
\begin{equation}\label{eq: ub on cut Ai/It}
    |E(A_i,I_t)| \le |B| \cdot 2d_i + (|A_i|-|B|) \frac{d_i}{16}.
\end{equation}
By combining \cref{eq: lb on cut Ai/It,eq: ub on cut Ai/It} it follows that
\(
    |B| \ge \frac{|A_i|}{32}.
\)

Note that each node $u \in B$ in the set has a fraction of at least $\frac{d_i/16}{\deg(u)} \ge \frac{d_i/16}{2d_i} = 1/32$ neighbors in $I_t$.
We can then apply \cref{lem:lemma 3.5} and get that with high probability at least $|B|/4 \ge |A_i|/128$ nodes in $A_i$ pull the rumor from $I_t$ in at most $\lceil 1/\lceil  \PPPP/32 \rceil\rceil = 1$ round, since $\PPPP>\log^2 n$.
Hence with high probability it follows that
\[
    |I_{t+1} \setminus I_t| 
    \ge \frac{|A_i|}{128}.
\]

\subsubsection{Sets with high degree nodes}\label{sec:high}
Recall that in this case every $u \in A_i$ is such that
\[
    d_i \ge 2 |\d I_t|.
\]

As in the previous case, also here nodes can either contribute directly to the growth of $I_t$ or to that of the boundary $\d I_t$.
Recall that $S_t := V \setminus (I_t \cup \d I_t)$.
Let us denote the set of nodes in $A_i$ that have more neighbors toward $S_t$ than $I_t$ as
\[
    B := \{ u \in A_i : |N(u) \cap S_t| \ge \PPPP|N(u) \cap I_t| \}.
\]
We distinguish three cases:
\begin{enumerate}[i)]
    \item $|B| \ge \frac{1}{2}|A_i|$, and $\deg(u)\geq \sqrt{\PPPP}|B|$. \label{case:high_i}
    \item $|B| \ge \frac{1}{2}|A_i|$, $\deg(u)< \sqrt{\PPPP}|B|$.\label{case:high_ii}
    %     and $\sum_{v\in V : h_i(v) \geq k/\PPPP} h_i(v) \geq \tfrac{1}{16}\Vol(B)$.
    %  \item $|B| \ge \frac{1}{2}|A_i|$, $\deg(u)< \sqrt{\PPPP}|B|$,
        % and $\sum_{v\in V : h_i(v) \geq k/\PPPP} h_i(v) < \tfrac{1}{16}\Vol(B)$.
    \item $|B| < \frac{1}{2}|A_i|$. \label{case:high_iii}
\end{enumerate}
As before in the remainder of this subsection we prove separately for each of these cases that either $I_t$ or $\d I_t$ grows sufficiently. 
In particular in cases \ref{case:high_i}) and \ref{case:high_ii}) the boundary grows sufficiently, while in case \ref{case:high_iii}) it is $I_t$ to grow.

\paragraph*{Case \ref{case:high_i})} 
It holds that
\[
    |B| \ge \frac{1}{2}|A_i| 
    \quad\text{and}\quad
    \deg(u)\geq \sqrt{\PPPP}|B|.
\]

In this case, we start by showing that after $\lceil \frac{|I_t|+|\d I_t|}{|B| \log n}\rceil$ rounds, we inform at least one node $u\in B$ with high probability. 

The probability that a fixed node $v\in I_t$ does not push to any neighbor in $B$ is $1-\tfrac{h_i(v)}{\deg(v)}\leq 1-\tfrac{h_i(v)}{|I_t|+|\d I_t|}$. 
Hence, the probability that no $v\in I_t$ pushes to $B$ within $\frac{|I_t|+|\d I_t|}{|B|\log n}$ rounds of \pushpull{$\PPPP$} is upper bounded by
\[
    \prod_{v\in I_t}\left(1-\frac{h_i(v)}{|I_t|+|\d I_t|}\right)^{\PPPP \frac{|I_t|+|\d I_t|}{|B|\log n}} \leq e^{-\PPPP/\log n \frac{1}{|B|} \sum_{v\in I_t}h_i(v)} \leq e^{-\PPPP/\log n}.
\]
This uses that $\sum_{v\in I_t}h_i(v)= \sum_{v\in A_i}|N(u)\cap I_t|\geq |A_i| \geq |B|$, since $A_i\subseteq \d I_t$.

Hence, the probability that push informs at least one node in these rounds is at least $1-e^{-\PPPP/\log n}$, namely with high probability, since $\PPPP > \log^2 n$.

Therefore, since $\deg(u)\geq \sqrt{\PPPP}|B|$, and by definition of the set $B$, we get with high probability that
\[
    |\d I_{t+r} \setminus \d I_t |
    \ge \frac{\sqrt{\PPPP}|B|}{2}
    \ge \frac{\sqrt{\PPPP}|A_i|}{4}
    \ge \frac{\sqrt{\PPPP}|\d I_t|}{16 \log n}
\]
for a number of rounds $r=\lceil \frac{|I_t|+|\d I_t|}{|B| \log n}\rceil=O(1)$ since $|B|\geq \tfrac{1}{2}|A_i|\geq \tfrac{|\d I_t|}{8\log n}$, where $|A_i|\geq \tfrac{|\d I_t|}{4\log n}$ since we consider $i\in \mathcal I$.

\paragraph*{Case \ref{case:high_ii})} 
It holds that
\[
    |B| \ge \frac{1}{2}|A_i|
    \quad\text{and}\quad
    \deg(u)< \sqrt{\PPPP}|B|.
\]

We start by noting that the probability that a fixed node $u \in A_i$ pulls the rumor from $I_t$ in a given round is 
\begin{equation*}%\label{eq:prob p u pulls from It}
    1-\left(1- \frac{|N(u)\cap I_t|}{\deg(u)}\right)^\PPPP 
    \geq 1-\left(1- \frac{1}{\deg(u)}\right)^\PPPP 
    \geq 1-\left(1- \frac{1}{2d_i}\right)^\PPPP 
    \ge \frac{\PPPP}{4d_i}=:p,
\end{equation*}
where we use in the last inequality that $2d_i > \PPPP$. 

Formally, let $X_u$ be the 0/1 random variable that is 1 iff $u$ pulls the rumor. Note that the $X_u$'s are independent since they come from pull operations. 
Let us pessimistically assume that such a probability exactly equals $p$ for every node $u$.
For each node $v \in N(B) \cap S_t$ let $Y_v$ be the 0/1 random variable that is 1 iff $v$ has at least a neighbor $u$ with $X_u=1$. 

We look at $v\in N(B)\cap S_t$ and consider the probability that no neighbor of $v$ in $B$ pulls the rumor, which is at least
\begin{equation*}
    1-\prod_{u\in B\cap N(v)}\left(1-\frac{\PPPP}{4d_i}\right) 
    \geq 1-\exp\left(-\PPPP/4\sum_{u\in B\cap N(v)}\frac{1}{d_i}\right)
    \geq \frac{\PPPP}{8} \sum_{u\in B\cap N(v)}\frac{1}{d_i},
\end{equation*}
where the last inequality follows from the assumption on nodes $u\in B$. 
The expected number of nodes $v\in N(B)\cap S_t$ that join $\d I_t$ is now at least
\begin{equation*}
    \sum_{v\in N(B)\cap S_t} \frac{\PPPP}{8} \sum_{u\in B\cap N(v)}\frac{1}{d_i} 
    = \frac{\PPPP}{8}\sum_{u\in B}\frac{|N(u)\cap S_t|}{d_i}
    \geq \frac{\PPPP}{16}|B|,
\end{equation*}
where the second to last equality follows by definition of $B$ and the fact that $d_i>2|\d I_t|$.

We now show that the expected number of new nodes in the boundary is not far from that lower bound. Formally we  lower bound $Y:= \sum_{v\in N(B)\cap S_t} Y_v$.
Hereto, we use the method of bounded differences (\cref{thm:bounded differences}) with $R_i = X_u$, $f(\{X_u\}_{u \in B})=Y$, and $b \le 2d_i$.
Since $\E[Y] \ge \frac{\PPPP}{16}|B|$, with $\lambda = \frac{\PPPP}{32}|B|$ we get
\[
    \P\left(Y < \frac{\PPPP}{32}|B|\right) \le \exp\left(-\frac{\PPPP |B|}{2091 \cdot d_i}\right).
\]
Since $\deg(u)<2d_i < 2\sqrt{\PPPP}|B|$ and $\PPPP > \log^2 n$, the previous bound holds with high probability.
Therefore with high probability we have
\[
    |\d I_{t+1} \setminus \d I_t| \ge \frac{\PPPP}{64}|A_i|.
\]

\paragraph*{Case \ref{case:high_iii})}
It holds that 
\[
    |B| < \frac{1}{2}|A_i|.
\]

It follows directly from the above condition that $|A_i \setminus B| > |A_i|/2$ and by the definition of $B$ that each node $u \in A_i \setminus B$ iff $|N(u)\cap S_t|< k |N(u)\cap I_t|$. Recall that we are in the case of high degree nodes: $|N(u)|\ge 2|\d I_t|$, equivalently, we have $|\d I_t|\le \tfrac{1}{2}|N(u)|$. Hence $|N(u)\cap \d I_t| \le |\d I_t|\le \tfrac{1}{2}|N(u)|$, and $|N(u)\cap I_t|+|N(u)\cap S_t|\ge \tfrac{1}{2}|N(u)|$, so $|N(u)\cap \d I_t| \le|N(u)\cap I_t|+|N(u)\cap S_t|$. Combining this, we obtain
\begin{align*}
    &\frac{|N(u)\cap I_t| }{|N(u)|} 
    = \frac{|N(u)\cap I_t| }{|N(u)\cap I_t|+|N(u)\cap S_t|+ |N(u)\cap \d I_t|} \\
    &\ge \frac{|N(u)\cap I_t| }{2(|N(u)\cap I_t|+|N(u)\cap S_t|)} 
    > \frac{|N(u)\cap I_t| }{2(k+1)|N(u)\cap I_t|}= \frac{1}{2(k+1)}.
\end{align*}

We conclude that $u$ has a $\tfrac{1}{2(k+1)}$-fraction of all its neighbors in $I_t$.
Then we apply \cref{lem:lemma 3.5} and get that with high probability at least one fourth of the nodes in $|A_i \setminus B|$ pull the rumor from $I_t$ in at most $\lceil \tfrac{2(k+1)}{k}\rceil=O(1)$ rounds.
Hence we get with high probability that
\[
    |I_{t+r} \setminus I_t|
    \ge \frac{|A_i \setminus B|}{4}
    \ge \frac{|A_i|}{8}
\]
for a number of rounds $r=O(1)$.

 \newpage
\section{On Small-Set Vertex Expanders with Expansion Larger than 1}\label{sec:expanders}
% Recall from \Cref{def:vtx_exp} that $G$ is a $(\phi,\alpha)$-vertex expander if 
% \begin{equation*}
%     \min_{\substack{S \subseteq V \text{ s.t.}\\ 0<|S|\leq \alpha n}} \frac{|\d S|}{|S|} \geq \phi.
% \end{equation*}

As previously discussed, when the parameter $\alpha < 1/2$ we talk about small-set expanders.
These graphs are substantially different from the classical notion of expanders, where $\alpha = 1/2$.
In fact, if $\alpha = 1/2$ it holds that the vertex-expansion of the graph is bounded, having $\phi \le 1$.
The following lemma formalizes this fact. 

\begin{lemma}\label{lm:Imp_Exp}
    Let $\phi> 0$. There are no $(\phi,\alpha)$-expanders for $\alpha>\tfrac{1}{1+\phi}$.
\end{lemma}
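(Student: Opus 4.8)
The plan is to exploit the only truly elementary bound available for vertex boundaries: since $\partial S = N(S)\cap \overline S \subseteq \overline S$, every nonempty proper subset $S\subsetneq V$ satisfies $|\partial S|\le |\overline S| = n-|S|$. First I would record the consequence
\[
    \frac{|\partial S|}{|S|}\le \frac{n-|S|}{|S|} = \frac{n}{|S|}-1,
\]
and observe that the right-hand side falls strictly below $\phi$ exactly when $|S| > \frac{n}{1+\phi}$. This is the whole analytic content of the statement; everything else is a matter of producing an admissible set $S$ of the right size.

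Next I would use the hypothesis $\alpha > \frac{1}{1+\phi}$ to exhibit such a witness. An admissible set must satisfy $0<|S|\le \alpha n$, and to force a small boundary ratio I want $|S|>\frac{n}{1+\phi}$; together these ask for an integer $s$ in the half-open interval $\left(\frac{n}{1+\phi},\,\alpha n\right]$, whose length is $n\bigl(\alpha-\frac{1}{1+\phi}\bigr)>0$. Choosing any vertex set $S$ with $|S|=s$, the first step gives $\frac{|\partial S|}{|S|}\le \frac{n}{s}-1<\phi$, so the minimum in \Cref{def:vtx_exp} is strictly less than $\phi$. Hence no graph $G$ on $n$ nodes can satisfy the expander inequality, which is precisely the claim.

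The one delicate point — and the step I would treat most carefully — is guaranteeing that the interval $\left(\frac{n}{1+\phi},\,\alpha n\right]$ genuinely contains an integer, i.e.\ that a legitimate set size exists at all. This is automatic once $n$ is large enough that the interval has length at least one, namely $n\ge \bigl(\alpha-\frac{1}{1+\phi}\bigr)^{-1}$; concretely one may take $|S|=\lfloor \alpha n\rfloor$ and verify $\lfloor \alpha n\rfloor>\frac{n}{1+\phi}$, which follows from $\alpha n-1\ge \frac{n}{1+\phi}$. I would also note that the threshold is sharp rather than an artifact of the argument: the complete graph achieves $|\partial S| = n-|S|$ for every $S$ (cf.\ \Cref{lm:Complete_Exp}), so at $\alpha=\frac{1}{1+\phi}$ a $(\phi,\alpha)$-expander does exist, confirming that $\frac{1}{1+\phi}$ is exactly the borderline and that the inequality $|\partial S|\le n-|S|$ cannot be improved in general.
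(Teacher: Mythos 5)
Your argument is correct and is essentially the paper's own proof: both select a set $S$ with $\tfrac{n}{1+\phi} < |S| \le \alpha n$ and use $|\partial S| \le n-|S|$ (equivalently, $|S|+|\partial S| \le n$) to force the expansion ratio below $\phi$. Your added care about whether the interval $\left(\tfrac{n}{1+\phi}, \alpha n\right]$ actually contains an integer is a point the paper silently skips and does require $n$ large enough, but it does not change the approach.
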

\begin{proof}
     Let $S$ be any set of size $\tfrac{n}{1+\phi}< |S| \leq \alpha n$. We demand that $\phi(S)\geq \phi$, in other words:
\[
    |S\cup N(S)| = |S|+|\d S| \geq |S| + \phi|S|>(1+\phi)\tfrac{n}{1+\phi}=n,
\]
which clearly is not possible. 
\end{proof}

Now the follow-up question is: why do we not simply take $\alpha=\tfrac{1}{1+\phi}$? The answer is that this is a very restricted graph class as we show next. 
Recall that the neighborhood of $S$ is defined as $N(S):=  \{v\in V:\exists s\in S,\,\{s,v\}\in E\}$. We now define the \emph{inclusive neighborhood} of a set $S$ as $N[S]:= S\cup N(S)$. 
We define the $i$-th neighborhood as $N^{i}[v]:=\{ u\in V: d(u,v)\leq i\}=N[N^{i-1}[v]]$ where $d(u,v)$ is the shortest-path distance between $u$ and $v$. 

\begin{lemma}\label{lm:Restricted_Exp}
    Let $\phi\geq 1$. If $G$ is a $\left(\phi,\tfrac{1}{1+\phi}\right)$-expander, then $G$ has diameter at most $2$ and minimum degree $\tfrac{\phi}{1+\phi}n$.
\end{lemma}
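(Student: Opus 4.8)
The plan is to exploit the extremal nature of the parameter $\alpha = \tfrac{1}{1+\phi}$: at this threshold, every set of size exactly $\tfrac{n}{1+\phi}$ must have its inclusive neighborhood fill essentially the whole graph, since $|N[S]| = |S| + |\partial S| \ge (1+\phi)|S| = n$. This leaves almost no room for vertices to be far from any given set, which is exactly what forces both the small diameter and the large minimum degree.

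First I would establish the minimum degree bound. Take any vertex $v$ and consider the singleton $S = \{v\}$. Since a singleton has size $1 \le \tfrac{n}{1+\phi}$ (assuming $n \ge 1+\phi$, which holds in the relevant regime), the expansion property applies and gives $\deg(v) = |\partial\{v\}| \ge \phi$. This is too weak, so instead I would apply expansion to a carefully chosen set containing $v$. The cleaner route: take any set $S$ with $|S| = \tfrac{n}{1+\phi}$; by the argument above $|N[S]| \ge n$, so $N[S] = V$, meaning every vertex outside $S$ is adjacent to some vertex in $S$. To extract a degree bound on a single vertex, I would instead consider the complement: for a fixed vertex $v$, look at the set of vertices at distance $\ge 2$ from $v$, i.e. $V \setminus N[v]$, and argue its size is constrained. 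Specifically, if $v$ had degree less than $\tfrac{\phi}{1+\phi}n$, then $|N[\{v\}]| < 1 + \tfrac{\phi}{1+\phi}n$; I would embed $v$ into a set $S$ of size $\tfrac{n}{1+\phi}$ whose non-neighbors are maximized, and derive a contradiction with $|N[S]| = n$. The key identity to keep in mind is $|\partial S| \ge \phi|S| = \phi \cdot \tfrac{n}{1+\phi} = \tfrac{\phi}{1+\phi}n$.

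Next, for the diameter bound, I would argue that any two vertices $u, w$ are at distance at most $2$. Suppose not, so $N[u] \cap N[w] = \emptyset$ and $w \notin N[u]$. Consider a set $S \ni u$ of size $\tfrac{n}{1+\phi}$ chosen to avoid $N[w]$ as much as possible. Since $|N[S]| = n$, the vertex $w$ must lie in $N[S]$, forcing $w$ to be adjacent to some $s \in S$; combined with controlling the structure so that $s \in N[u]$, this yields a path of length $2$ from $u$ to $w$, contradicting $\dist(u,w) \ge 3$. The cleanest version: since $|N[S]| \ge n$ for any $S$ of size $\tfrac{n}{1+\phi}$, the $i$-th neighborhood grows so fast that $N^2[v] = V$ for every $v$.

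The main obstacle I anticipate is the bookkeeping in choosing the sets $S$ to realize the extremal bounds exactly, particularly handling the integrality of $\tfrac{n}{1+\phi}$ (which need not be an integer) and making sure the sets used to bound a single vertex's degree genuinely achieve the worst case. The conceptual content is entirely captured by the inequality $|N[S]| \ge (1+\phi)|S|$ saturating to $n$; the delicate part is translating this global statement about sets of the critical size into a pointwise statement about individual vertices and pairs. I would expect the diameter-$2$ claim to follow fairly directly once the set-level saturation is established, with the minimum-degree claim requiring the more careful extremal set construction.
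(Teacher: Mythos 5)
Your central idea---that for any set $S$ of the critical size $\tfrac{n}{1+\phi}$ one has $|N[S]| = |S| + |\partial S| \ge (1+\phi)|S| = n$, hence $N[S]=V$---is exactly the engine of the paper's proof as well, so the conceptual core is right. However, both of your concrete derivations have gaps as written. For the minimum degree, ``embedding $v$ into a set $S$ of size $\tfrac{n}{1+\phi}$'' and invoking $N[S]=V$ yields no contradiction: when $v\in S$, the other members of $S$ can supply all of $N[S]$, so nothing constrains $\deg(v)$. The correct move (which you name when you mention $V\setminus N[v]$ but then do not execute) is to place $S$ entirely \emph{inside} the non-neighborhood: if $|V\setminus N[v]|\ge \tfrac{n}{1+\phi}$, a critical-size $S\subseteq V\setminus N[v]$ satisfies $N[S]=V\ni v$, and since $v\notin S$ this forces an edge from $v$ to one of its non-neighbors, a contradiction; hence $|V\setminus N[v]|<\tfrac{n}{1+\phi}$ and $\deg(v)\ge n-\tfrac{n}{1+\phi}=\tfrac{\phi}{1+\phi}n$.

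For the diameter, your first sketch (choose $S\ni u$ avoiding $N[w]$) does not close: the vertex $s\in S$ adjacent to $w$ need not lie in $N[u]$, and you give no mechanism to force that. Your ``cleanest version'' ($N^2[v]=V$ because neighborhoods grow fast) is also not automatic: to find a critical-size subset of $N[v]$ you need $|N[v]|\ge\tfrac{n}{1+\phi}$, which is precisely the minimum-degree bound (using $\phi\ge1$ so that $\tfrac{\phi}{1+\phi}n\ge\tfrac{n}{1+\phi}$). So your plan only works if you prove minimum degree \emph{first} and then deduce $N^2[v]\supseteq N[S]=V$ for some critical-size $S\subseteq N[v]$. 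Once reorganized this way, your route is valid and in fact arguably cleaner than the paper's, but note it is the reverse of the paper's order: the paper first proves diameter at most $2$ by a shell-counting argument (showing $|N^{D-2}[v]|$ and $|N^{D}[v]\setminus N^{D-1}[v]|$ are each below $\tfrac{n}{1+\phi}$, so the intermediate shell is large and must see $v$), and only then reads off the degree bound from $|N^{2}[v]\setminus N[v]|<\tfrac{n}{1+\phi}$. The integrality issues you flag are real but are glossed over in the paper as well.
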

\begin{proof}
    Let $D$ denote the diameter of $G$. Let $u,v\in V$ be a pair of nodes with distance $D$. Then we have that $u\in N^D[v]\setminus N^{D-1}[v]$. That means that $N^{D-1}[v]\neq V$, so $|N^{D-2}[v]|< \tfrac{n}{1+\phi}$, and $|N^D[v] \setminus N^{D-1}[v]|< \tfrac{n}{1+\phi}$. That means that
    \begin{align*}
        |N^{D-1}[v]\setminus N^{D-2}[v]|&=n- |N^{D-2}[v]|-|N^{D}[v]\setminus N^{D-1}[v]|
        > n-2\tfrac{n}{1+\phi}
        =(\phi-1) \tfrac{n}{1+\phi}.
    \end{align*}
    But that means that the nodes at distance $D-1$, i.e., $N^{D-1}[v]\setminus N^{D-2}[v]$, have edges to the entire graph, so also to $v$. Hence $N^{D-1}[v]\setminus N^{D-2}[v]= N[v]\setminus v$, and thus $D=2$.

   For the minimum degree we now note that
   \[
       \deg(v) = |N(v)|= n- |N^{2}[v]\setminus N [v]|-1\geq n-\tfrac{n}{1+\phi}=\tfrac{\phi}{1+\phi}n, 
   \]
   since $|N^{2}[v]\setminus N[v]|< \tfrac{n}{1+\phi}$, so $|N^{2}[v]\setminus N[v]|+1 \leq \tfrac{n}{1+\phi}$.
\end{proof}
% \ttodo{include separate rumor spreading proof for this? showing it's roughly a complete graph. Note that \cite{Giakkoupis14} shows that normal push pull takes $O(\log n)$ rounds on graphs of diameter 2. Maybe that's a good starting point. }

So to make the definition less restrictive, we allow for $\alpha$ to take more values. However, for small values of $\alpha$, the expansion property becomes local, which does not help us for rumor spreading. 

\begin{lemma}\label{lm:Disc_Exp}
    Let $\alpha\leq \tfrac{1}{2+2\phi}$. For any even $n\in \N$, there exists a $(\phi,\alpha)$-expander $G$ that is not connected. 
\end{lemma}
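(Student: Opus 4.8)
The plan is to build $G$ as a disjoint union of two cliques. The point of the hypothesis $\alpha \le \tfrac{1}{2+2\phi}$ is that it rewrites as $\alpha n \le \tfrac{n/2}{1+\phi}$, so every admissible set (one of size at most $\alpha n$), once split across the two components, still has each part sitting below the clique expansion threshold $\tfrac{1}{1+\phi}$ guaranteed by \Cref{lm:Complete_Exp}. Thus the two components never need to talk to each other in order to expand, and disconnectedness costs us nothing.

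Concretely I would set $G := K_{n/2} \sqcup K_{n/2}$, two vertex-disjoint complete graphs on $n/2$ nodes each (using that $n$ is even). This graph is visibly disconnected, so it only remains to verify the $(\phi,\alpha)$-expansion. Take any $S \subseteq V$ with $0 < |S| \le \alpha n$ and write $S = S_1 \sqcup S_2$, where $S_j$ is the part of $S$ inside the $j$-th clique. Since there are no edges between the two components, the boundary decomposes as $\d S = \d S_1 \sqcup \d S_2$, with each $\d S_j$ computed inside its own clique. Inside $K_{n/2}$ one has $|\d S_j| = n/2 - |S_j|$ whenever $S_j \neq \emptyset$ (every remaining vertex of that component is adjacent to $S_j$), while $\d S_j = \emptyset$ when $S_j = \emptyset$.

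The key inequality is $|\d S_j| \ge \phi\,|S_j|$ for each $j$. Indeed $|S_j| \le |S| \le \alpha n \le \tfrac{n/2}{1+\phi}$ rearranges to $n/2 - |S_j| \ge \phi\,|S_j|$; equivalently, this is exactly the assertion that each clique $K_{n/2}$ is a $\bigl(\phi,\tfrac{1}{1+\phi}\bigr)$-expander (\Cref{lm:Complete_Exp}), valid since $\phi \le n/2 - 1$ is forced whenever $\alpha n \ge 1$. Summing over the two components then yields $|\d S| = |\d S_1| + |\d S_2| \ge \phi\bigl(|S_1| + |S_2|\bigr) = \phi\,|S|$, which is the required bound, and $G$ is disconnected, as desired.

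The only genuinely delicate point is the boundary bookkeeping: one must confirm that the global boundary of $S$ really is the disjoint union of the within-component boundaries (immediate here, since the two cliques share no edges) and treat the degenerate case where some $S_j$ is empty, in which case that component contributes $0$ to both sides and the per-component inequality is trivial. In the remaining edge case $\alpha n < 1$ there are no admissible sets at all, so the expansion condition holds vacuously and the same disconnected $G$ still works.
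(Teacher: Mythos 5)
Your proposal is correct and matches the paper's proof: both take $G=K_{n/2}+K_{n/2}$ and reduce the expansion bound to the rearrangement of $\alpha\le\tfrac{1}{2+2\phi}$ into $n/2-|S|\ge\phi|S|$. The only cosmetic difference is that you sum the boundary contributions of both components while the paper lower-bounds $|\d S|$ by the contribution of a single component via $|\d S|\ge \tfrac{n}{2}-|S|$; the underlying computation is identical.
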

\begin{proof}
    Let $G=K_{n/2}+K_{n/2}$ be the sum of two disjoint complete graphs, which is clearly not connected. 
    We show that $G$ is a $(\phi,\alpha)$-expander. 
    Let $S\subset V$ with $|S|\leq \alpha n$. We have 
    \begin{equation*}
        |\d S| \geq \frac{n}{2}-|S| \geq \frac{n}{2}-\alpha n 
        \ge \frac{n}{2}-\frac{n}{2+2\phi}
        = \frac{n}{2}\left(1-\frac{1}{1+\phi}\right) 
        = \frac{\phi n}{2+2\phi}
        \geq \phi \alpha n \geq \phi |S|. 
        \qedhere
    \end{equation*}
\end{proof}

We conclude that for rumor spreading we can focus on the regime $\tfrac{1}{2+2\phi}< \alpha \leq \tfrac{1}{1+\phi}$. We show that in this regime a $(\phi,\alpha)$-expander has small diameter. 

\begin{lemma}\label{lm:Exp_Diam}
    If $G$ is a $(\phi,\alpha)$-expander for $\alpha> \tfrac{1}{2+2\phi}$, then it has diameter at most $O(\log_{\phi}n)$.
\end{lemma}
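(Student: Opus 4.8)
The plan is to bound the diameter by a ball-growth argument performed from every vertex, combined with a symmetric intersection argument at the end. Fix a vertex $v$ and consider the neighborhood balls $N^i[v]$. The first step is to observe that small-set expansion forces geometric growth: as long as $|N^i[v]| \le \alpha n$, \Cref{def:vtx_exp} gives $|\d N^i[v]| \ge \phi\,|N^i[v]|$, and since $N^{i+1}[v] = N[N^i[v]] = N^i[v] \cup \d N^i[v]$ is a disjoint union, we get $|N^{i+1}[v]| \ge (1+\phi)|N^i[v]|$. (In particular $\d N^i[v]\neq\emptyset$ below the threshold, so the process never stalls and $G$ is automatically connected.) Starting from $|N^0[v]| = 1$, this yields $|N^i[v]| \ge (1+\phi)^i$ up to the first index $i_0$ at which the ball exceeds $\alpha n$, so $(1+\phi)^{i_0-1} \le \alpha n$ and hence $i_0 = O(\log_{1+\phi}(\alpha n))$. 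For $\phi>1$ we have $\log(1+\phi) \ge \log\phi$ and $\alpha n \le n$, so $i_0 = O(\log_{\phi} n)$.

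The crucial second step is to push past the threshold $\alpha n$, where the small-set guarantee no longer applies to the full ball; this is exactly where I would use the hypothesis $\alpha > \tfrac{1}{2+2\phi}$. Let $S = N^{i_0}[v]$, so $|S| > \alpha n$, and choose any subset $S' \subseteq S$ with $|S'| = \lfloor \alpha n\rfloor \le \alpha n$, to which expansion does apply. Since $S' \subseteq S$ implies $N[S'] \subseteq N[S]$, we obtain
\[
    |N^{i_0+1}[v]| = |N[S]| \ge |N[S']| = |S'| + |\d S'| \ge (1+\phi)|S'| \ge (1+\phi)\lfloor\alpha n\rfloor.
\]
Because $\alpha > \tfrac{1}{2+2\phi}$ is equivalent to $(1+\phi)\alpha > \tfrac12$, this forces $|N^{i_0+1}[v]| > n/2$ in a single additional step (and if $|S|$ already exceeds $n/2$ we are done one step earlier). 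Thus every ball of radius $R := i_0 + 1 = O(\log_\phi n)$ has size strictly greater than $n/2$.

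Finally, the diameter bound follows by a standard counting argument: for any two vertices $u,w$, the balls $N^R[u]$ and $N^R[w]$ each contain more than $n/2$ vertices, so $|N^R[u]| + |N^R[w]| > n$ and they cannot be disjoint. Picking a common vertex $x$ gives $d(u,w) \le d(u,x) + d(x,w) \le 2R$. Since this holds for every pair, $\operatorname{diam}(G) \le 2R = O(\log_\phi n)$.

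I expect the main obstacle to be precisely the transition at the threshold $\alpha n$: the hypothesis bounds $|\d S|$ only for $|S| \le \alpha n$, so the clean geometric growth simply stops there, and one must extract progress from a set that is no longer admissible. The resolution above—applying expansion to a maximal admissible subset $S'$ of the ball and using $N[S'] \subseteq N[S]$—is what converts the arithmetic condition $(1+\phi)\alpha > \tfrac12$ into net coverage beyond $n/2$. The only remaining technicalities are routine: the floor in $|S'| = \lfloor \alpha n\rfloor$ versus $\alpha n$, and the edge case where the ball overshoots $n/2$ before ever reaching the threshold, both of which affect only the hidden constants.
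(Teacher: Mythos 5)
Your proof is correct and follows essentially the same route as the paper: geometric ball growth via the expansion property while the ball is at most $\alpha n$, then one extra step applying expansion to an admissible subset of size $\alpha n$ to push past $n/2$ (using $(1+\phi)\alpha > \tfrac12$), and finally the standard two-ball intersection argument. The only differences are cosmetic (you track $(1+\phi)^i$ rather than $\phi^i$ and are slightly more explicit about the integrality of $\lfloor\alpha n\rfloor$).
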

\begin{proof}
    Let $v\in V$ be an arbitrary node, we show that $|N^{\lceil\log_{\phi}n\rceil}[v]|>  n/2$, hence for any pair of nodes $u,v\in V$, we have $N^{\lceil\log_{\phi}n\rceil}[u]\cap N^{\lceil\log_{\phi}n\rceil}[v]\neq \emptyset$, so there is a path from $u$ to $v$ of length at most $2\lceil\log_\phi n\rceil$. 

    By expansion we have that $|N^{i}[v]|\geq \phi^i$, if $|N^{i-1}[v]|\leq \alpha n$.
    Now if $|N^{\lceil\log_{\phi}n\rceil}[v]|\geq \phi^{\log_{\phi}n}=n$, we are done. 
    So suppose not, i.e., suppose $|N^{\lceil\log_{\phi}n\rceil-1}[v]|>  \alpha n$. 
    For every $S \subset N^{\lceil\log_{\phi}n\rceil-1}[v]$ with $|S|=\alpha n$ it holds that $|S \cup \d S| \ge (1+\phi) \cdot \alpha n$ since $G$ is a $(\phi, \alpha)$-expander.
    Hence $|N^{\lceil\log_{\phi}n\rceil}[v]|\ge |S \cup \d S| \geq  (1+\phi) \cdot \alpha n$.
    Since we assume $\alpha> \tfrac{1}{2+2\phi}$ we have that
    \(
        (1+\phi)\alpha n > \frac{1+\phi}{2+2\phi}n=\frac{n}{2},
    \)
    which completes the proof. 
\end{proof}

\subsection{Examples of Small-Set Vertex Expanders}
A trivial example of $(\phi,\alpha)$-expanders are complete graphs. 
\begin{lemma}\label{lm:Complete_Exp}
The complete graph on $n$ nodes is a $(\phi,\alpha)$-vertex expander for every $0< \phi \leq n-1$ and $\alpha \leq \tfrac{1}{1+\phi}$.
\end{lemma}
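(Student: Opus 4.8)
The plan is to verify the vertex-expansion condition of \Cref{def:vtx_exp} directly for the complete graph $K_n$, namely that every set $S$ with $0 < |S| \le \alpha n$ satisfies $|\d S|/|S| \ge \phi$. In $K_n$ every node is adjacent to every other node, so for any nonempty $S \subsetneq V$ the boundary is simply $\d S = N(S) \cap \overline{S} = \overline{S}$, giving $|\d S| = n - |S|$. Thus the expansion ratio is exactly $(n-|S|)/|S|$, and I would compute its minimum over the admissible range of $|S|$.

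Since $(n-|S|)/|S| = n/|S| - 1$ is strictly decreasing in $|S|$, the worst case (smallest ratio) over $0 < |S| \le \alpha n$ occurs at the largest admissible value $|S| = \alpha n$. Evaluating there gives the bound $|\d S|/|S| \ge (n - \alpha n)/(\alpha n) = (1-\alpha)/\alpha$. It then remains to check that the hypothesis $\alpha \le \tfrac{1}{1+\phi}$ forces this quantity to be at least $\phi$: indeed $\alpha \le \tfrac{1}{1+\phi}$ rearranges to $\tfrac{1-\alpha}{\alpha} \ge \phi$, since $\tfrac{1-\alpha}{\alpha} = \tfrac{1}{\alpha} - 1 \ge (1+\phi) - 1 = \phi$. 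This establishes the required inequality for the extremal set, and monotonicity guarantees it for all smaller sets as well.

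The only mild subtlety worth stating carefully is the range constraint: the definition requires $\phi \in (0,n)$ and $\alpha \in (0,\tfrac12]$, and the claimed parameter range $0 < \phi \le n-1$ with $\alpha \le \tfrac{1}{1+\phi}$ sits comfortably inside this (note $\tfrac{1}{1+\phi} \le \tfrac12$ precisely when $\phi \ge 1$, but for $\phi < 1$ one simply restricts attention to $\alpha \le \tfrac12$, where the computation is unchanged). I would also note in passing that $|S| = \alpha n$ need not be an integer, so strictly one takes $|S| = \lfloor \alpha n \rfloor$; this only \emph{increases} the ratio relative to the real-valued bound, so the inequality is preserved.

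I do not anticipate any genuine obstacle here — this is a one-line computation once one observes that boundaries in $K_n$ are complements and that the ratio is monotonically decreasing in $|S|$. The statement is deliberately positioned as a ``trivial example'' after the structural lemmas, and its role is to exhibit a concrete family attaining the diameter bound of \Cref{lm:Exp_Diam} (indeed $K_n$ has diameter $1$), so the proof should be correspondingly short.
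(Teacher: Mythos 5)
Your argument is correct and matches the paper's proof essentially verbatim: both observe that in $K_n$ the boundary of any nonempty $S$ is $\overline{S}$, so $|\d S|/|S| = n/|S| - 1 \ge 1/\alpha - 1 \ge \phi$. The extra remarks on monotonicity and integrality are fine but not needed beyond this one-line computation.
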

\begin{proof}
Let $S\subseteq V$ with $|S|\leq \alpha n$. Then we have
\[
    \frac{|\d S|}{|S|}= \frac{n-|S|}{|S|}= \frac{n}{|S|}-1 \geq \frac{1}{\alpha} -1\geq \phi. 
    \qedhere
\]
\end{proof}

However, we can show also the existence of significantly sparser graphs that are $(\phi,\alpha)$-expanders. We consider Erd\H{o}s-Rényi random graphs $G(n,p)$ on $n$ nodes, where there is an edge between any pair of nodes with probability $p$. 
We also know that w.h.p.\ the graphs $G(n,\phi/n)$ have diameter $\Theta(\log_\phi n)$ for $\phi=\omega(1)$~\cite{ChungL01}.

\begin{lemma}\label{lm:randomgraph}
    Let $\alpha< \tfrac{1}{1+\phi/(1-e^{-1})}\approx \tfrac{1}{1+1.58\phi}$. Let $a$ be such that $\alpha = \tfrac{1}{1+a\phi}$ and $\phi\geq \Theta\left(\left(\tfrac{1}{a(1-e^{-1})-1}\right)^2  \log n\right)$. Erd\H{o}s-Rényi random graphs $G(n,p)$ with $p\geq \tfrac{3\phi}{n}$ are $(\phi,\alpha)$-vertex expanders with high probability.
    % \tijn{get $a$ out of statement/proof?}
 \end{lemma}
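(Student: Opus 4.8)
The plan is a first-moment/union-bound argument over all small sets, in the style of the classical proof that dense random graphs are expanders, but carefully tracking where the threshold $\alpha=\tfrac{1}{1+a\phi}$ and the assumption on $\phi$ enter. First I would observe that the event ``every $S$ with $|S|\le\alpha n$ satisfies $|\d S|\ge\phi|S|$'' is monotone increasing in the edge set, since adding edges can only enlarge each neighborhood $N(S)$. Coupling $G(n,p)$ inside $G(n,p')$ for $p'\ge p$ (include every edge of the former in the latter) therefore lets me assume $p=\tfrac{3\phi}{n}$, the smallest admissible value, throughout.

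Fix a size $s$ with $1\le s\le\alpha n$ and a set $S$ with $|S|=s$. For each of the $n-s$ vertices $v\in\overline S$ the events $\{v\in N(S)\}$ are independent, as they depend on disjoint sets of potential edges, and each occurs with probability $q_s:=1-(1-p)^s\ge 1-e^{-ps}\ge(1-e^{-1})\min(ps,1)$, the last step using concavity of $1-e^{-x}$ on $[0,1]$. Hence $|\d S|$ is a binomial random variable with mean $\mu_s=(n-s)q_s$. The heart of the argument is to show $\mu_s\ge(1+\gamma)\phi s$ for a uniform $\gamma>0$, splitting on the value of $ps=3\phi s/n$. When $ps\ge1$ one has $q_s\ge 1-e^{-1}$, so using $n-s\ge(1-\alpha)n$, $s\le\alpha n$, and the identity $\tfrac{1-\alpha}{\alpha}=a\phi$ gives $\tfrac{\mu_s}{\phi s}\ge\tfrac{(1-\alpha)(1-e^{-1})}{\alpha\phi}=a(1-e^{-1})$. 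When $ps<1$ one has $q_s\ge(1-e^{-1})ps$, hence $\mu_s\ge 3(1-\alpha)(1-e^{-1})\phi s$, a margin bounded below by an absolute constant because $\alpha<\tfrac12$. Thus $\gamma:=\min\{a(1-e^{-1}),\,3(1-\alpha)(1-e^{-1})\}-1$ is positive precisely under the hypothesis $\alpha<\tfrac{1}{1+\phi/(1-e^{-1})}$, equivalently $a(1-e^{-1})>1$, and it degenerates only as $a\downarrow\tfrac{1}{1-e^{-1}}$.

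With the mean bound in hand, a Chernoff lower-tail estimate gives $\P[|\d S|<\phi s]\le\exp(-c\gamma^2\mu_s)\le\exp(-c'\gamma^2\phi s)$ for absolute constants $c,c'$. Taking a union bound over the $\binom ns\le(en/s)^s$ sets of size $s$ and summing over $s$ bounds the total failure probability by $\sum_{s=1}^{\alpha n}\exp\big(s[\ln(en/s)-c'\gamma^2\phi]\big)$. Since $\ln(en/s)=O(\log n)$ throughout the range (note $\alpha\approx\tfrac1{a\phi}$ with $\phi\le n$), the bracket is a negative constant times $\log n$ as soon as $\phi\ge\Theta(\gamma^{-2}\log n)$, which matches the assumed bound $\phi\ge\Theta\big((a(1-e^{-1})-1)^{-2}\log n\big)$ in the binding regime; the sum is then $n^{-\Omega(1)}$, proving the claim with high probability.

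The step I expect to be most delicate is the mean bound $\mu_s\ge(1+\gamma)\phi s$, and in particular arranging the two regimes so that the worst-case margin is genuinely controlled. The large-set regime $ps\ge1$ is where the constraint on $\alpha$ is tight and where $\gamma$ can be made arbitrarily small (as $a\downarrow\tfrac{1}{1-e^{-1}}$), so it is the binding case and is responsible for the quadratic dependence $\gamma^{-2}$ in the hypothesis on $\phi$. The small-set regime $ps<1$ enjoys a factor-$3$ slack coming from $p=\tfrac{3\phi}{n}$, hence an absolute-constant margin, but it still forces $\phi=\Omega(\log n)$ in order to beat the roughly $n^s$ sets of size $s$ in the union bound; one must verify that the stated lower bound on $\phi$ simultaneously supplies both the margin-driven and the counting-driven requirements.
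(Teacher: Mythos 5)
Your proposal is correct and follows essentially the same route as the paper: split on whether $ps$ is below or above $1$, lower-bound $\E[|\partial S|]$ by $(1+\gamma)\phi|S|$ in each regime (the paper gets $\tfrac{5}{4}\phi|S|$ and $a(1-e^{-1})\phi|S|$ respectively), apply a Chernoff lower tail, and union-bound over the at most $\binom{n}{s}$ sets of each size. One nitpick: justifying the small-set margin by ``$\alpha<\tfrac12$'' is not quite enough, since $3\cdot\tfrac12\cdot(1-e^{-1})\approx 0.95<1$; you need the actual hypothesis $\alpha<\tfrac{1}{1+a\phi}$ (which forces $1-\alpha$ close to $1$ in the relevant parameter range) to make that regime's constant exceed $1$.
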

\begin{proof}
    Let $S\subseteq V$ with $|S|\leq \alpha n=\tfrac{n}{1+a\phi}$. 
    Consider $u\in V\setminus S$. 
    We have that
    \begin{equation*}
        \P[u\in N(S): u\in \overline S] = 1-(1-p)^{|S|}.
    \end{equation*}
    We consider two different cases.
    
    If $|S|\leq \tfrac{1}{p}$, then we know $1-(1-p)^{|S|}= p|S|-\tfrac{1}{2}|S|(|S|-1)p^2+ \dots\geq \tfrac{1}{2}p|S|$, and 
    \begin{align*}
        \E[|\d S|]&= |\overline{S}| \P[u\in N(S): u\in \overline S]\\
        &\geq (n-|S|) \tfrac{1}{2}p|S|
        \geq (n-\tfrac{1}{p}) \tfrac{1}{2}p|S|
        = (n- \tfrac{n}{3\phi})\tfrac{1}{2}\tfrac{3\phi}{n}|S|
        =(\tfrac{3}{2}-\tfrac{1}{2\phi})\phi |S|
        \geq \tfrac{5}{4}\phi |S|.
    \end{align*}
    Now we apply a Chernoff bound to get that $\P[|\d S|< \phi |S|]\leq e^{-(1-4/5)^2\phi |S|/2}= e^{-\frac{1}{50}\phi |S|}$ in this case. 

    If $|S|\geq \tfrac{1}{p}$, we use a different approximation. We note that $1-(1-p)^{|S|} \geq 1-e^{-p|S|} $, so 
    \begin{align*}
        \E[|\d S|]&= |\overline{S}| \P[u\in N(S): u\in \overline S]
        \geq (n-|S|) (1-e^{-p|S|})\\
        &\geq (n-\tfrac{n}{1+a\phi})(1-e^{-p|S|})
        \geq a(1-e^{-p|S|})\phi \tfrac{n}{1+a\phi}
        % &\geq 2(1-e^{-p|S|}) \phi |S|\\
        \geq a(1-e^{-1}) \phi |S|.
    \end{align*}

    Now we apply a Chernoff bound to get that 
    \[
        \P[|\d S|< \phi |S|]\leq \exp\left(-\left(1-\frac{1}{a(1-e^{-1})}\right)^2a(1-e^{-1})\phi |S|/2  \right).
    \]

    Using that 
    \[
        \phi > \max\left\{
            50,\frac{2}{\big(1-\tfrac{1}{a(1-e^{-1})}\big)^2a(1-e^{-1})} 
        \right\}(c+2)\log n,
    \]
    we see that in both cases we have 
    \begin{equation*}
        \P[|\d S|< \phi |S|] \leq e^{-(c+2)\log n|S|}.
    \end{equation*}
    
    Now by summing over all choices of $S$ we get
    \begin{align*}
         \sum_{|S|=1}^{n-1}  \binom{n}{|S|} e^{-(c+2)\log n|S|} &\leq \sum_{|S|=1}^{n-1} e^{\log n|S|-(c+2)\log n|S|}
        \leq n e^{-(c+1)\log n}
        =n^{-c},
    \end{align*}
    which completes the proof. 
\end{proof}

%\subsection*{Acknowledgements}

 \newpage
% \printbibliography[heading=bibintoc] % Make bibliography show up in table of contents
\printbibliography[heading=bibintoc]

\appendix
\newpage
\section{Useful Inequalities}
\subsection{Mathematical Approximations}\label{sc:inequalitites}

In the paper we often use the following inequalities:
\begin{itemize}
    \item $1-x \le e^{-x} \le 1-\tfrac{x}{2}$, where the first inequality holds for every $ x\in \mathbb R$ while the second for $x\in[0,1.59]$; the inequalities follow from Taylor series expansion of $e^{-x}$.
    \item $\left(1-\tfrac{1}{x}\right) e^{-\tfrac{k}{x}}\le \left(1-\tfrac{1}{x}\right)^k \le e^{-\tfrac{k}{x}}$, where the inequalities hold for $ x>1$ and for any $ k\in \mathbb R$; the inequalities follow from binomial expansion.
\end{itemize}

\subsection{Concentration Bounds}
\label{sc:tail_bounds}
Here we present some well-known definitions and concentration inequalities that we use throughout the paper and whose proofs can be found in, e.g.,~\cite{dubhashi2009concentration}.

We start with the definition of negatively associated random variables.
\begin{definition}[Negatively Associated Random Variables]\label{def:neg ass rvs}
The random variables $X_1,\dots,X_n$ are said to be negatively associated if for all disjoint subsets $I,J \subseteq \{1,\dots,n\}$ and all non-decreasing functions $f$ and $g$ it holds that
\[
    \E[f(X_i, i\in I) \cdot g(X_j, j\in J)] 
    \le \E[f(X_i, i\in I)] \cdot \E[g(X_j, j\in J)].
\]
\end{definition}

In particular, in many proofs we will face situations that boil down to the following simple fact.
\begin{claim}\label{claim:neg_as}
Let $X,Y$ be $0/1$ random variables. 
If $\P[X = 1 \mid Y=1] \le \P[X = 1]$ then $X,Y$ are negatively associated.
\end{claim}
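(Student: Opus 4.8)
The plan is to unwind the definition of negative association in the special case of just two variables. With only $X$ and $Y$ in play, the only disjoint pair of index sets giving a nontrivial inequality is the singleton split $I=\{X\}$, $J=\{Y\}$, so it suffices to prove that $\E[f(X)g(Y)] \le \E[f(X)]\E[g(Y)]$ for every pair of non-decreasing functions $f,g$; the reverse assignment of $f,g$ to the two variables will follow from the symmetry noted at the end.

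First I would observe that a non-decreasing function on the two-point set $\{0,1\}$ is affine with non-negative slope: writing $a := f(1)-f(0) \ge 0$ and $b := g(1)-g(0) \ge 0$, we have the pointwise identities $f(X) = f(0) + aX$ and $g(Y) = g(0) + bY$. Substituting these into both sides and expanding, every term not involving the product $XY$ cancels, leaving
\[
    \E[f(X)g(Y)] - \E[f(X)]\E[g(Y)] = ab\bigl(\E[XY] - \E[X]\E[Y]\bigr).
\]
Since $a,b \ge 0$, the claim reduces to showing that the covariance $\E[XY]-\E[X]\E[Y]$ is non-positive.

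The last step is to read off non-positive covariance from the hypothesis. Because $X$ and $Y$ are indicator variables, $\E[XY] = \P[X=1,\,Y=1] = \P[X=1 \mid Y=1]\,\P[Y=1]$, and by assumption $\P[X=1 \mid Y=1] \le \P[X=1]$, whence $\E[XY] \le \P[X=1]\,\P[Y=1] = \E[X]\E[Y]$. Note that this covariance inequality is symmetric in $X$ and $Y$, so the superficially one-sided hypothesis in fact covers both orderings of the index sets, closing the gap left in the first paragraph.

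I do not expect any genuine obstacle here; the only point requiring a little care is recognizing that a monotone function on $\{0,1\}$ is affine with non-negative slope, since it is exactly this reduction that collapses the entire defining family of inequalities to a single covariance computation.
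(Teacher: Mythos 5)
Your proposal is correct and follows essentially the same route as the paper: both arguments boil down to showing $\E[XY]\le\E[X]\E[Y]$ by writing $\E[XY]=\P[X=1\mid Y=1]\,\P[Y=1]$ and invoking the hypothesis. In fact your write-up is slightly more complete, since you explicitly justify (via the observation that a monotone function on $\{0,1\}$ is affine with non-negative slope) why the single covariance inequality implies the full defining family of inequalities for negative association, a step the paper's proof leaves implicit with ``by definition.''
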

\begin{proof}
By using the law of total expectation it follows that
\begin{align*}
    \E[XY] &= \E[XY \mid Y=1] \, \P(Y=1) + \E[XY \mid Y=0] \, \P[Y=0]
    \\
    &= (1 \cdot \P[X=1 \mid Y=1] + 0 \cdot \P[X=0 \mid Y=1] ) \, \P[Y=1]
    \\
    &\le \P[X=1] \, \P[Y=1]
    = \E[X] \, \E[Y],
\end{align*}
which proves the statement by definition of negatively associated random variables.
\end{proof}

Next, we present some standard Chernoff's bounds. 
Note that they are extended also to the case of negatively associated random variables.
\begin{theorem}[Chernoff Bounds \cite{dubhashi2009concentration}]\label{thm:chernoff bound}
Let $X_1,\dots,X_n$ be independent $0/1$ random variables. 
Let $X := \sum_{i=1}^{n} X_i$ and let $\mu_L \le \E[X] \le \mu_H$.
It holds that
\begin{align*}
    &\P[ X < (1-\delta) \mu_L ] \le e^{-\frac{\delta^2}{2}\mu_L}, 
    \quad\forall \delta \in (0,1);
    \\
    &\P[ X > (1+\delta) \mu_H ] \le e^{-\frac{\delta^2}{2+\delta}\mu_H}, 
    \quad\forall \delta>0.
\end{align*}

\noindent
The bounds also hold if $X_1,\ldots,X_n$ are negatively associated.
\end{theorem}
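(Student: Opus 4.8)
The plan is to use the standard \emph{exponential moment method} (Chernoff's technique), treating the two tails separately. For each tail I would apply Markov's inequality to a suitable exponential transform of $X$, factorize the moment generating function over the summands, bound each factor by a convenient exponential, and then optimize over the free parameter.

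For the upper tail, fix $t>0$; first I would write $\P[X > (1+\delta)\mu_H] = \P[e^{tX} > e^{t(1+\delta)\mu_H}] \le \E[e^{tX}]\,e^{-t(1+\delta)\mu_H}$ by Markov's inequality. By independence $\E[e^{tX}] = \prod_{i}\E[e^{tX_i}]$, and since each $X_i\in\{0,1\}$ with $p_i:=\P[X_i=1]$ one has $\E[e^{tX_i}] = 1+p_i(e^t-1) \le \exp(p_i(e^t-1))$ using $1+x\le e^x$, so $\E[e^{tX}] \le \exp((e^t-1)\E[X])$. Because $e^t-1>0$ and $\E[X]\le\mu_H$, this is at most $\exp((e^t-1)\mu_H)$. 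Choosing $t=\ln(1+\delta)$ yields the classical bound $\bigl(e^\delta/(1+\delta)^{1+\delta}\bigr)^{\mu_H}$, which I would then reduce to $e^{-\delta^2\mu_H/(2+\delta)}$ via the elementary inequality $(1+\delta)\ln(1+\delta)-\delta \ge \delta^2/(2+\delta)$.

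The lower tail is symmetric: I would apply Markov to $e^{-tX}$ with $t>0$, giving $\P[X<(1-\delta)\mu_L] \le \E[e^{-tX}]\,e^{t(1-\delta)\mu_L}$. The per-variable estimate $\E[e^{-tX_i}] = 1+p_i(e^{-t}-1) \le \exp(p_i(e^{-t}-1))$ gives $\E[e^{-tX}] \le \exp((e^{-t}-1)\E[X])$; now $e^{-t}-1<0$, so together with $\E[X]\ge\mu_L$ this is at most $\exp((e^{-t}-1)\mu_L)$. Optimizing at $t=-\ln(1-\delta)$ produces $\bigl(e^{-\delta}/(1-\delta)^{1-\delta}\bigr)^{\mu_L}$, which simplifies to $e^{-\delta^2\mu_L/2}$ using $(1-\delta)\ln(1-\delta)+\delta \ge \delta^2/2$ for $\delta\in(0,1)$.

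Finally, for the negative-association extension I would observe that independence enters the argument only through the factorization of the moment generating function. In the upper tail the maps $x\mapsto e^{tx}$ are non-decreasing, so iterating the defining inequality of \Cref{def:neg ass rvs} over singleton index sets yields $\E[\prod_i e^{tX_i}] \le \prod_i \E[e^{tX_i}]$, which is exactly what is needed, and nothing else in the derivation changes. The main obstacle is the lower tail, where the relevant maps $x\mapsto e^{-tx}$ are non-increasing rather than non-decreasing; here I would note that for negatively associated variables the covariance between coordinatewise monotone maps in the \emph{same} direction is still non-positive (replacing each $e^{-tX_i}$ by $-e^{-tX_i}$ makes it non-decreasing while leaving the covariance unchanged by bilinearity), so the product upper bound persists and the tail estimate goes through verbatim. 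Apart from this subtlety, the only remaining technical work is verifying the two elementary logarithmic inequalities invoked in the simplification steps.
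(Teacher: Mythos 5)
Your proposal is correct, and it follows exactly the standard exponential-moment (Bernstein/Chernoff) argument of the textbook source that the paper cites for this theorem without reproducing a proof: Markov's inequality applied to $e^{\pm tX}$, the bound $1+p_i(e^{\pm t}-1)\le \exp(p_i(e^{\pm t}-1))$, optimization at $t=\pm\ln(1\pm\delta)$, and the two elementary logarithmic inequalities. Your treatment of the negatively associated case is also the standard one, and you correctly identify and resolve the only delicate point, namely that the lower tail uses non-increasing maps $x\mapsto e^{-tx}$, for which the product upper bound on the moment generating function still follows from the definition by negating both factors and inducting over singleton index sets.
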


Next, we state a result following from McDiarmid's inequality~\cite{mcdiarmid1998concentration}.
\begin{theorem}[Bounded Differences Inequality \cite{mcdiarmid1998concentration}]\label{thm:bounded differences}
Let $R_1, \dots R_n$ be independent $0/1$ random variables with $\P[R_i=1] \leq p \leq 1/2$. Let $f$ be a bounded real function defined on $\{0,1\}^n$. 
Define $\mu := \E[f(R_1, \dots, R_n)]$, and $b:= \max|f(x)-f(x')|$, where the maximum is over all $x,x'\in \{0,1\}^n$ that differ only in one position. Then for any $\lambda>0$
\[
    \P[f(R_1, \dots, R_n)\leq \mu- \lambda] \leq \exp\left(-\frac{\lambda^2}{2pnb^2+2b\lambda/3}\right).
\]
\end{theorem}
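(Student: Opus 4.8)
The plan is to prove this sharpened, Bernstein-type bounded differences inequality via the \emph{Doob exposure martingale} together with a martingale Bernstein bound; the quantitative gain over plain Azuma--Hoeffding is that the small bias $p$ of the Bernoulli inputs forces the per-step conditional \emph{variance} to be small, which is exactly what replaces the factor $nb^2$ by $npb^2$ in the denominator.

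First I would set up the martingale. Write $\mathcal F_i := \sigma(R_1,\dots,R_i)$, let $Z_i := \E[f(R_1,\dots,R_n)\mid \mathcal F_i]$, so that $Z_0=\mu$ and $Z_n = f(R_1,\dots,R_n)$, and put $D_i := Z_i - Z_{i-1}$. Using independence of the $R_i$, I would express $D_i$ through the single-coordinate increment
\[
    \Delta_i := \E[f\mid \mathcal F_{i-1}, R_i=1] - \E[f\mid \mathcal F_{i-1}, R_i=0],
\]
which is $\mathcal F_{i-1}$-measurable. A short computation with $p_i := \P[R_i=1]\le p$ gives $D_i = (1-p_i)\Delta_i$ on $\{R_i=1\}$ and $D_i = -p_i\Delta_i$ on $\{R_i=0\}$, so $\E[D_i\mid \mathcal F_{i-1}]=0$, confirming the martingale property. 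The bounded-difference hypothesis, which is preserved under averaging over $R_{i+1},\dots,R_n$, yields $|\Delta_i|\le b$ and hence $|D_i|\le b$; the same computation gives the conditional variance
\[
    \E[D_i^2\mid \mathcal F_{i-1}] = p_i(1-p_i)\Delta_i^2 \le p\,b^2,
\]
using $p_i(1-p_i)\le p_i\le p$.

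Second, I would convert these two bounds into the tail estimate. The clean route is to invoke Freedman's martingale Bernstein inequality: for a martingale difference sequence with $|D_i|\le b$ and total conditional variance $\sum_{i=1}^n \E[D_i^2\mid \mathcal F_{i-1}]\le \sigma^2$ almost surely, one has $\P[Z_n-Z_0\le -\lambda]\le \exp\!\big(-\lambda^2/(2(\sigma^2 + b\lambda/3))\big)$. Substituting $\sigma^2 = npb^2$ gives exactly the claimed bound $\exp(-\lambda^2/(2npb^2 + 2b\lambda/3))$. For a self-contained argument I would instead bound the conditional moment generating function directly: the standard Bernstein lemma for a zero-mean variable with $|D_i|\le b$ and variance at most $pb^2$ gives $\E[e^{-sD_i}\mid \mathcal F_{i-1}]\le \exp\!\big(\tfrac{s^2 pb^2}{2(1-sb/3)}\big)$ for $0<s<3/b$; telescoping via the tower property yields $\E[e^{-s(Z_n-Z_0)}]\le \exp\!\big(\tfrac{n s^2 pb^2}{2(1-sb/3)}\big)$, and a Chernoff step $\P[Z_n-Z_0\le -\lambda]\le e^{-s\lambda}\,\E[e^{-s(Z_n-Z_0)}]$ optimized at $s=\lambda/(npb^2 + b\lambda/3)$ reproduces the stated exponent.

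The main obstacle is the variance bound $\E[D_i^2\mid \mathcal F_{i-1}]\le pb^2$: recognizing that it is the \emph{bias} $p$ of the Bernoulli inputs, and not merely the boundedness $|D_i|\le b$, that drives the concentration is the entire content of this sharpened inequality, since plain Azuma--Hoeffding would only give the weaker exponent $-2\lambda^2/(nb^2)$. The remaining technical care lies in the Bernstein MGF lemma and in tracking the constraint $sb<3$ so that the optimization lands on the stated constant $2/3$ in the denominator.
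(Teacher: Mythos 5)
The paper does not prove this statement: it is imported verbatim as a known tool, with a citation to McDiarmid's survey, so there is no in-paper proof to compare against. Your argument is correct and is essentially the standard derivation of this Bernstein-type refinement of the bounded differences inequality (Doob exposure martingale, the identities $D_i=(1-p_i)\Delta_i$ on $\{R_i=1\}$ and $D_i=-p_i\Delta_i$ on $\{R_i=0\}$, the conditional variance bound $p_i(1-p_i)\Delta_i^2\le pb^2$, and a Freedman/Bernstein MGF step with parameter $b/3$), which is also how the cited source establishes it; the optimization at $s=\lambda/(npb^2+b\lambda/3)$ indeed reproduces the stated exponent exactly.
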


Finally, we state a result specifically for the tail of the binomial distribution~\cite{arratia1989tutorial}. 
\begin{theorem}[\cite{arratia1989tutorial}]\label{thm:binom_tail}
    Let $B(n,p)$ denote a random variable following the binomial distribution with $n$ trials and success probability $p$. Then, for $p<a<1$, we have that 
    \begin{equation*}
        \P[B(n,p) \geq an] \leq \exp\left(- n\left(a\log\left(\frac{a}{p}\right)+(1-a)\log\left(\frac{1-a}{1-p}\right)\right)\right).
    \end{equation*}
\end{theorem}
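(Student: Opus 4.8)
The plan is to prove this via the standard exponential moment (Chernoff) method, optimized so that the exponent comes out in its sharp Kullback--Leibler form. Write $B(n,p) = \sum_{i=1}^n X_i$ as a sum of independent Bernoulli$(p)$ variables $X_i$. For any $t > 0$, Markov's inequality applied to $e^{tX}$ gives $\P[B(n,p) \geq an] = \P[e^{tX} \geq e^{tan}] \leq e^{-tan}\,\E[e^{tX}]$. By independence the moment generating function factorizes as $\E[e^{tX}] = (1-p+pe^t)^n$, so that $\P[B(n,p) \geq an] \leq \exp\big(n(\log(1-p+pe^t) - ta)\big)$ for every $t > 0$. The entire task is then to choose $t$ optimally.

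First I would optimize the exponent $g(t) := \log(1-p+pe^t) - ta$ over $t > 0$. Setting $g'(t) = \frac{pe^t}{1-p+pe^t} - a = 0$ yields the critical point $e^{t^\ast} = \frac{a(1-p)}{p(1-a)}$. The hypothesis $p < a < 1$ guarantees $\tfrac{a}{p} > 1$ and $\tfrac{1-p}{1-a} > 1$, so $e^{t^\ast} > 1$, i.e.\ $t^\ast > 0$ lies in the admissible range. This is precisely where the assumption $a > p$ enters: it ensures the optimizing $t$ is positive, so the Markov step with $t>0$ is valid and tight.

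Next I would substitute $t^\ast$ back into $g$. A short computation gives $1 - p + pe^{t^\ast} = \tfrac{1-p}{1-a}$, hence $\log(1-p+pe^{t^\ast}) = \log\tfrac{1-p}{1-a}$, while $a\,t^\ast = a\log\tfrac{a}{p} + a\log\tfrac{1-p}{1-a}$. Subtracting and regrouping the two $\log\tfrac{1-p}{1-a}$ contributions collapses $g(t^\ast)$ to $-\big(a\log\tfrac{a}{p} + (1-a)\log\tfrac{1-a}{1-p}\big)$, which is exactly the negative of the relative-entropy expression in the statement. Plugging $g(t^\ast)$ into the bound $\P[B(n,p)\geq an]\leq \exp(n\,g(t^\ast))$ yields the claim.

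There is no serious obstacle here; this is a textbook Chernoff bound in its divergence form, and \cite{arratia1989tutorial} is cited precisely for it. The only bookkeeping point worth noting is that $an$ need not be an integer: the event $\{B(n,p) \geq an\}$ and the Markov inequality are valid for any real threshold, so no rounding argument is needed, and the derivation goes through verbatim.
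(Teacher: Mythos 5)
Your derivation is correct: the exponential-moment bound, the optimizer $e^{t^\ast} = \frac{a(1-p)}{p(1-a)}$ (positive precisely because $p<a<1$), and the back-substitution collapsing the exponent to $-n\bigl(a\log\frac{a}{p}+(1-a)\log\frac{1-a}{1-p}\bigr)$ all check out, and you are right that no integrality of $an$ is needed. Note, however, that the paper does not prove this statement at all --- it imports it as a known result from \cite{arratia1989tutorial} --- so there is no in-paper argument to compare against; your proof is simply the standard Chernoff argument in its relative-entropy form, which is exactly what the cited reference establishes.
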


\end{document}